\documentclass[journal]{IEEEtran}
\usepackage{ifpdf}

\usepackage{cite}

\ifCLASSINFOpdf
\usepackage[pdftex]{graphicx}
\else
\fi
\usepackage{amsmath}
\usepackage{amssymb}
\usepackage{amsthm}
\usepackage{algorithmic}

\usepackage{array}
\usepackage{multirow}
\usepackage{tabularx}
\usepackage{booktabs}

\ifCLASSOPTIONcompsoc
  \usepackage[caption=false,font=normalsize,labelfont=sf,textfont=sf]{subfig}
\else
  \usepackage[caption=false,font=footnotesize]{subfig}
\fi
\usepackage{fixltx2e}
\usepackage{dblfloatfix}
\usepackage{enumitem}

\ifCLASSOPTIONcaptionsoff
 \usepackage[nomarkers]{endfloat}
\let\MYoriglatexcaption\caption
\renewcommand{\caption}[2][\relax]{\MYoriglatexcaption[#2]{#2}}
\fi
\usepackage{url}

\begin{document}
%
\title{Variable-Step Time-Delay Control for Proactive Aperiodic Spacecraft Attitude Control}
%
%
%

\author{Yonsoo~Kim and Hancheol~Cho%
\thanks{Y. Kim is with the Department of Astronomy, Yonsei University, Seoul, South Korea (e-mail: yonsooh@yonsei.ac.kr).}%
\thanks{H. Cho is with the Department of Aeronautics and Astronautics and the Department of Satellite Systems, Yonsei University, Seoul, South Korea (e-mail: hancho37@yonsei.ac.kr).}%
\thanks{This manuscript is a revised preprint prepared for submission to the \textit{IEEE Transactions on Aerospace and Electronic Systems}.}}
\maketitle

\begin{abstract}
This paper addresses proactive aperiodic spacecraft attitude control under model uncertainty, environmental disturbances, and actuator degradation. To this end, we develop Variable-step Time Delay Control (VTDC), which jointly designs robust feedback control and control-update scheduling rather than treating them as separate components. Inspired by adaptive stepsize regulation in Runge–Kutta integration, VTDC structures the local control error to scale with the realized timestep. Using sliding-manifold-based Time Delay Control, the resulting System Time Delay Error (TDE), which reflects local model and uncertainty variations, is shown to be quadratically bounded by the control interval. This relation yields a closed-form feedback law that enlarges or reduces the subsequent interval to regulate the TDE magnitude. The next update time is therefore determined algebraically without continuous trigger monitoring, future-state prediction, or iterative search. The resulting variable-step closed loop admits bounded timesteps and step ratios, excludes Zeno behavior, and renders the sliding variable uniformly ultimately bounded. Nonlinear spacecraft attitude-control simulations demonstrate accurate tracking with low scheduling cost under representative uncertainties and disturbances.

\end{abstract}

\begin{IEEEkeywords}
Aperiodic control, Self-triggered control, Spacecraft attitude control, Adaptive sampling, Time-delay control, Event-triggered control, Control-update scheduling.
\end{IEEEkeywords}

%

\section{Introduction}
%
%
%
%
\IEEEPARstart{S}{pacecraft} dynamics evolve continuously in time, whereas onboard control is necessarily implemented at discrete update instants due to limitations in onboard computation, sensor update rates, communication, and actuator capabilities. Since the control input is typically maintained by a zero-order hold between sampling instants, the choice of the control interval directly affects closed-loop stability and tracking performance. If the control interval is too long, the control input may fail to sufficiently follow the system evolution, resulting in performance degradation or even instability. Conversely, if the control interval is too short, practical implementation issues such as increased computational burden, sensor noise, actuator bandwidth and saturation, and finite word length effects may become significant. Therefore, determining an appropriate control interval is a key issue in the design of digital control systems \cite{franklin1998digital}, \cite{wittenmark2002computer}.

In conventional sampled-data control studies, the inter-sample behavior induced by the zero-order hold has been interpreted as a time-varying input delay, and the maximum allowable sampling interval that guarantees closed-loop stability has been estimated using Lyapunov–Krasovskii functionals and linear matrix inequality (LMI)-based conditions \cite{fridman2014introduction}. However, such methods generally provide a uniform maximum allowable sampling interval that is commonly applied to all sampling sequences, making it difficult to directly reflect the instantaneous admissible control interval, which may vary depending on the system state, disturbance level, operating condition, and model uncertainty. Moreover, LMI-based stability conditions are often sufficient conditions and may therefore yield conservative values compared with the actual control interval that the system can tolerate\cite{fridman2005input}, \cite{seuret2012novel}.

To alleviate these limitations, aperiodic control methods have been studied as alternatives to time-triggered control, in which the update times of the control input are determined according to the system state or performance requirements. Representative approaches include Event-Triggered Control (ETC) and Self-Triggered Control (STC) \cite{heemels2012introduction}. Both methods can be understood as combining a general feedback control law with a triggering mechanism that determines when the control input should be updated. ETC is a reactive approach that updates the control input when a triggering condition is violated based on measurements, whereas STC is a proactive approach that computes the next control time in advance using the current state and system model \cite{heemels2012introduction}. 

Satellites operate under multiple constraints, including limited onboard computational resources, power consumption of sensors and actuators, communication bandwidth constraints with ground stations or between satellites \cite{fortescue2011spacecraft}. Under periodic control, an update rate selected to accommodate demanding operating conditions is maintained even when the system evolves slowly, which may lead to unnecessary sensing, computation, and control updates. Aperiodic control can alleviate this inefficiency by adjusting the update instants according to the instantaneous control demand. For this reason, ETC-based control methods have recently been investigated in various spacecraft problems, including attitude control, attitude tracking, attitude synchronization of multiple satellites, and orbit maintenance \cite{di2021event}-\cite{xie2024dynamic}.  However, although ETC can reduce unnecessary control updates, it requires continuous monitoring of the system state or output to determine whether the triggering condition is satisfied. 

In contrast, STC determines the next sensing and control time in advance, thereby reducing sensing, computation, and communication burdens, and can thus be a more suitable alternative for resource-constrained satellite control systems \cite{heemels2012introduction}. Nevertheless, important limitations remain when applying STC to nonlinear spacecraft attitude control problems. Since STC must predict the evolution of future states or triggering errors, prediction errors directly affect stability guarantees in the presence of model uncertainties and disturbances. To compensate for this issue, existing studies often rely on disturbance bounds, uncertainty sets, reachable sets, or conservative calculations of the inter-execution time \cite{de2020self}. In addition, for general nonlinear systems, the next triggering time is difficult to compute in closed form, and numerical integration, iterative search, or conservative bound calculations may be required. Although several attempts have been made to address these issues, some approaches rely on the homogeneity of nonlinear systems, which can become problematic in practical systems that may become nonhomogeneous due to disturbances and other effects \cite{anta2010sample}, \cite{delimpaltadakis2021region}. These characteristics limit their applicability to realistic spacecraft attitude control problems, where actuator saturation, environmental disturbances, model uncertainties, and variations in operating conditions coexist.

To address these issues, this study proposes a proactive aperiodic robust control framework that combines Sliding Mode Control (SMC) and Time Delay Control (TDC). SMC is a representative nonlinear control method that structures tracking error dynamics through a sliding manifold and provides robustness against disturbances and model uncertainties. However, conventional SMC suffers from chattering caused by discontinuous switching inputs, which can induce actuator wear and excite unmodeled high-frequency dynamics \cite{utkin2013sliding}. To mitigate these limitations, Cho et al. proposed a smooth adaptive robust control structure for satellite formation flying, showing that chattering can be eliminated and trajectory errors can be maintained within a user-specified bound without prior knowledge of the upper bounds of disturbances and uncertainties \cite{cho2020autonomous}.

Building on this smooth robust control perspective, the current study combines the sliding-manifold-based error dynamics of SMC with the data-driven uncertainty compensation of TDC. TDC estimates and compensates for lumped uncertainty using input and state information from the previous time instant, thereby maintaining robustness even in control environments with significant disturbances and model uncertainties \cite{youcef1990time}. In particular, since the Time-Delay Error (TDE) generated in TDC directly depends on the control interval, this study utilizes TDE not merely as an estimation error but as a key indicator for determining the next sensing and control time. Accordingly, this study designs Variable-step Time Delay Control (VTDC), which actively adjusts the control interval according to the current state and the estimated TDE level without directly predicting future state trajectories or continuously monitoring triggering conditions.

The proposed VTDC  is therefore applied to proactive aperiodic spacecraft attitude control, where the required control effort and update rate can vary with maneuvering conditions, disturbances, and system uncertainties. VTDC adapts the control interval according to the local System TDE while maintaining robust tracking under model uncertainties, disturbances, and actuator degradation. By linking attitude-control performance and update scheduling through the TDE, the framework allocates control updates according to the local control demand while avoiding continuous triggering and future-state prediction.
The main contributions of this study are summarized as follows:

\begin{itemize} [label=$-$]
    \item (TDE-Based Control--Scheduling Co-Design) The System TDE is reinterpreted as an interval-dependent indicator of local model and disturbance variations and is directly incorporated into control scheduling. By establishing its quadratic dependence on the realized timestep, a closed-form feedback law is derived to proactively adapt the subsequent control interval, thereby integrating robust TDC and aperiodic scheduling within a unified framework.

    \item (Guaranteed Variable-Step Robustness and Stability) The proposed timestep law is analytically shown to attenuate the influence of the initial timestep, maintain the timestep and consecutive step ratio within finite asymptotic bounds, and exclude Zeno behavior. These scheduling properties are further incorporated into the variable-step closed-loop analysis to establish uniform ultimate boundedness of the sliding variable and its steady-state convergence radius.
    
    \item (Prediction-Free Lightweight Scheduling) A practical System-TDE surrogate is constructed from the equivalent and robust control-input variations, enabling online timestep adaptation without direct disturbance measurement, a separate disturbance observer, future-state prediction, or iterative trigger-time search. The online scheduling law therefore requires only locally available control and state information rather than explicit uncertainty-bound information.

    \item (Resource-Efficient Spacecraft Validation) The proposed framework is validated for nonlinear spacecraft attitude maneuvering under inertia uncertainty, environmental disturbances, and time-varying actuator degradation. Comparative simulations with fixed-step TDC and Lyapunov-based self-triggered control demonstrate that VTDC maintains competitive tracking performance while reducing control-update and timestep-scheduling burdens, supporting its applicability to resource-constrained onboard spacecraft control.
\end{itemize}

\theoremstyle{definition}
\newtheorem{remark}{Remark}
\newtheorem{corollary}{Corollary}


\section{Variable-step Time Delay Control Framework}\label{sec:framework}
This section presents the formulation and theoretical development of the proposed VTDC framework. 
Fig.~\ref{fig:VTDCFramework} summarizes the logical structure of the analysis, from robust controller design and TDE-based error characterization to timestep adaptation, closed-loop stability, and practical implementation.

\begin{figure*}[!t]
    \centering
    \includegraphics[width=\textwidth]{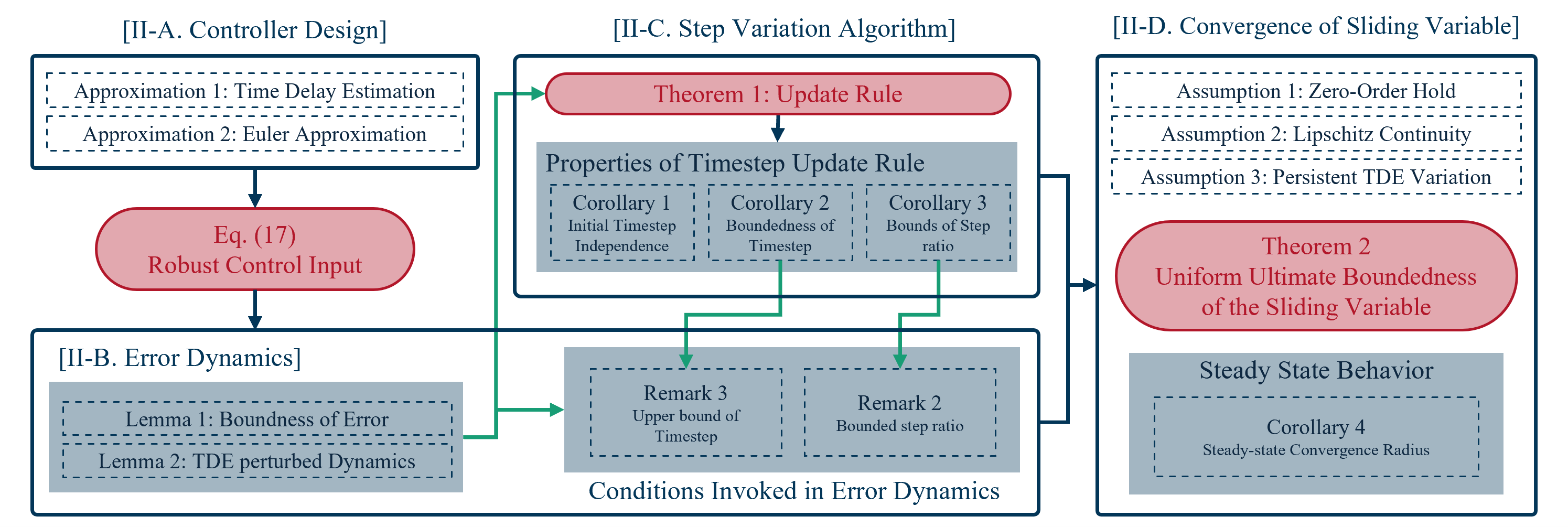}
    \caption{Overall structure and logical dependencies of the proposed VTDC framework, from robust controller design and TDE-based error analysis to timestep adaptation, closed-loop stability, and practical implementation.}
    \label{fig:VTDCFramework}
\end{figure*}

\subsection{Controller Design}
The primary objective of the controller is to ensure that the tracking error $\boldsymbol{e} \in \mathbb{R}^N$ for states such as attitude, position, angular velocity, and velocity converges to zero ($\boldsymbol{e} \to \boldsymbol{0}$) as $t \to \infty$. To describe the error behavior over time, we define the state vector as follows:

\begin{equation}
    \boldsymbol{x} = \begin{bmatrix}\boldsymbol{e}\\\dot{\boldsymbol{e}}\end{bmatrix} \in \mathbb{R}^{2N} \label{eq:StateVector}
\end{equation}
where the dot notation denotes the time derivative of a variable. The governing dynamics of this error state can be expressed as follows:
\begin{equation}
\dot{\boldsymbol x}(t)=\boldsymbol f(\boldsymbol x,t)+\boldsymbol g[\boldsymbol u(t)+\boldsymbol d(t)] \label{eq:Dynamics}
\end{equation}
In this context, $\boldsymbol f(\boldsymbol x,t)\in\mathbb{R}^{2N}$ represents the known nominal nonlinear dynamics of the system, while $\boldsymbol u(t) \in \mathbb{R}^N$ denotes the control input to be designed. The constant matrix $\boldsymbol g \in \mathbb{R}^{2N\times N}$ maps the control action into the state space.

For the considered fully actuated $N$-DOF dynamic system, the generalized control input enters the acceleration-error dynamics through the positive-definite constant inertia matrix $\boldsymbol M\in\mathbb{R}^{N\times N}$, which corresponds to the nominal spacecraft moment-of-inertia matrix $\boldsymbol{J}_0$ in the attitude-control application of Section~\ref{sec:NumDef}. Consequently, the input distribution matrix is given by:
\begin{equation}
\boldsymbol{g}=\begin{bmatrix} \boldsymbol{0}_{N\times N}\\ \boldsymbol{M}^{-1} \end{bmatrix}\label{eq:MappingMat}
\end{equation}
The term $\boldsymbol d(t) \in \mathbb{R}^N$ denotes the matched lumped uncertainty in the generalized input coordinates; for spacecraft attitude control, it represents torque-equivalent effects of environmental disturbances, inertia mismatch, and actuator degradation. In sliding mode control, only uncertainties satisfying the matching condition can be compensated by the control action. That is, the uncertainty must enter the system through the same distribution channel as the control input \cite{utkin2013sliding}. According to Drazenovic, any disturbance satisfying the matching condition can be represented through the same distribution matrix $\boldsymbol{g}$ as the control input \cite{drazenovic1969}. That is, for an arbitrary disturbance $\boldsymbol w(t)$ entering the system as $\boldsymbol g\boldsymbol u(t)+\boldsymbol D\boldsymbol w(t)$, if $\mathrm{Range}(\boldsymbol D)\subseteq \mathrm{Range}(\boldsymbol g)$, then there exists an equivalent lumped uncertainty $\boldsymbol d(t)$ such that $\boldsymbol D\boldsymbol w(t)=\boldsymbol g\boldsymbol d(t)$. Hence, the matched disturbance can be written in the control channel as $\boldsymbol g[\boldsymbol u(t)+\boldsymbol d(t)]$.

In this paper, a sliding variable $\boldsymbol{s}(t) \in \mathbb{R}^N$ is introduced as a linear combination of the error states:
\begin{equation}
\boldsymbol{s}(t) = \boldsymbol{Cx}(t) \label{eq:SlidingVariable}
\end{equation}
where $\boldsymbol{C}\in\mathbb{R}^{N\times2N}$ is a constant gain matrix chosen so that the reduced-order dynamics on the sliding manifold $\boldsymbol{s}(t)=\mathbf{0}$ are stable, thereby ensuring the primary objective $\boldsymbol{e} \to \mathbf{0}$ as $t \to \infty$. Accordingly, $\boldsymbol C$ is chosen as:
\begin{equation}
    \boldsymbol C=\begin{bmatrix}\boldsymbol{B} \ &\boldsymbol I_{N\times N}\end{bmatrix}\label{eq:SVMat}
\end{equation}
where $\boldsymbol B=\mathrm{diag}(\beta_1,\ldots,\beta_N)\in\mathbb{R}^{N\times N}$  with $\beta_i>0$ is selected to make $-\boldsymbol B$  Hurwitz. For the canonical second-order state representation in \eqref{eq:StateVector}, this choice yields $\dot{\boldsymbol e}=-\boldsymbol B\boldsymbol e$ on the sliding manifold, thereby guaranteeing exponential convergence of the tracking error \cite{utkin2013sliding}.

The fundamental principle of sliding mode control consists of two distinct phases: the \textit{reaching phase}, where the controller drives the system state from any initial condition toward the sliding manifold ($\boldsymbol{s}(t) \to \mathbf{0}$), and the \textit{sliding phase}, where the state is confined to the manifold and slides toward the origin robustly against matched uncertainties. To analyze how the system state is driven toward this manifold during the reaching phase, we define the reaching dynamics by taking the time derivative of the sliding variable using \eqref{eq:Dynamics}:

\begin{equation}
\dot{\boldsymbol s}(t)=\boldsymbol C\dot{\boldsymbol x}(t)=\boldsymbol C\Big[\boldsymbol f(\boldsymbol x,t)+\boldsymbol g[\boldsymbol u(t)+\boldsymbol d(t)]\Big] \label{eq:ReachingDynamics}
\end{equation}

The control strategy aims to drive the system to the sliding manifold ($\boldsymbol{s}(t) = \mathbf{0}$) and maintain it there. Once the system reaches the manifold, invariance of the sliding motion requires $\dot{\boldsymbol{s}}(t) = \mathbf{0}$. To achieve this in the presence of uncertainties, the control input is typically derived by first considering an ideal nominal case, followed by the addition of a robust compensation term.

\subsubsection{Equivalent Control}
The equivalent control, $\boldsymbol{u}^{eq}(t)$, represents the ideal control action required to maintain the sliding condition when the lumped uncertainty is neglected \cite{utkin1977}. By assuming a nominal system where $\boldsymbol{d}(t) = \mathbf{0}$, the reaching dynamics in \eqref{eq:ReachingDynamics} can be simplified as follows:
\begin{equation}
\dot{\boldsymbol s}(t)=\boldsymbol C \dot{\boldsymbol x}(t)=\boldsymbol C[\boldsymbol f(\boldsymbol x,t)+\boldsymbol g\boldsymbol u(t)] \label{eq:ReachingDynamicswoD}
\end{equation}

The equivalent control $\boldsymbol{u}^{eq}(t)$ is defined as the control input $\boldsymbol{u}(t)$ that satisfies the condition $\dot{\boldsymbol{s}}(t) = \mathbf{0}$ in the nominal dynamics of \eqref{eq:ReachingDynamicswoD}. From this condition, $\boldsymbol{u}^{eq}(t)$ is analytically derived as:

\begin{equation}
\boldsymbol u^{eq}(t)=-(\boldsymbol{Cg})^{-1}\boldsymbol{C}\boldsymbol{f}(\boldsymbol{x},t) \label{eq:EquivalentControl}
\end{equation}

\begin{remark}
From \eqref{eq:MappingMat} and \eqref{eq:SVMat}, the product $\boldsymbol C\boldsymbol g$ is given by
\begin{equation}
\boldsymbol {Cg}\boldsymbol = \begin{bmatrix} \boldsymbol B & \boldsymbol I_{N\times N} \end{bmatrix}
\begin{bmatrix} \boldsymbol 0_{N\times N}\\ \boldsymbol M^{-1} \end{bmatrix}
= \boldsymbol M^{-1}.    
\end{equation}
Since the inertia matrix $\boldsymbol M$ of a dynamic system is symmetric positive definite, it is nonsingular. Therefore, $\boldsymbol C\boldsymbol g=\boldsymbol M^{-1}$ is also nonsingular, and $(\boldsymbol C\boldsymbol g)^{-1}$ is well-defined.
\end{remark}

\subsubsection{Time Delay Control}
\newtheorem{assumption}{Assumption}
\newtheorem{approximation}{Approximation}
\newtheorem{lemma}{Lemma}
To compensate for the lumped uncertainty $\boldsymbol{d}(t)$ that remains in the reaching dynamics \eqref{eq:ReachingDynamics}, the total control input is partitioned into the nominal part and the robust compensation part: $\boldsymbol{u}(t) = \boldsymbol{u}^{eq}(t) + \boldsymbol{u}^{d}(t)$. By substituting this control law into \eqref{eq:ReachingDynamics} and utilizing the definition of $\boldsymbol{u}^{eq}(t)$ in \eqref{eq:EquivalentControl}, the known nominal dynamics $\boldsymbol{f}(\boldsymbol{x},t)$ are effectively canceled. This yields the following reduced dynamics:

\begin{equation}
\dot {\boldsymbol s} (t) = \boldsymbol{Cg}\big[\boldsymbol{u}^{d}(t)+ \boldsymbol d(t)\big]  \label{eq:ReducedDynamics}
\end{equation}

In practice, direct computation of $\boldsymbol{u}^{d}(t)$ via \eqref{eq:ReducedDynamics} is challenging because the instantaneous values of the disturbance $\boldsymbol{d}(t)$ and the time derivative of the sliding variable $\dot{\boldsymbol{s}}(t)$ are typically unavailable. To address this, the concept of the Time Delay Control (TDC) technique is employed to estimate these terms using past data  \cite{youcef1990time}.

\begin{approximation}[\textit{Time Delay Estimation}]\label{approximation:TimeDelay}
Assuming that the lumped uncertainty $\boldsymbol{d}(t)$ varies slowly relative to a control interval $h$, the current disturbance can be estimated using input and state/output information from the previous step. Based on \eqref{eq:ReducedDynamics}, $\boldsymbol{d}(t)$ is estimated as follows:

\begin{equation}
\boldsymbol d(t) \approx \boldsymbol d(t-h)= (\boldsymbol{Cg})^{-1}\dot{\boldsymbol s}(t-h)-\boldsymbol u^d(t-h) \label{eq:TimeDelayEstimation}
\end{equation}    
\end{approximation}

Substituting the estimated disturbance from \eqref{eq:TimeDelayEstimation} into the control law \eqref{eq:ReducedDynamics} yields the current robust control input:

\begin{equation}
\boldsymbol{u}^{d}(t) = \boldsymbol{u}^{d}(t-h)+(\boldsymbol{Cg})^{-1}[\dot {\boldsymbol s} (t)-\dot {\boldsymbol s} (t-h)]\label{eq:ud_prev}
\end{equation}

\begin{approximation}[\textit{Euler Approximation}]\label{approximation:Euler}
To facilitate discrete-time implementation, the derivative terms $\dot{\boldsymbol s}(t)$ and $\dot{\boldsymbol s}(t-h)$ are approximated using the forward difference method:\cite{franklin1998digital}
\begin{equation}
\dot{\boldsymbol s}(t)\approx\frac{\boldsymbol s(t+h)-\boldsymbol s(t)}{h} 
,\;\;\dot{\boldsymbol s}(t-h)\approx\frac{\boldsymbol s(t)-\boldsymbol s(t-h)}{h} 
\label{eq:EulerApproximation}
\end{equation}

\end{approximation}
In discrete-time sliding mode control, unlike the continuous-time case, the system state cannot reside exactly on the switching manifold $\boldsymbol{s}(t) = \mathbf{0}$ due to the sampling effect; instead, it exhibits a characteristic oscillation within a small neighborhood known as the quasi-sliding mode band \cite{milosavljevic1985}. To ensure stable convergence to this band, we design the predicted sliding variable $\boldsymbol{s}(t+h)$ to follow a specific reaching law as \eqref{eq:ControlObjective}, consistent with the control objective proposed by Gao et al. \cite{gao1995discrete}. By substituting this into \eqref{eq:EulerApproximation}, the control objective at the derivative level can also be calculated as shown in \eqref{eq:ControlObjective2}.
\begin{align}
\boldsymbol s(t+h) &= \gamma\boldsymbol s(t), \qquad\qquad |\gamma|<1\label{eq:ControlObjective}\\
\dot {\boldsymbol s} (t) &= (\gamma-1)\frac{\boldsymbol s(t)}{h}\label{eq:ControlObjective2}
\end{align}
where $\gamma$ is a design parameter that determines the convergence rate. 
Finally, by substituting \eqref{eq:EulerApproximation} and \eqref{eq:ControlObjective} into \eqref{eq:ud_prev} the $\boldsymbol{u}^{d}(t)$ is formulated as follows: 
\begin{equation}
\boldsymbol{u}^{d}(t) = \boldsymbol{u}^{d}(t-h)+ (\boldsymbol{Cg})^{-1}\Big[ (\gamma-1)\frac{\boldsymbol s(t)}{h}- \frac{\boldsymbol s(t)-\boldsymbol s(t-h)}{h} \Big]\label{eq:TimeDelayControl}
\end{equation}

In summary, the total control input $\boldsymbol{u}(t)$ combines the model-based nominal action ($\boldsymbol{u}^{eq}$) and the data-driven robust compensation ($\boldsymbol{u}^{d}$):
\begin{equation}
\begin{split}
    \boldsymbol{u}(t) &= \\ &\underbrace{\boldsymbol{u}^{eq}(t)}_{-(\boldsymbol{Cg})^{-1}\boldsymbol{Cf}(\boldsymbol{x}(t),t)} + \underbrace{\boldsymbol{u}^d(t)}_{\boldsymbol{u}^d(t-h) + (\boldsymbol{Cg})^{-1}\left[(\gamma-1)\frac{\boldsymbol{s}(t)}{h} - \frac{\boldsymbol{s}(t)-\boldsymbol{s}(t-h)}{h}\right]}\label{eq:TotalControl}
\end{split}
\end{equation}

\subsection{Error Dynamics} \label{sec:ErrorDynamics}
\newtheorem{definition}{Definition}
Although the control input designed based on the aforementioned approximations demonstrates robust control performance, it inherently involves errors originating from these two approximations. These errors are particularly sensitive to the sampling interval $h$. To investigate the analytical relationship of these errors, the total control law \eqref{eq:TotalControl} is substituted into the reaching dynamics in \eqref{eq:ReachingDynamics}.

\begin{equation}
    \dot {\boldsymbol s} (t) = \Big[ (\gamma-1)\frac{\boldsymbol s(t)}{h}- \frac{\boldsymbol s(t)-\boldsymbol s(t-h)}{h} \Big]+\boldsymbol{Cg}\big[\boldsymbol{u}^{d}(t-h)+\boldsymbol d(t)\big] 
    \label{eq:ErrorDynamicsPrev1}
\end{equation}
From \eqref{eq:ReducedDynamics}, the term $\boldsymbol{Cg}\boldsymbol{u}^{d}(t-h)$ can be rearranged as follows:
\begin{equation}
    \boldsymbol{Cg}\boldsymbol{u}^{d}(t-h)=\dot {\boldsymbol s} (t-h)-\boldsymbol{Cg}\boldsymbol{d}(t-h)\label{eq:ReducedDynamics_shift}
\end{equation}
By substituting \eqref{eq:ReducedDynamics_shift} into \eqref{eq:ErrorDynamicsPrev1}, we obtain the following.

\begin{equation}
\begin{split}
    \dot {\boldsymbol s} (t) = &\overbrace{(\gamma-1)\frac{\boldsymbol s(t)}{h}}^{\text{Control Objective from \eqref{eq:ControlObjective2}}} \\&+\underbrace{\boldsymbol{Cg}\big[\boldsymbol d(t)-\boldsymbol d(t-h)\big]}_{\text{Error from Approximation \ref{approximation:TimeDelay}}}+\underbrace{\Big[\dot {\boldsymbol s} (t-h) - \frac{\boldsymbol s(t)-\boldsymbol s(t-h)}{h} \Big]}_{\text{Error from Approximation \ref{approximation:Euler}} }
\end{split}\label{eq:raw_ED}
\end{equation}

The two error terms in \eqref{eq:raw_ED}, induced by Approximation \ref{approximation:TimeDelay} and Approximation \ref{approximation:Euler}, can be interpreted as local time-delay errors over the interval $[t-h,t)$. While the first term is already expressed as a time-delay difference of the disturbance, the second term is not yet written in terms of the nominal dynamics and disturbance variations. 

Before rewriting the Euler residual using a local Taylor expansion, we specify the intersample control behavior. Consistent with digital onboard implementation, VTDC employs a zero-order hold (ZOH) \cite{franklin1998digital}, such that the control input is updated at each sampling instant and held constant until the subsequent update.

\begin{assumption}[\textit{Interval-wise Zero-Order Hold}]
\label{assumption:ZOH}
Let $\{t_n\}_{n=0}^{\infty}$ denote the sequence of sampling instants. At each $t_n$, the control input is updated and its post-update value is defined as $\boldsymbol u_n \triangleq \boldsymbol u(t_n^+)$. Under the ZOH implementation,
\begin{equation}
    \boldsymbol u(t)=\boldsymbol u_n, \qquad\forall t\in[t_n,t_{n+1}).
\end{equation}
Consequently, $\dot{\boldsymbol u}(t)$ is given by \eqref{eq:ZeroOrderHold}, whereas $\boldsymbol u(t)$ may exhibit discontinuities at the sampling instants.
\begin{equation}
    \dot{\boldsymbol u}(t)=\boldsymbol 0, \qquad \forall t\in(t_k,t_{k+1}),
    \label{eq:ZeroOrderHold}
\end{equation}
\end{assumption}

Because the control input may undergo a jump at $t_k$, its classical time derivative need not exist exactly at the sampling instant. Moreover, since $\dot{\boldsymbol s}$ depends directly on $\boldsymbol u$ through \eqref{eq:ReachingDynamics}, $\dot{\boldsymbol s}$ may also be discontinuous at $t_k$. 
Therefore, throughout the following local intersample analysis, derivatives evaluated at the left endpoint of a sampling interval are understood as right-hand derivatives. For example, $\dot{\boldsymbol s}(t_k)$ and $\ddot{\boldsymbol s}(t_k)$ represent $\dot{\boldsymbol s}(t_k^+)$ and $\ddot{\boldsymbol s}(t_k^+)$, respectively. For notational simplicity, the superscript $(\cdot)^+$ is omitted hereafter whenever this convention is unambiguous.

Under this convention, we rewrite the Euler residual by expanding $\boldsymbol{s}(t)$ about the previous control instant $t-h$ using a Taylor expansion \cite{butcher2016numerical}:
\begin{equation} 
\boldsymbol s (t) = \boldsymbol s (t-h) + h\dot{\boldsymbol s} (t-h) + \frac{h^2}{2}\ddot{\boldsymbol s} (t-h) +\mathcal{O}(h^3). \label{eq:sEuler} 
\end{equation} 
Subtracting $\boldsymbol s(t-h)$ from both sides and dividing by $h$ yields
\begin{equation} 
\dot {\boldsymbol s} (t-h) - \frac{\boldsymbol s(t)-\boldsymbol s(t-h)}{h} = -\frac{h}{2}\ddot{\boldsymbol s} (t-h)+\mathcal{O}(h^2)\label{eq:RawModel} 
\end{equation}

To characterize $\ddot{\boldsymbol s}$ within an intersample
interval, we differentiate the reaching dynamics in
\eqref{eq:ReachingDynamics}:
\begin{equation}
\begin{split}
\ddot{\boldsymbol s}(t) &= \frac{d}{dt}\Bigg(\boldsymbol{C}\Big[\boldsymbol f(\boldsymbol x,t)+\boldsymbol{g}\big[\boldsymbol u(t)+\boldsymbol d(t)\big]\Big]\Bigg)\\
&=\boldsymbol{C}\Big[\dot{\boldsymbol f}(\boldsymbol x,t)+\boldsymbol g\big[\dot{\boldsymbol u}(t)+\dot{\boldsymbol d}(t)\big]\Big]\label{eq:sddot1}
\end{split}
\end{equation}
where $\dot{\boldsymbol f}(\boldsymbol x(t),t)\triangleq\frac{d}{dt}\boldsymbol f(\boldsymbol x(t),t)$ denotes the total derivative of the nominal dynamics along the system trajectory. The dots on $\boldsymbol u$ and $\boldsymbol d$ likewise denote their time derivatives. Since $\boldsymbol C$ and $\boldsymbol g$ are constant, their derivatives vanish.

Under Assumption \ref{assumption:ZOH}, $\dot{\boldsymbol u}(t)=\boldsymbol 0$ within each intersample interval. Hence, the control-input derivative term in \eqref{eq:sddot1} vanishes. Using first-order forward-difference approximations at the beginning of the interval, we obtain:
\begin{equation}
\begin{split}
&\ddot{\boldsymbol s}(t-h) =\boldsymbol{C}\dot{\boldsymbol f}(\boldsymbol x(t-h),t-h) +\boldsymbol{C}\boldsymbol{g}\dot{\boldsymbol d}(t-h)\\
&= \boldsymbol{C}\frac{\boldsymbol f(\boldsymbol x(t),t)-\boldsymbol f(\boldsymbol x(t-h),t-h)}{h}\\&\qquad+\boldsymbol{C}\boldsymbol{g}\frac{\boldsymbol d(t)-\boldsymbol d(t-h)}{h}+\mathcal{O}(h)
\label{eq:sddot2}
\end{split}
\end{equation}
Based on this derivation, we formalize the lumped local error term. In this paper, the System Time Delay Error characterizes the combined effect of the interval-wise variations of the nominal dynamics and lumped uncertainty over a single sampling interval. It is analytically defined as the corresponding variation terms scaled by the local interval length $h$.

\begin{definition}[\textit{System Time Delay Error}]\label{Def:SystemTDE}
The system time delay error $\boldsymbol \sigma(t,h)$ is defined as the sum of the disturbance variation and the nominal model variation over the interval $h$:
\begin{equation}
\begin{split}
\boldsymbol \sigma (t, h) \triangleq \underbrace{-\frac{h}{2}\boldsymbol{C}[\boldsymbol f(\boldsymbol x(t),t)-\boldsymbol f(\boldsymbol x(t-h),t-h)]}_{\text{Model TDE : }\boldsymbol{\sigma}^m}\\+\underbrace{\frac{h}{2}\boldsymbol{C}\boldsymbol{g}[\boldsymbol d(t)-\boldsymbol d(t-h)]}_{\text{Disturbance TDE : }\boldsymbol{\sigma}^d}
\end{split}\label{eq:TimeDelayError} 
\end{equation}
\end{definition}

For the variable-step implementation, let $t_n$ denote the $n$-th sampling instant and define the corresponding sampling interval $h_n=t_n-t_{n-1}$. As illustrated in Fig.~\ref{fig:DisctreteNotation}, the System TDE $\boldsymbol{\sigma}_n$ is evaluated over the realized interval $[t_{n-1},t_n]$ and used to determine the subsequent control interval $h_{n+1}$. For any signal $\boldsymbol{z}(t)$, we use the following discrete notation:
\begin{equation}
    \boldsymbol{z}_n\triangleq\boldsymbol{z}(t_n), \quad \Delta \boldsymbol{z}_n\triangleq\boldsymbol{z}_n-\boldsymbol{z}_{n-1}
    \label{eq:DiscreteNotation}
\end{equation}
The discrete System TDE in \eqref{eq:TimeDelayError} evaluated at the $n$-th sampling instant is then denoted by
\begin{equation}
    \boldsymbol{\sigma}_n \triangleq \boldsymbol{\sigma}(t_n, h_n)\label{eq:DiscreteTDE}
\end{equation}
Since there is a limit to how much the disturbances and the system can change during the time interval $h_n$, it is assumed that the Lipschitz continuity holds.
\begin{figure}
    \centering
    \includegraphics[width=1\linewidth]{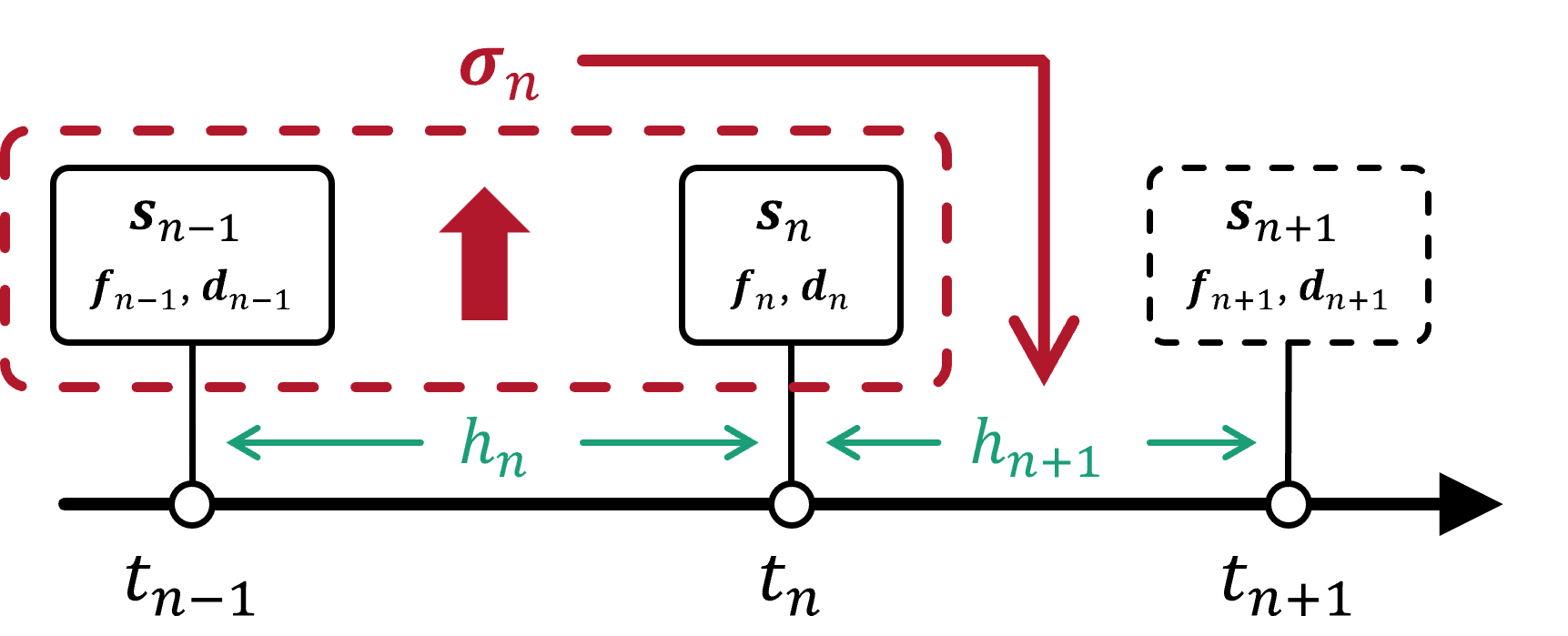}
    \caption{ Timing diagram of the variable-step implementation. 
The System TDE $\sigma_n$, evaluated from the variations between $t_{n-1}$ and $t_n$, is used to determine the subsequent control interval $h_{n+1}$.}
    \label{fig:DisctreteNotation}
\end{figure}
\begin{assumption}[\textit{Interval-wise Lipschitz Continuity}]
\label{assumption:Lipschitz}
For every realized sampling interval $[t_{n-1},t_n]$, the lumped uncertainty $\boldsymbol d(t)$ and the nominal dynamics $\boldsymbol f(\boldsymbol x(t),t)$ are assumed to satisfy an interval-wise Lipschitz condition~\cite{khalil2002nonlinear}. That is, there exist positive constants $L_d$ and $L_f$, independent of $n$, such that, for any $\tau_1,\tau_2\in[t_{n-1},t_n]$,
\begin{equation}
\begin{split}
\|\boldsymbol d(\tau_2)-\boldsymbol d(\tau_1)\|&\le L_d|\tau_2-\tau_1|,\\
\|\boldsymbol f(\boldsymbol x(\tau_2),\tau_2)-\boldsymbol f(\boldsymbol x(\tau_1),\tau_1)\|&\le L_f|\tau_2-\tau_1|.
\end{split}
\end{equation}
Consequently,
\begin{equation}
\|\Delta\boldsymbol d_n\|\le L_dh_n,\qquad
\|\Delta\boldsymbol f_n\|\le L_fh_n.
\end{equation}
In particular, $f_n \triangleq f(x_n,t_n)$. In addition, all derivatives required by the local Taylor and finite-difference expansions used below are assumed to exist and remain uniformly bounded over the admissible operating region, so that the constants implicit in the corresponding $\mathcal{O}(\cdot)$ terms are independent of the sampling index $n$.
\end{assumption}

Based on the definition of $\boldsymbol{\sigma}_n$ in \eqref{eq:TimeDelayError}--\eqref{eq:DiscreteTDE}, Assumption \ref{assumption:Lipschitz} provides an interval-wise upper bound on the magnitude of the TDE, $\|\boldsymbol{\sigma}_n\|$:

\begin{equation}
\begin{split}
\|\boldsymbol \sigma_n\| &= h_n\Big\|\frac{\boldsymbol{Cg}}{2}\Delta\boldsymbol d_n-\frac{\boldsymbol{C}}{2}\Delta\boldsymbol f_n\Big\|\\
&\leq h_n\Big[\frac{\|\boldsymbol{Cg}\|}{2}\ \|\Delta\boldsymbol d_n\|+\frac{\|\boldsymbol{C}\|}{2}\|\Delta\boldsymbol f_n\|\Big]\\
&\leq h_n^2 \Big(\frac{\|\boldsymbol{Cg}\|}{2}L_d+\frac{\|\boldsymbol{C}\|}{2}L_f\Big)\label{eq:TDEBound1}
\end{split}
\end{equation}

Therefore, we can derive the relationship between the error magnitude and the local sampling interval.
\begin{lemma}[\textit{Boundedness of Error}] \label{lemma:BoundnessOfError}
Under \eqref{eq:TDEBound1}, the magnitude of the System TDE over an arbitrary sampling interval of length $h_n$ is bounded by a constant multiple of $h_n^2$. Specifically,
\begin{equation}
\|\boldsymbol \sigma_n\|\leq\Big(\frac{\|\boldsymbol{Cg}\|}{2}L_d+\frac{\|\boldsymbol{C}\|}{2}L_f\Big)h_n^2=\Psi h_n^2 \label{eq:BoundnessOfError}
\end{equation}
Here, $h_n$ denotes the realized local sampling interval over $[t_{n-1},t_n]$, rather than a globally fixed sampling period.
\end{lemma}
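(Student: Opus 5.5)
The plan is to start from Definition~\ref{Def:SystemTDE} evaluated at $t=t_n$ with $h=h_n$. By \eqref{eq:DiscreteNotation} and \eqref{eq:DiscreteTDE}, this rewrites the System TDE as
\begin{equation*}
\boldsymbol\sigma_n=h_n\Big(\frac{\boldsymbol{Cg}}{2}\Delta\boldsymbol d_n-\frac{\boldsymbol C}{2}\Delta\boldsymbol f_n\Big),
\end{equation*}
where $\Delta\boldsymbol f_n=\boldsymbol f(\boldsymbol x_n,t_n)-\boldsymbol f(\boldsymbol x_{n-1},t_{n-1})$. This step is purely algebraic. Since $h_n>0$, the scalar factor can be pulled out of the norm.

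Next, apply the triangle inequality together with submultiplicativity of the induced matrix norm (consistent with the vector norm used in Assumption~\ref{assumption:Lipschitz}). This bounds $\|\boldsymbol\sigma_n\|$ by
\begin{equation*}
h_n\big(\tfrac{1}{2}\|\boldsymbol{Cg}\|\,\|\Delta\boldsymbol d_n\|+\tfrac{1}{2}\|\boldsymbol C\|\,\|\Delta\boldsymbol f_n\|\big).
\end{equation*}
Then invoke Assumption~\ref{assumption:Lipschitz} with $\tau_1=t_{n-1}$ and $\tau_2=t_n$, both of which lie in the realized interval $[t_{n-1},t_n]$. This gives $\|\Delta\boldsymbol d_n\|\le L_dh_n$ and $\|\Delta\boldsymbol f_n\|\le L_fh_n$. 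Substituting these yields exactly the chain \eqref{eq:TDEBound1}, and hence \eqref{eq:BoundnessOfError} with $\Psi=\tfrac12\|\boldsymbol{Cg}\|L_d+\tfrac12\|\boldsymbol C\|L_f$. Using the Remark, one may also write $\|\boldsymbol{Cg}\|=\|\boldsymbol M^{-1}\|$.

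The only point needing care is the claim that $\Psi$ is a genuine constant independent of $n$ and of the realized $h_n$. I would handle this by noting that $\boldsymbol C$ and $\boldsymbol g$ are constant by \eqref{eq:MappingMat} and \eqref{eq:SVMat}. In addition, $L_d$ and $L_f$ are assumed independent of $n$ in Assumption~\ref{assumption:Lipschitz}, and that assumption is stated interval-wise for every realized interval, so it covers variable steps.

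No Taylor remainder enters here. Definition~\ref{Def:SystemTDE} is exact, and the $\mathcal{O}(h)$ terms were discarded before the definition was made. The lemma therefore follows directly, with no additional obstacle beyond the bookkeeping above.
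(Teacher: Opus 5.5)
Your proposal is correct and matches the paper's argument. The paper's proof is the chain \eqref{eq:TDEBound1}: factor $h_n$ out of Definition~\ref{Def:SystemTDE} at $t=t_n$, apply the triangle inequality with induced-norm submultiplicativity, and use Assumption~\ref{assumption:Lipschitz} on $[t_{n-1},t_n]$ to obtain $\|\Delta\boldsymbol d_n\|\le L_d h_n$ and $\|\Delta\boldsymbol f_n\|\le L_f h_n$. Your extra remarks are accurate and only make explicit what the paper leaves implicit: $\Psi$ is independent of $n$ because $\boldsymbol C$, $\boldsymbol g$, $L_d$, $L_f$ are constant, and no Taylor remainder enters because the definition is exact.
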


With the System TDE and its interval-wise bound mathematically established, we return to the raw error dynamics. Specifically, successively substituting \eqref{eq:RawModel} and \eqref{eq:sddot2} into \eqref{eq:raw_ED}, and then applying the System TDE definition in \eqref{eq:TimeDelayError}, yields:
\begin{equation}
\begin{aligned}
\dot{\boldsymbol s}(t)
&\overset{\eqref{eq:RawModel}}{=}(\gamma-1)\frac{\boldsymbol s(t)}{h}
+\boldsymbol{Cg}\big[\boldsymbol d(t)-\boldsymbol d(t-h)\big] \\
&\qquad -\frac{1}{2}h\cdot\ddot{\boldsymbol s}(t-h)
+\mathcal{O}(h^2)\\
&\overset{\eqref{eq:sddot2}}{=}
(\gamma-1)\frac{\boldsymbol s(t)}{h}+\frac{1}{2}\boldsymbol{Cg} \big[\boldsymbol d(t)-\boldsymbol d(t-h)\big]\\
&\qquad
-\frac{1}{2}\boldsymbol C\big[\boldsymbol f(\boldsymbol x(t),t)-\boldsymbol f(\boldsymbol x(t-h),t-h)\big]
+\mathcal{O}(h^2)\\
&\overset{\eqref{eq:TimeDelayError}}{=}(\gamma-1)\frac{\boldsymbol s(t)}{h}+\frac{\boldsymbol\sigma(t,h)}{h}
+\mathcal{O}(h^2).
\end{aligned}
\label{eq:PiecewiseError}
\end{equation}
All remainder terms arising from these substitutions are of order $\mathcal{O}(h^2)$ and are therefore collected into a single $\mathcal{O}(h^2) term$. Evaluating \eqref{eq:PiecewiseError} at $t=t_n$ using the discrete notation in \eqref{eq:DiscreteNotation} and \eqref{eq:DiscreteTDE} yields the following local TDE-perturbed reaching dynamics.

\begin{lemma}[\textit{Local TDE-Perturbed Reaching Dynamics}]
\label{lemma:LocalTDEPerturbedDynamics}
Under the total control law in \eqref{eq:TotalControl} and Assumptions \ref{assumption:ZOH} and \ref{assumption:Lipschitz}, the sliding variable admits the following local dynamics up to the indicated higher-order residual over an interval of length $h_n$:

\begin{equation}
    \dot {\boldsymbol s}_n = (\gamma-1)\frac{\boldsymbol s_n}{h_n} + \frac{\boldsymbol \sigma_n}{h_n}+\mathcal{O}(h_n^2)
    \label{eq:DiscretePiecewiseError}
\end{equation}
\end{lemma}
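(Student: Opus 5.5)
The plan is to specialize the continuous-time identity \eqref{eq:PiecewiseError} to the sampling grid $\{t_n\}$ and check that every approximation step holds uniformly in $n$ when $h$ is replaced by the realized interval $h_n=t_n-t_{n-1}$. First, write the closed loop on $[t_{n-1},t_n]$ by substituting the control law \eqref{eq:TotalControl}, with $h=h_n$, into \eqref{eq:ReachingDynamics}. Then use the shifted reduced dynamics \eqref{eq:ReducedDynamics_shift} at $t_{n-1}$ to eliminate $\boldsymbol u^d_{n-1}$. This gives \eqref{eq:raw_ED} at $t=t_n$. It consists of the control-objective term $(\gamma-1)\boldsymbol s_n/h_n$, the TDE residual $\boldsymbol{Cg}\,\Delta\boldsymbol d_n$, and the Euler residual $\dot{\boldsymbol s}_{n-1}-\Delta\boldsymbol s_n/h_n$. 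Here $\dot{\boldsymbol s}_{n-1}$ is the right-hand derivative $\dot{\boldsymbol s}(t_{n-1}^+)$, per the convention adopted after Assumption~\ref{assumption:ZOH}.

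Second, treat the Euler residual. Under Assumption~\ref{assumption:ZOH}, $\boldsymbol u$ is constant on the open interval $(t_{n-1},t_n)$. Therefore $\boldsymbol s$ is smooth there and extends smoothly to the closed interval using one-sided limits. Taylor's theorem with Lagrange remainder about $t_{n-1}^+$ then gives \eqref{eq:RawModel} with $h=h_n$. The remainder is bounded by $\tfrac16 h_n^2\sup\|\dddot{\boldsymbol s}\|$, and this supremum is uniform in $n$ by the bounded-derivative clause of Assumption~\ref{assumption:Lipschitz}. Next, expand $\ddot{\boldsymbol s}(t_{n-1}^+)$ via \eqref{eq:sddot1}, where the $\dot{\boldsymbol u}$ term drops out. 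Replace $\dot{\boldsymbol f}$ and $\dot{\boldsymbol d}$ by the forward differences $\Delta\boldsymbol f_n/h_n$ and $\Delta\boldsymbol d_n/h_n$, as in \eqref{eq:sddot2}. Each replacement costs $\mathcal O(h_n)$ with a constant set by bounded second derivatives. After multiplying by $-h_n/2$, that error becomes $\mathcal O(h_n^2)$.

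Third, collect terms. Write $\boldsymbol{Cg}\Delta\boldsymbol d_n-\tfrac12\boldsymbol{Cg}\Delta\boldsymbol d_n=\tfrac12\boldsymbol{Cg}\Delta\boldsymbol d_n$, and combine it with $-\tfrac12\boldsymbol C\Delta\boldsymbol f_n$. This reproduces $\boldsymbol\sigma_n/h_n$ exactly by Definition~\ref{Def:SystemTDE} and \eqref{eq:DiscreteTDE}. All remaining pieces are $\mathcal O(h_n^2)$ with $n$-independent constants, which yields \eqref{eq:DiscretePiecewiseError}.

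The main obstacle is the one-sided-derivative bookkeeping at $t_{n-1}$ and $t_n$. The Taylor expansion must be performed for the trajectory generated by the held input $\boldsymbol u_{n-1}$ on the previous interval. The left-hand side $\dot{\boldsymbol s}_n$, however, involves the post-update input $\boldsymbol u_n$. I would handle this by keeping \eqref{eq:ReachingDynamics} evaluated at $t_n^+$ for the left side, and using only left-interval data, namely $\boldsymbol s_{n-1}$, $\boldsymbol s_n$ and $\dot{\boldsymbol s}(t_{n-1}^+)$, in the residual. Then the only quantities linking the two intervals are $\boldsymbol s$, $\boldsymbol x$, $\boldsymbol d$ and $\boldsymbol f$, all of which are continuous at $t_n$. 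This is what justifies using the Lipschitz bounds and the identification with $\boldsymbol\sigma_n$ across the jump.
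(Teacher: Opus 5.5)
Your proposal is correct and follows essentially the paper's own derivation: substitute \eqref{eq:TotalControl} into \eqref{eq:ReachingDynamics}, eliminate the previous-instant $\boldsymbol u^d$ via \eqref{eq:ReducedDynamics_shift} to obtain \eqref{eq:raw_ED}, Taylor-expand the Euler residual as in \eqref{eq:RawModel}, invoke \eqref{eq:sddot1}--\eqref{eq:sddot2} with $\dot{\boldsymbol u}=\boldsymbol 0$, and recombine $\tfrac12\boldsymbol{Cg}\Delta\boldsymbol d_n-\tfrac12\boldsymbol C\Delta\boldsymbol f_n$ into $\boldsymbol\sigma_n/h_n$ before evaluating at $t=t_n$. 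Your added one-sided-derivative and $n$-uniform remainder bookkeeping only spells out what the paper covers through its right-hand-derivative convention and the bounded-derivative clause of Assumption~\ref{assumption:Lipschitz}.
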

Equation \eqref{eq:DiscretePiecewiseError} defines the local continuous-time error dynamics at $t_n$. To bridge this rate-level formulation with the discrete reaching objective in \eqref{eq:ControlObjective}, we predict the state at the next step $\boldsymbol{s}_{n+1}$ by projecting $\dot{\boldsymbol{s}}_n$ across the subsequent interval $h_{n+1}$. This forward one-step relation is approximated as:
\begin{equation}
    \boldsymbol{s}_{n+1} = \boldsymbol{s}_n + h_{n+1} \dot{\boldsymbol{s}}_n+\mathcal{O}(h_{n+1}^2)
    \label{eq:onestepForward}
\end{equation}

Substituting \eqref{eq:DiscretePiecewiseError} into \eqref{eq:onestepForward} introduces two higher-order residual terms associated with the local rate-level approximation and the subsequent state projection. Define the ratio between two consecutive sampling intervals as:
\begin{equation}
    r_n \triangleq \frac{h_{n+1}}{h_n}.
    \label{eq:StepRatio}
\end{equation}
Then, the discrete error dynamics can be written as
\begin{equation}
\begin{split}
\boldsymbol{s}_{n+1}& = \boldsymbol{s}_n+h_{n+1}\Big[ (\gamma-1)\frac{\boldsymbol{s}_n}{h_n}+\frac{\boldsymbol{\sigma}_n}{h_n}+\mathcal{O}(h_n^2) \Big]
+\mathcal{O}(h_{n+1}^2)\\
&=\big[1+r_n(\gamma-1)\big]\boldsymbol{s}_n+r_n\boldsymbol{\sigma}_n+\mathcal{O}(r_n h_n^3)+\mathcal{O}(r_n^2 h_n^2),
\label{eq:Expansion}
\end{split}
\end{equation}

\begin{remark}[\textit{Order Consistency under Bounded Step Ratios}] \label{re:OrderConsistency}
If we design an appropriate timestep update rule such that $0<r_n\leq\bar r<\infty$, where $\bar r$ denotes a finite uniform upper bound on the timestep ratio $r_n$, then the higher-order residual terms in \eqref{eq:Expansion} can be consistently expressed with respect to $h_n$. By the standard definition and algebra of Landau order symbols \cite{olver1997asymptotics}, the uniform boundedness of $r_n$ allows the factors $r_n$ and $r_n^2$ to be absorbed into the constants implicit in the $\mathcal{O}(\cdot)$ notation.
\begin{equation}
    \mathcal{O}(r_n h_n^3)=\mathcal{O}(h_n^3),
    \qquad
    \mathcal{O}(r_n^2 h_n^2)=\mathcal{O}(h_n^2).
\end{equation}
Furthermore, in the local small-step limit, $\mathcal{O}(h_n^3) \subseteq \mathcal{O}(h_n^2)$. Therefore, the two higher-order residual terms in
\eqref{eq:Expansion} can be collectively represented as
\begin{equation}
    \mathcal{O}(r_n h_n^3)+\mathcal{O}(r_n^2 h_n^2) = \mathcal{O}(h_n^2).
    \label{eq:CombinedRemainder}
\end{equation}
Accordingly, once the boundedness of $r_n$ is established by the proposed timestep update rule, the higher-order residual terms in \eqref{eq:Expansion} can be collectively represented as $\mathcal{O}(h_n^2)$.
\end{remark}

Applying Remark \ref{re:OrderConsistency} and defining the effective reaching parameter $\gamma_n^{\mathrm{eff}}\triangleq1+r_n(\gamma-1)$, \eqref{eq:Expansion} reduces to
\begin{equation}
\boldsymbol s_{n+1}=\gamma_n^{\mathrm{eff}}\boldsymbol s_n+r_n\boldsymbol\sigma_n+\underbrace{\mathcal O(h_n^2)}_{\boldsymbol\eta_n}.
\label{eq:effectiveErrorDynamics}
\end{equation}

Here, $\boldsymbol\eta_n$ denotes the aggregate higher-order residual. The effective reaching parameter $\gamma_n^{\mathrm{eff}}$ explicitly shows how the variable-step ratio $r_n$ modifies the nominal convergence rate $\gamma$ and governs the actual one-step attenuation of the sliding variable.

\begin{remark}[\textit{Boundedness of the Higher-Order Residual}] \label{re:ResidualBound}
Under Assumption~\ref{assumption:Lipschitz} and the stated bounded-step-ratio condition in Remark~\ref{re:OrderConsistency}, there exists a constant $\kappa>0$, independent of $n$, such that
\begin{equation}
\|\boldsymbol{\eta}_n\|\leq\kappa h_n^2.
\label{eq:EtaLocalBound}
\end{equation}
Therefore, if an appropriate timestep update rule provides a finite asymptotic upper bound
\begin{equation}
\limsup_{n\to\infty} h_n \leq h_{\max}<\infty,
\end{equation}
then the higher-order residual is also ultimately bounded as
\begin{equation}
\limsup_{n\to\infty}\|\boldsymbol{\eta}_n\|\leq\kappa h_{\max}^2.\label{eq:EtaUltimateBound}
\end{equation}
The existence of such a finite $h_{\max}$ is established later by the proposed timestep update rule.
\end{remark}


\subsection{Step Variation Algorithm}
As demonstrated in Section~\ref{sec:ErrorDynamics}, the error dynamics of the system driven by the control input \eqref{eq:TotalControl} are inherently governed by the System TDE, $\boldsymbol{\sigma}_n$. Accordingly, VTDC should shorten the control interval under rapid model or uncertainty variations and enlarge it under slowly varying conditions, providing adaptive control-update scheduling for spacecraft operation. Lemma \ref{lemma:BoundnessOfError} establishes that $\boldsymbol{\sigma}_n$ admits a quadratic upper bound with respect to the local interval length $h_n$. Building upon this quadratic relationship, we define the instantaneous TDE coefficient at the $n$-th step as follows:
\begin{equation}
\|\boldsymbol{\sigma}_n\|=\psi_n h_n^2\leq\Psi h_n^2, \qquad \psi_n \triangleq \frac{\|\boldsymbol{\sigma}_n\|}{h_n^2}
\label{eq:DiscreteBound1}
\end{equation}

\begin{assumption}[\textit{Persistent TDE Variation}]
\label{assumption:PersistentTDE}
For the uncertain systems considered in this study, the TDE-based scheduling signal is assumed to remain nonvanishing over the operating interval. Specifically, there exists a constant $\psi_{\min}>0$ such that
\begin{equation}
0<\psi_{\min}\leq\psi_n\leq\Psi,\qquad \forall n.
\end{equation}
\end{assumption}

Lemma~\ref{lemma:BoundnessOfError} establishes that the magnitude of the System TDE scales quadratically with the realized timestep. Motivated by adaptive timestep regulation in numerical integration, we seek a closed-form update law that regulates the TDE magnitude toward a prescribed tolerance while avoiding abrupt variations in the control interval. The resulting update rule is summarized in the following theorem.

\newtheorem{theorem}{Theorem}
\begin{theorem}[\textit{Timestep Update Rule}] \label{theo:UpdateRule} 
Under Assumptions~\ref{assumption:Lipschitz} and~\ref{assumption:PersistentTDE}, $\|\boldsymbol{\sigma}_n\|=\psi_n h_n^2>0$ for any $h_n>0$. For the $n$-th step $(n\geq1)$, let the next timestep be determined by the following update rule with constant design parameters $\mathrm{SF}\in(0,1)$ and $k>1$:
\begin{equation} 
h_{n+1} = \mathrm{SF}  \cdot \left( \frac{\mathrm{tol}}{\|\boldsymbol{\sigma}_n\|} \right)^{\frac{1}{2k}}\cdot h_n\label{eq:UpdateRule} 
\end{equation}
Equivalently, the step ratio is defined as: 
\begin{equation} r_{n} \triangleq \frac{h_{n+1}}{h_n} = \mathrm{SF} \left( \frac{\mathrm{tol}}{\|\boldsymbol{\sigma}_n\|} \right)^{\frac{1}{2k}} \label{eq:StepRatio2} 
\end{equation} 

Under the standard local-coefficient approximation adopted in adaptive timestep regulation, the nominal asymptotic TDE level is:

\begin{equation}\lim_{n \to \infty} \|\boldsymbol{\sigma}_n\| = \mathrm{SF}^{2k} \cdot \mathrm{tol} < \mathrm{tol}\end{equation}
\end{theorem}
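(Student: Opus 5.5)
Under Assumption~\ref{assumption:PersistentTDE}, $\psi_n\ge\psi_{\min}>0$, so $\|\boldsymbol\sigma_n\|=\psi_n h_n^2>0$ whenever $h_n>0$. The update rule \eqref{eq:UpdateRule} is therefore well defined and gives $h_{n+1}>0$. By induction from $h_0,h_1>0$, every realized step stays positive, the quotient $\mathrm{tol}/\|\boldsymbol\sigma_n\|$ always makes sense, and the ratio \eqref{eq:StepRatio2} follows by dividing \eqref{eq:UpdateRule} by $h_n$. The substantive part of the statement is the asymptotic TDE level, which I would derive by passing to logarithmic coordinates where the recursion becomes linear.

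Concretely, I would adopt the local-coefficient approximation stated in the theorem: the coefficient $\psi$ is frozen (quasi-constant) over consecutive steps, so $\|\boldsymbol\sigma_{n+1}\|\approx\psi h_{n+1}^2$. Substituting \eqref{eq:UpdateRule} then gives
\begin{equation*}
\|\boldsymbol\sigma_{n+1}\| = \psi h_n^2\,\mathrm{SF}^2\Big(\frac{\mathrm{tol}}{\|\boldsymbol\sigma_n\|}\Big)^{1/k}
=\mathrm{SF}^2\,\mathrm{tol}^{1/k}\,\|\boldsymbol\sigma_n\|^{1-1/k}.
\end{equation*}
Set $y_n\triangleq\ln\|\boldsymbol\sigma_n\|$ and $\alpha\triangleq 1-1/k\in(0,1)$, which uses $k>1$. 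The recursion becomes the affine map
\begin{equation*}
y_{n+1}=\alpha y_n+2\ln\mathrm{SF}+\tfrac{1}{k}\ln\mathrm{tol}.
\end{equation*}
This map is a contraction with factor $\alpha$. Its unique fixed point $y^\ast$ solves $y^\ast/k=2\ln\mathrm{SF}+\ln(\mathrm{tol})/k$, so $y^\ast=\ln(\mathrm{SF}^{2k}\mathrm{tol})$. The solution is $y_n-y^\ast=\alpha^{n-1}(y_1-y^\ast)\to0$, which also exhibits the geometric forgetting of the initial timestep. Exponentiating gives $\lim_{n\to\infty}\|\boldsymbol\sigma_n\|=\mathrm{SF}^{2k}\,\mathrm{tol}$. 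Because $\mathrm{SF}\in(0,1)$ and $k>1$, we have $\mathrm{SF}^{2k}<1$, so the limit lies strictly below $\mathrm{tol}$.

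The main obstacle is justifying the frozen-coefficient step, since in reality $\psi_n$ varies with $n$. I would present the limit explicitly as the nominal level under this approximation, as the theorem does. As a robustness check, I would keep $\psi_n$ time-varying and use Assumption~\ref{assumption:PersistentTDE}. The recursion then picks up the additive term $\ln(\psi_{n+1}/\psi_n)$. Because $\psi_n\in[\psi_{\min},\Psi]$, the same contraction argument bounds $y_n$ asymptotically within a band around $y^\ast$. Equivalently, $h_n^{2/k}\approx\mathrm{SF}^2\mathrm{tol}^{1/k}/\psi^{1/k}$, which prepares the later bounds on $h_n$ and $r_n$. The exact limit is claimed only in the nominal case.
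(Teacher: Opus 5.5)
Your proposal is correct and takes essentially the same approach as the paper: under the frozen-coefficient (ideal setpoint-tracking) approximation, both reduce the rule to the log-domain affine recurrence $\ln\|\boldsymbol\sigma_{n+1}\| = (1-\tfrac{1}{k})\ln\|\boldsymbol\sigma_n\| + \tfrac{1}{k}\ln(\mathrm{SF}^{2k}\mathrm{tol})$, whose contraction factor $1-1/k\in(0,1)$ and fixed point $\ln(\mathrm{SF}^{2k}\mathrm{tol})$ give the stated limit. The difference is only direction: the paper builds the rule from the smoothed target $(\mathrm{SF}^{2k}\mathrm{tol}\,\|\boldsymbol\sigma_n\|^{k-1})^{1/k}$ and assumes that target is tracked, whereas you substitute the given rule into $\|\boldsymbol\sigma_{n+1}\|=\psi h_{n+1}^2$, which yields exactly that target.
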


\begin{proof}
The proposed stepsize adaptation is motivated by adaptive time-stepping techniques widely used in ordinary differential equation solvers, such as MATLAB ode45, and differential-algebraic equation solvers. In numerical integration methods, including Runge--Kutta methods, the local error estimate $E_n$ is commonly modeled as a  ($p$)-th power function of the timestep, $E_n = \phi_n h_n^p$ where $\phi_n$ is a local error coefficient. 

We adopt the standard elementary feedback stepsize-control principle from adaptive numerical integration \cite{butcher2016numerical}, \cite{soderlind2002automatic}, rather than deriving a new stepsize controller. The most basic update rule, namely the elementary update rule in numerical integration, is given by \cite{butcher2016numerical}:
\begin{equation}
    h_{n+1}=h_n \left(\frac{\mathrm{tol}}{E_n}\right)^{\frac{1}{p}} \label{eq:elementaryUpdate}
\end{equation}

Söderlind interpreted this structure not merely as a numerical heuristic, but as a feedback control problem, where the error estimate $E_n$ is regarded as the output, the timestep $h_n$ as the control input, and the prescribed tolerance $\mathrm{tol}$ as the setpoint \cite{soderlind2002automatic}. Taking the logarithm of \eqref{eq:elementaryUpdate} gives
\begin{equation}
    \ln{h_{n+1}} = \ln{h_n}+\frac{1}{p}\ln{\left(\frac{\mathrm{tol}}{E_n}\right)}
\end{equation}

This represents a negative feedback law for the error excess $E_n$ in the log-domain: when $E_n<\mathrm{tol}$, the timestep is increased, whereas when $E_n>\mathrm{tol}$, the timestep is decreased. 

Lemma~\ref{lemma:BoundnessOfError} provides exactly the error–stepsize structure required by the standard adaptive timestep controller, with the System TDE $\|\boldsymbol{\sigma}_n\|$ playing the role of the local error estimate $E_n$ and ($p=2$). Therefore, the simplest deadbeat update rule becomes:
\begin{equation}
    h_{n+1}=h_n \left(\frac{\mathrm{tol}}{\|\boldsymbol{\sigma}_n\|}\right)^{\frac{1}{2}} \label{eq:deadbeat}
\end{equation}
However, the elementary rule in \eqref{eq:elementaryUpdate} is a deadbeat-type update. Under the locally frozen error-coefficient approximation, it selects $h_{n+1}$ so that the predicted next-step TDE magnitude directly reaches $\mathrm{tol}$. Such a one-step correction can produce unnecessarily abrupt timestep variations. To moderate this correction, we introduce an intermediate target between the current TDE magnitude and the prescribed tolerance. Specifically, the next-step nominal target is defined by the weighted geometric mean
\begin{equation}
\|\boldsymbol{\sigma}_{n+1}\|_{\textrm{set}}=\left(\mathrm{tol}\cdot\|\boldsymbol{\sigma}_n\|^{k-1}\right)^{\frac{1}{k}} \label{eq:substitute}
\end{equation}
Here, $k>1$ determines the degree of smoothing. In the logarithmic domain. A smaller $k$ places greater weight on the prescribed tolerance $\mathrm{tol}$, whereas a larger $k$ retains more of the current TDE magnitude and therefore produces a more gradual correction. Taking the natural logarithm yields:

\begin{equation} 
\ln{\|\boldsymbol{\sigma}_{n+1}\|}_{\textrm{set}} = \frac{1}{k}\ln{(\mathrm{tol})} + \frac{k-1}{k}\ln{ \|\boldsymbol{\sigma}_{n}\|}. \label{eq:LogBound1} 
\end{equation}

Let $\lambda=\frac{k-1}{k}=1-\frac{1}{k}$. Since $k>1$, we have $\lambda\in(0,1)$. Under ideal setpoint tracking,  i.e., if the next-step TDE magnitude satisfies $\|\boldsymbol{\sigma}_{n+1}\|=\|\boldsymbol{\sigma}_{n+1}\|_{\mathrm{set}}$ at each update. the logarithmic TDE sequence follows the recurrence in \eqref{eq:LogBound1}. Note that the general solution of the first-order affine recurrence $a_{n+1} = xa_n + C$ (where $x \neq 1$) is $a_n = x^n a_0 + \frac{(1 - x^n)}{1 - x}C$. Based on this, we can obtain the general form of $\ln{\|\boldsymbol{\sigma}_{n}\|}$.

\begin{equation}
\ln {\|\boldsymbol{\sigma}_n\|} = \lambda^n \ln {\|\boldsymbol{\sigma}_0\|} + (1-\lambda^n)\cdot \ln{\big(\text{tol} \big)}\label{eq:LogBound2}
\end{equation}

Since $\lambda\in(0,1)$, $\lambda^n\to0$ as $n\to\infty$, and therefore 
\begin{equation}
\lim_{n \to \infty}\ln {\|\boldsymbol{\sigma}_n\|} =  \ln{(\text{tol})} \quad \to \quad
\lim_{n \to \infty}\|\boldsymbol{\sigma}_n\| =  \text{tol}\label{eq:SigmaTolConvergence}
\end{equation}

Applying the same target-to-timestep relation used in the deadbeat update \eqref{eq:deadbeat}, but replacing its target $\mathrm{tol}$ with the smoothed next-step target defined in \eqref{eq:substitute}, yields:
\begin{equation}
    h_{n+1}
    =h_n\cdot\Bigg[\frac{\overbrace{\left(\mathrm{tol}\cdot\|\boldsymbol{\sigma}_n\|^{k-1}\right)^{\frac{1}{k}}}^{\mathrm{tol}\to\eqref{eq:substitute}}}{\|\boldsymbol{\sigma}_n\|}\Bigg]^{\frac{1}{2}}
    =h_n\cdot\left( \frac{\mathrm{tol}}{\|\boldsymbol{\sigma}_n\|}\right)^{\frac{1}{2k}}
\label{eq:SlowUpdate} 
\end{equation} 

As is standard in numerical integration, the derivation of elementary update rules like \eqref{eq:elementaryUpdate} intrinsically relies on the assumption that the local error coefficient varies slowly (i.e., $\phi_{n+1} \approx \phi_n$). In the proposed VTDC framework, while this condition is not directly targeted, it naturally emerges as a practical byproduct of the timestep adaptation. As the algorithm actively bounds the system TDE magnitude, it inherently prevents the control intervals from becoming excessively large. As in standard adaptive timestep control, the elementary update is derived by locally freezing the error coefficient over consecutive steps. This approximation is adopted here only for the nominal timestep-regulation derivation. Subsequent coefficient variations are reflected in the next feedback update \cite{soderlind2002automatic}. Nevertheless, to provide a strictly conservative margin against unmodeled peak disturbances and to provide a standard conservative margin for dynamic transients, we adopt the standard practice in adaptive time-stepping of introducing a safety factor $\mathrm{SF} \in (0,1)$ as follows:
\begin{equation} h_{n+1} = \mathrm{SF} \cdot \left( \frac{\mathrm{tol}}{\|\boldsymbol{\sigma}_n\|} \right)^{\frac{1}{2k}}\cdot h_n, \quad \mathrm{SF}\in(0,1),\quad k>1 
\end{equation} 

Since $\mathrm{SF}$ is constant, the above update rule can be equivalently rewritten as follows:
\begin{equation}
\begin{split}
    h_{n+1} &=\left( \frac{\mathrm{SF}^{2k}\cdot\mathrm{tol}}{\|\boldsymbol{\sigma}_n\|}\right)^{\frac{1}{2k}}\cdot h_n \\
&=\Bigg(\frac{\left(\mathrm{SF}^{2k}\cdot\mathrm{tol}\cdot\|\boldsymbol{\sigma}_n\|^{k-1}\right)^{\frac{1}{k}}}{\|\boldsymbol{\sigma}_n\|}\Bigg)^{\frac{1}{2}}\cdot h_n 
\end{split}
\label{eq:NewConverge} 
\end{equation} 
Thus, \eqref{eq:SlowUpdate} can be interpreted as using the following next-step target $\|\boldsymbol{\sigma}_{n+1}\|_{\textrm{set}}$ instead of that defined in \eqref{eq:substitute}:
\begin{equation}
\|\boldsymbol{\sigma}_{n+1}\|_{\textrm{set}}=\left(\mathrm{SF}^{2k}\cdot\mathrm{tol}\cdot\|\boldsymbol{\sigma}_n\|^{k-1}\right)^{\frac{1}{k}}
\label{eq:ModerateGoal}
\end{equation} 
Applying the same derivation as in \eqref{eq:substitute}--\eqref{eq:SigmaTolConvergence} to the modified target in \eqref{eq:ModerateGoal} yields an asymptotic value of $\|\boldsymbol{\sigma}_n\|$ that is strictly smaller than $\mathrm{tol}$:
\begin{equation}
\lim_{n \to \infty}\|\boldsymbol{\sigma}_n\| =  \mathrm{SF}^{2k}\cdot\mathrm{tol}<\mathrm{tol}
\label{eq:LimNewConverge} 
\end{equation} 

Therefore, under the slowly varying TDE-coefficient condition $\psi_{n+1}\approx\psi_n$, the safety factor places the nominal asymptotic TDE level strictly below the prescribed tolerance.
\end{proof}

The proposed update rule in \eqref{eq:UpdateRule} not only regulates the System TDE but also the attenuation of the influence of the initial timestep. 

\begin{corollary}[\textit{Initial Timestep Independence}]\label{cor:initial_independence}
In the timestep update rule \eqref{eq:UpdateRule}, the influence of the initial heuristic timestep $h_0$ decays exponentially. As the number of iterations $n$ increases, the timestep sequence dependence on $h_0$ vanishes asymptotically.
\end{corollary}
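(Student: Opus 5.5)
The plan is to work in the logarithmic domain, where the update rule \eqref{eq:UpdateRule} is linear. Substituting $\|\boldsymbol\sigma_n\|=\psi_n h_n^2$ from \eqref{eq:DiscreteBound1} into \eqref{eq:UpdateRule} gives
\begin{equation*}
h_{n+1}=\mathrm{SF}\,\Big(\frac{\mathrm{tol}}{\psi_n}\Big)^{\frac{1}{2k}} h_n^{1-\frac{1}{k}} .
\end{equation*}
Set $y_n\triangleq\ln h_n$ and $\lambda=1-\frac1k\in(0,1)$, as in the proof of Theorem~\ref{theo:UpdateRule}. Then
\begin{equation*}
y_{n+1}=\lambda y_n+c_n,\qquad c_n\triangleq\ln\mathrm{SF}+\tfrac{1}{2k}\ln\mathrm{tol}-\tfrac{1}{2k}\ln\psi_n .
\end{equation*}
This is a first-order affine recurrence with contraction factor $\lambda$ and a forcing term $c_n$ that depends only on the local TDE coefficient, not on the timestep.

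The second step is to unroll the recurrence, exactly as was done for \eqref{eq:LogBound2}, but now keeping the time-varying forcing:
\begin{equation*}
y_n=\lambda^{n}y_0+\sum_{j=0}^{n-1}\lambda^{\,n-1-j}c_j .
\end{equation*}
Exponentiating gives
\begin{equation*}
h_n=h_0^{\lambda^n}\prod_{j=0}^{n-1}\exp\!\big(\lambda^{n-1-j}c_j\big).
\end{equation*}
So $h_0$ enters only through the exponent $\lambda^n=(1-1/k)^n$, which decays geometrically. To make ``dependence vanishes'' precise, compare two runs started from $h_0$ and $h_0'$ that see the same coefficient sequence $\{\psi_n\}$. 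Their log-difference obeys $|y_n-y_n'|=\lambda^n|\ln h_0-\ln h_0'|\to0$ exponentially. Equivalently, $h_n/h_n'\to1$ and $\partial\ln h_n/\partial\ln h_0=\lambda^n$.

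The main obstacle is that $\psi_n$ itself depends on the trajectory, and hence indirectly on $h_0$, so the ``same forcing'' comparison is not literally available. I would handle this in two ways. The first is to state the result under the same frozen or slowly varying coefficient convention used in Theorem~\ref{theo:UpdateRule}. The second is to use Assumption~\ref{assumption:PersistentTDE} to bound the forcing uniformly: $c_n\in[\underline c,\bar c]$ with $\underline c=\ln\mathrm{SF}+\frac1{2k}\ln(\mathrm{tol}/\Psi)$ and $\bar c=\ln\mathrm{SF}+\frac1{2k}\ln(\mathrm{tol}/\psi_{\min})$. The forced part of $y_n$ then lies in $\big[\frac{1-\lambda^n}{1-\lambda}\underline c,\ \frac{1-\lambda^n}{1-\lambda}\bar c\big]$ regardless of $h_0$. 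It follows that
\begin{equation*}
\limsup_{n\to\infty}|y_n-\text{(forced part)}|\le\lim_{n\to\infty}\lambda^n|y_0|=0,
\end{equation*}
and that the asymptotic band $[e^{k\underline c},e^{k\bar c}]$ for $h_n$ is independent of $h_0$. This band also anticipates the finite $h_{\max}$ required in Remark~\ref{re:ResidualBound}.
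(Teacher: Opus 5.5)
Your proposal is correct and follows the paper's proof in Appendix~\ref{app:initial_independence} step for step: substitute $\|\boldsymbol{\sigma}_n\|=\psi_n h_n^2$, pass to the affine recurrence $\ln h_{n+1}=\lambda\ln h_n+C_n$ with $\lambda=1-1/k\in(0,1)$, unroll it, and observe that the only explicit $h_0$-dependence is the term $\lambda^n\ln h_0\to 0$. You also note that $\psi_n$ (hence $C_n$) depends on the trajectory and therefore on $h_0$, a caveat the paper leaves implicit, and your bounded-forcing band $[e^{k\underline c},e^{k\bar c}]$ is exactly the argument the paper gives separately in Appendix~\ref{app:timestep_properties} for Corollary~\ref{cor:NoZeno}.
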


\begin{proof}
The result follows from the contractive recurrence of $\ln h_n$
induced by the timestep update law in \eqref{eq:UpdateRule}. The complete derivation
is provided in Appendix~\ref{app:initial_independence}.
\end{proof}


\begin{corollary}[\textit{Boundedness of Timestep and Exclusion of Zeno Behavior}] \label{cor:NoZeno}
Under Assumption \ref{assumption:PersistentTDE}, the timestep sequence generated by \eqref{eq:UpdateRule} is ultimately bounded above and below. 
\begin{equation} 
\mathrm{SF}^k \sqrt{\frac{\mathrm{tol}}{\Psi}} \leq \liminf_{n\to\infty} h_n \leq \limsup_{n\to\infty} h_n \leq \mathrm{SF}^k \sqrt{\frac{\mathrm{tol}}{\psi_{\min}}} \label{eq:TimestepUltimateBound} \end{equation} 
Consequently, the timestep does not vanish asymptotically, and Zeno behavior is excluded. 
\end{corollary}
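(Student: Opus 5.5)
The plan is to work in the logarithmic domain, where the update rule \eqref{eq:UpdateRule} becomes a linear, contractive affine recurrence driven by a bounded input. The input bounds then transfer directly to the limits of $h_n$.

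\textbf{Step 1: log-domain recurrence.} Substitute $\|\boldsymbol\sigma_n\|=\psi_n h_n^2$ from \eqref{eq:DiscreteBound1} into \eqref{eq:UpdateRule} and set $a_n\triangleq\ln h_n$. This gives
\begin{equation*}
a_{n+1}=\lambda a_n + c_n,\qquad \lambda\triangleq 1-\frac{1}{k}\in(0,1),\qquad c_n\triangleq \ln\mathrm{SF}+\frac{1}{2k}\ln\frac{\mathrm{tol}}{\psi_n}.
\end{equation*}
This is the same contractive recurrence used in Corollary~\ref{cor:initial_independence}. The only difference is that the forcing $c_n$ is now kept explicitly instead of being frozen.

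\textbf{Step 2: bound the forcing.} By Assumption~\ref{assumption:PersistentTDE}, $\psi_{\min}\le\psi_n\le\Psi$ for every $n$. Since $\ln(\mathrm{tol}/\psi)$ is decreasing in $\psi$, this gives $c_{\min}\le c_n\le c_{\max}$ with
\begin{equation*}
c_{\min}=\ln\mathrm{SF}+\frac{1}{2k}\ln\frac{\mathrm{tol}}{\Psi},\qquad c_{\max}=\ln\mathrm{SF}+\frac{1}{2k}\ln\frac{\mathrm{tol}}{\psi_{\min}}.
\end{equation*}

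\textbf{Step 3: unroll and sandwich.} Unrolling the recurrence gives
\begin{equation*}
a_n=\lambda^{n}a_0+\sum_{j=0}^{n-1}\lambda^{n-1-j}c_j.
\end{equation*}
Because all the weights $\lambda^{n-1-j}$ are nonnegative, the sum lies between $c_{\min}(1-\lambda^n)/(1-\lambda)$ and $c_{\max}(1-\lambda^n)/(1-\lambda)$. Since $\lambda^n\to0$ and $1/(1-\lambda)=k$, we obtain
\begin{equation*}
k\,c_{\min}\le\liminf_{n\to\infty}a_n\le\limsup_{n\to\infty}a_n\le k\,c_{\max}.
\end{equation*}
Now $k\,c_{\max}=k\ln\mathrm{SF}+\tfrac12\ln(\mathrm{tol}/\psi_{\min})=\ln\!\big(\mathrm{SF}^k\sqrt{\mathrm{tol}/\psi_{\min}}\big)$, and $k\,c_{\min}$ has the analogous form with $\Psi$ in place of $\psi_{\min}$. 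Exponentiating, which is monotone, yields \eqref{eq:TimestepUltimateBound}.

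\textbf{Step 4: exclusion of Zeno behavior.} Every $h_n$ is strictly positive and finite: $h_0>0$, and each step multiplies $h_n$ by the finite positive factor $\mathrm{SF}\,(\mathrm{tol}/\|\boldsymbol\sigma_n\|)^{1/(2k)}$. The lower bound in \eqref{eq:TimestepUltimateBound} is positive, so for any $\epsilon>0$ there is $n_0$ such that $h_n\ge \mathrm{SF}^k\sqrt{\mathrm{tol}/\Psi}-\epsilon>0$ for all $n\ge n_0$. Hence $t_n=t_0+\sum_{j\le n}h_j\to\infty$, so only finitely many updates occur on any bounded time interval, which excludes Zeno behavior.

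The algebra is routine. The step I expect to need the most care is justifying that Assumption~\ref{assumption:PersistentTDE} applies to every realized interval, since $\psi_n$ depends on the $h_n$ that the rule itself generated. If $\psi_n$ were controlled only on small intervals, the argument could become circular. I will treat the assumption as stated, uniformly in $n$ and for whatever $h_n$ is realized, and add a remark that the resulting $\limsup h_n$ is finite. That finiteness is consistent with the bounded-interval premise behind Assumption~\ref{assumption:Lipschitz} and Remark~\ref{re:ResidualBound}, which supplies the required $h_{\max}$.
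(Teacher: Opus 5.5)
Your proposal is correct and follows the paper's proof essentially step for step. The paper likewise rewrites the update rule as $\ln h_{n+1}=\lambda\ln h_n+C_n$ with $\lambda=1-1/k$, bounds $C_n$ using $\psi_{\min}\le\psi_n\le\Psi$, sums the geometric series to obtain $kC_{\min}\le\liminf\ln h_n\le\limsup\ln h_n\le kC_{\max}$, and exponentiates. It excludes Zeno behavior (in the remark after the corollary) by the same divergent-series argument with $h_n\ge h_{\min}/2$ eventually, and, like you, it takes Assumption~\ref{assumption:PersistentTDE} to hold uniformly over the realized intervals.
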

\begin{proof}
The result follows by applying the bounds
$\psi_{\min}\leq\psi_n\leq\Psi$ to the logarithmic recurrence of
the timestep sequence. The complete proof is provided in
Appendix~\ref{app:timestep_properties}.
\end{proof}

\begin{remark}[\textit{Exclusion of Zeno Behavior}] \label{remark:Zeno} 
From Corollary \ref{cor:NoZeno}, the asymptotic lower bound of the timestep is strictly positive: 
\begin{equation} 
h_{\min} \triangleq \mathrm{SF}^k \sqrt{\frac{\mathrm{tol}}{\Psi}} >0, 
\end{equation} 
because $\mathrm{SF}\in(0,1)$, $k>1$, $\mathrm{tol}>0$, and $\Psi<\infty$. Hence, the VTDC update interval remains bounded away from zero asymptotically. Zeno behavior refers to the occurrence of infinitely many discrete updates within a finite time interval\cite{johansson1999zeno}. Mathematically, as shown in \eqref{eq:Zeno}, Zeno behavior occurs when the sampling instants accumulate at a finite time $T^*$. 
\begin{equation}
\lim_{n \to \infty} t_n = T^* < \infty \label{eq:Zeno}
\end{equation}
Whether Zeno behavior occurs can be determined by the convergence of the following series \cite{johansson1999zeno}:
\begin{equation}
\begin{cases}\text{Occurrence of Zeno behavior:} & \sum_{n=0}^{\infty} h_n < \infty \\
\text{Avoidance of Zeno behavior:} & \sum_{n=0}^{\infty} h_n = \infty
\end{cases} 
\end{equation}

Since Corollary~\ref{cor:NoZeno} guarantees $\liminf_{n\to\infty}h_n\geq h_{\min}>0$, there exists a finite $N$ such that $h_n\geq h_{\min}/2$ for all $n\geq N$. Therefore,
\begin{equation}
\sum_{n=0}^{\infty}h_n \geq \sum_{n=N}^{\infty}\frac{h_{\min}}{2}=\infty.
\label{eq:ZenoSeries}
\end{equation}
Hence, the sampling instants cannot accumulate at a finite time, and Zeno behavior is excluded.

\end{remark}
\begin{corollary}[\textit{Bounds of the Step Ratio}] \label{cor:StepRatioBound}
Building upon the asymptotic timestep bounds $h_{\min}$ and $h_{\max}$ established in Corollary~\ref{cor:NoZeno}, asymptotic bounds on the consecutive step ratio $r_n$ are obtained. From
\eqref{eq:StepRatio2} and $\|\boldsymbol{\sigma}_n\|=\psi_n h_n^2$, the step ratio can be expressed as:
\begin{equation}
r_n = \mathrm{SF}\left(\frac{\mathrm{tol}}{\psi_n h_n^2}\right)^{\frac{1}{2k}}.
\end{equation}
Using the coefficient bounds $\psi_{\min}\leq\psi_n\leq\Psi$ together with the asymptotic timestep bounds $h_{\min}\leq h_n\leq h_{\max}$ yields: 
\begin{equation}
\mathrm{SF} \cdot \left(\frac{\mathrm{tol}} {\Psi h_{\max}^2}\right)^{\frac{1}{2k}}
\leq r_n \leq\mathrm{SF}\cdot \left(\frac{\mathrm{tol}}{\psi_{\min} h_{\min}^2}\right)^{\frac{1}{2k}}.
\end{equation}

Substituting the expressions for $h_{\min}$ and $h_{\max}$ from Corollary~\ref{cor:NoZeno} gives
\begin{equation}
r_{\min} = \left(\frac{\psi_{\min}}{\Psi}\right)^{\frac{1}{2k}},
\qquad r_{\max}=\left(\frac{\Psi}{\psi_{\min}}\right)^{\frac{1}{2k}},
\label{eq:StepRatioBounds}
\end{equation}
such that for all sufficiently large $n$
\begin{equation}
0<r_{\min}\leq r_n\leq r_{\max}<\infty.
\end{equation}
\end{corollary}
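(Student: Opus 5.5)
The plan is to start from the closed form of the step ratio in \eqref{eq:StepRatio2}, substitute $\|\boldsymbol{\sigma}_n\|=\psi_n h_n^2$ from \eqref{eq:DiscreteBound1}, and bound the resulting expression using only monotonicity. Writing
\begin{equation*}
r_n=\mathrm{SF}\,\mathrm{tol}^{\frac{1}{2k}}\,\psi_n^{-\frac{1}{2k}}\,h_n^{-\frac{1}{k}},
\end{equation*}
we see that $r_n$ is decreasing in both $\psi_n$ and $h_n$, because $k>1$ makes both exponents negative. The largest ratio therefore comes from the smallest coefficient $\psi_{\min}$ together with the smallest timestep. The smallest ratio comes from $\Psi$ together with the largest timestep. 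Assumption~\ref{assumption:PersistentTDE} provides $\psi_{\min}\le\psi_n\le\Psi$ for every $n$. Corollary~\ref{cor:NoZeno} provides the timestep bounds $h_{\min}=\mathrm{SF}^k\sqrt{\mathrm{tol}/\Psi}$ and $h_{\max}=\mathrm{SF}^k\sqrt{\mathrm{tol}/\psi_{\min}}$.

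The algebraic step is a direct substitution. From $h_{\min}^2=\mathrm{SF}^{2k}\mathrm{tol}/\Psi$ we get $\mathrm{tol}/(\psi_{\min}h_{\min}^2)=\Psi/(\psi_{\min}\mathrm{SF}^{2k})$. Raising this to the power $1/(2k)$ produces a factor $\mathrm{SF}^{-1}$, which cancels the leading $\mathrm{SF}$ and gives $r_{\max}=(\Psi/\psi_{\min})^{1/(2k)}$. By the symmetric computation with $h_{\max}^2=\mathrm{SF}^{2k}\mathrm{tol}/\psi_{\min}$ and $\Psi$, we obtain $r_{\min}=(\psi_{\min}/\Psi)^{1/(2k)}$. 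Positivity and finiteness follow because $0<\psi_{\min}\le\Psi<\infty$. This also gives $r_{\min}\le 1\le r_{\max}$ and $r_{\min}r_{\max}=1$, which is a useful consistency check.

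The only delicate point, and the one I expect to be the main obstacle, is that Corollary~\ref{cor:NoZeno} controls $h_n$ only through $\liminf$ and $\limsup$. It does not give $h_{\min}\le h_n\le h_{\max}$ for all large $n$. I would handle this with an $\varepsilon$-argument. For any $\varepsilon>0$ there is an $N_\varepsilon$ such that $h_{\min}-\varepsilon\le h_n\le h_{\max}+\varepsilon$ for all $n\ge N_\varepsilon$. Inserting these into the monotone expression shows that, for $n\ge N_\varepsilon$, $r_n$ lies between two quantities that converge to $r_{\min}$ and $r_{\max}$ as $\varepsilon\to0$.

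Equivalently, I would work with the logarithm
\begin{equation*}
\ln r_n=\ln\mathrm{SF}+\tfrac{1}{2k}\big(\ln\mathrm{tol}-\ln\psi_n\big)-\tfrac{1}{k}\ln h_n,
\end{equation*}
and take $\limsup$ and $\liminf$. This yields $r_{\min}\le\liminf_n r_n\le\limsup_n r_n\le r_{\max}$. The statement ``for all sufficiently large $n$'' should then be read in this asymptotic sense, or with arbitrarily small slack around $r_{\min}$ and $r_{\max}$. In either reading the uniform bound $\bar r$ required in Remark~\ref{re:OrderConsistency} is available: take $\bar r=r_{\max}+\varepsilon$ eventually, and note that the finitely many earlier ratios are each finite by \eqref{eq:StepRatio2}, since $\|\boldsymbol{\sigma}_n\|>0$.
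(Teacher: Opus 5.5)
Your derivation is the same as the paper's. The paper also writes $r_n=\mathrm{SF}\,(\mathrm{tol}/(\psi_n h_n^2))^{1/(2k)}$, bounds it monotonically using $\psi_{\min}\le\psi_n\le\Psi$ and $h_{\min}\le h_n\le h_{\max}$, and substitutes the expressions of Corollary~\ref{cor:NoZeno} so that $\mathrm{SF}$ cancels.

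The obstacle you flag is real, and the paper does not address it. It inserts $h_{\min}\le h_n\le h_{\max}$ ``for all sufficiently large $n$'', which Corollary~\ref{cor:NoZeno} does not give. That inequality can fail outright: nothing excludes $\psi_n\equiv\psi_{\min}$ with $h_0>h_{\max}$, and then $h_n>h_{\max}$ for every $n$. So your $\liminf/\limsup$ conclusion is exactly what that argument rigorously yields.

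The gap is that you retreat to this weaker reading instead of proving the eventual bound as stated. That bound is true whenever $\psi_{\min}<\Psi$. It needs the recurrence itself rather than the limit bounds on $h_n$.

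Take $x_n=\ln h_n$, with $C_n$ and $\lambda=1-1/k$ as in \eqref{eq:hRecurrence}, and set $\Delta=C_{\max}-C_{\min}=\tfrac{1}{2k}\ln(\Psi/\psi_{\min})$. Then $\ln r_n=x_{n+1}-x_n=C_n-x_n/k$. Also $\ln h_{\min}=kC_{\min}$, $\ln h_{\max}=kC_{\max}$, and $\ln r_{\max}=\Delta=-\ln r_{\min}$. There are three cases.

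\emph{Inside the band.} If $x_n\in[kC_{\min},kC_{\max}]$, then $|\ln r_n|\le\Delta$. Moreover $x_{n+1}$ stays in this interval, because $\lambda k+1=k$.

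\emph{Above the band.} If $x_n>kC_{\max}$, put $e_n=x_n-kC_{\max}>0$. Then $\ln r_n<0$, so the upper bound holds. Next, $e_{n+1}=\lambda e_n+C_n-C_{\max}\in(-\Delta,\lambda e_n]$, which gives $x_{n+1}>kC_{\min}$, so the sequence cannot jump across the band. Finally, $\ln r_n<-\Delta$ holds exactly when $C_n-C_{\min}<e_n/k$, and in that case $e_{n+1}<e_n-\Delta$. Since $e_n$ is nonincreasing while positive, for $\Delta>0$ only finitely many violations can occur.

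\emph{Below the band.} The case $x_n<kC_{\min}$ is symmetric.

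This gives $r_{\min}\le r_n\le r_{\max}$ for all large $n$, with no $\varepsilon$-slack. Only in the degenerate case $\psi_{\min}=\Psi$ is the literal claim false. There $C_n\equiv C_{\max}$ and $r_{\min}=r_{\max}=1$, yet $\ln r_n=-\lambda^n(x_0-kC_{\max})/k\neq0$ unless $h_0=\mathrm{SF}^k\sqrt{\mathrm{tol}/\Psi}$. In that case your asymptotic reading is the correct one.
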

Therefore, the proposed timestep update rule satisfies both the bounded-step-ratio condition introduced in Remark~\ref{re:OrderConsistency} and the finite-timestep condition required in Remark~\ref{re:ResidualBound}.

\subsection{Convergence of Sliding Variable}
Corollaries~\ref{cor:NoZeno} and \ref{cor:StepRatioBound}
establish the timestep and step-ratio boundedness conditions invoked in Remarks~\ref{re:OrderConsistency} and \ref{re:ResidualBound}. Accordingly, the discrete sliding dynamics in \eqref{eq:effectiveErrorDynamics} can be represented as
\begin{equation}
\boldsymbol{s}_{n+1}=\gamma_n^{\mathrm{eff}}\boldsymbol{s}_n+
r_n\boldsymbol{\sigma}_n+\boldsymbol{\eta}_n,
\label{eq:StabilityDynamics}
\end{equation}
where
\begin{equation}
\limsup_{n\to\infty}\|\boldsymbol{\eta}_n\|\leq\kappa h_{\max}^2
\label{eq:ResidualAsymptoticBound}
\end{equation}

\begin{theorem}[\textit{Uniform Ultimate Boundedness of the Sliding Variable}] \label{Theo:UUB}
Under Assumption~\ref{assumption:Lipschitz}, \ref{assumption:PersistentTDE} and $|\gamma|<1$, a finite sensitivity parameter $k>1$ can always be selected such that the worst-case effective reaching magnitude satisfies $\bar{\gamma}<1$. Then, under the TDE-regulated regime characterized by Theorem \ref{theo:UpdateRule}, the sliding variable $\boldsymbol{s}_n$ under the proposed VTDC framework is uniformly ultimately bounded (UUB), with the following ultimate bound:
\begin{equation}
\limsup_{n\rightarrow\infty}{\|\boldsymbol{s}_n\|} \leq \frac{r_{\max}\cdot\mathrm{tol}+\kappa\cdot h_{\max}^2}{1-\bar{\gamma}} \label{eq:sBound}
\end{equation}
where $\bar{\gamma} \triangleq \max \left\{ |1+r_{\max}\cdot(\gamma-1)|, |1+r_{\min}\cdot(\gamma-1)| \right\}$.
\end{theorem}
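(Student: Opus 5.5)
The plan is to treat \eqref{eq:StabilityDynamics} as a scalar-gain linear time-varying recursion driven by bounded inputs, and to close it with a discrete comparison (Grönwall-type) lemma on $\|\boldsymbol s_n\|$.

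\textbf{Step 1: choice of $k$ so that $\bar\gamma<1$.} From \eqref{eq:StepRatioBounds}, $r_{\min}=(\psi_{\min}/\Psi)^{1/(2k)}$ and $r_{\max}=(\Psi/\psi_{\min})^{1/(2k)}$, and both tend to $1$ as $k\to\infty$. Hence $1+r_{\min,\max}(\gamma-1)\to\gamma$.

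Since $|\gamma|<1$, continuity gives a finite $k>1$ with $\bar\gamma<1$. Explicitly, one needs $r_{\max}<2/(1-\gamma)$ and $r_{\min}>0$. The second condition always holds, and the first becomes
\[
k>\frac{\ln(\Psi/\psi_{\min})}{2\ln\!\big(2/(1-\gamma)\big)}.
\]
For any $r\in[r_{\min},r_{\max}]$, the map $r\mapsto 1+r(\gamma-1)$ is affine. Therefore $|\gamma_n^{\mathrm{eff}}|\le\bar\gamma$ whenever $r_n$ lies in this interval, since the maximum of $|\cdot|$ of an affine function over an interval is attained at an endpoint. By Corollary~\ref{cor:StepRatioBound}, this holds for all $n\ge N_0$.

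\textbf{Step 2: bounding the forcing terms.} Taking norms in \eqref{eq:StabilityDynamics} gives
\[
\|\boldsymbol s_{n+1}\|\le\bar\gamma\|\boldsymbol s_n\|+r_n\|\boldsymbol\sigma_n\|+\|\boldsymbol\eta_n\|,\qquad n\ge N_0.
\]
In the TDE-regulated regime of Theorem~\ref{theo:UpdateRule}, $\limsup\|\boldsymbol\sigma_n\|\le \mathrm{SF}^{2k}\mathrm{tol}\le\mathrm{tol}$, and $r_n\le r_{\max}$. So for any $\epsilon>0$ there is $N_1$ with $r_n\|\boldsymbol\sigma_n\|\le r_{\max}\mathrm{tol}+\epsilon$ for $n\ge N_1$. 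Likewise, \eqref{eq:ResidualAsymptoticBound} gives $\|\boldsymbol\eta_n\|\le\kappa h_{\max}^2+\epsilon$ eventually.

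\textbf{Step 3: unrolling the recursion.} Write $w\triangleq r_{\max}\mathrm{tol}+\kappa h_{\max}^2+2\epsilon$ and $N=\max(N_0,N_1)$. Iterating from $N$ yields
\[
\|\boldsymbol s_n\|\le\bar\gamma^{\,n-N}\|\boldsymbol s_N\|+\frac{1-\bar\gamma^{\,n-N}}{1-\bar\gamma}\,w .
\]
Note that $\|\boldsymbol s_N\|$ is finite, since finitely many steps of the recursion from a finite initial condition have bounded coefficients. Letting $n\to\infty$ and then $\epsilon\to0$ gives \eqref{eq:sBound}.

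Uniformity follows because the transient term decays geometrically at rate $\bar\gamma$, which is independent of the initial condition. The time to enter any $\delta$-enlargement of the ball depends only on a bound on $\|\boldsymbol s_N\|$, which is exactly the UUB property.

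\textbf{Main obstacle.} The delicate point is that the step-ratio and timestep bounds, and hence $|\gamma_n^{\mathrm{eff}}|\le\bar\gamma$ and the residual bound on $\boldsymbol\eta_n$, are only asymptotic. There is also a circularity: Remark~\ref{re:OrderConsistency} presupposes bounded $r_n$ to control $\boldsymbol\eta_n$.

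I would handle this by splitting time at $N$. Before $N$, I would only claim finiteness of $\boldsymbol s_n$ over finitely many steps. After $N$, the Corollaries supply all the required bounds and the contraction argument above applies. The use of $\mathrm{tol}$ as the bound on $\|\boldsymbol\sigma_n\|$ relies on the regulated regime (the frozen-coefficient approximation) of Theorem~\ref{theo:UpdateRule}; I would state this explicitly as the hypothesis rather than reprove it.
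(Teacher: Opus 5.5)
Your proof is correct. The choice of $k$ is the same as the paper's: the endpoint argument for $|1+r(\gamma-1)|$, the condition $r_{\max}<2/(1-\gamma)$, and the resulting lower bound on $k$. The difference is in how the ultimate bound is extracted.

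The paper takes $V_n=\tfrac{1}{2}\|\boldsymbol s_n\|^2$ and bounds $\Delta V_n\le\tfrac{1}{2}\Phi(\|\boldsymbol s_n\|)$ by Cauchy--Schwarz. It then finds the positive root $p_{\max}/(1-|\gamma_n^{\mathrm{eff}}|)$ of the concave quadratic $\Phi$, and concludes from $\Delta V_n<0$ outside that ball. Since $\Phi(x)=(|\gamma_n^{\mathrm{eff}}|x+p_{\max})^2-x^2$, this Lyapunov inequality is exactly the square of the norm estimate $\|\boldsymbol s_{n+1}\|\le|\gamma_n^{\mathrm{eff}}|\,\|\boldsymbol s_n\|+p_{\max}$ behind your Step~2. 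So both proofs rest on the same one-step estimate.

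Your unrolling actually derives the $\limsup$ statement, with an explicit geometric transient $\bar\gamma^{\,n-N}\|\boldsymbol s_N\|$. The paper instead jumps from decrease outside the ball to the $\limsup$, without spelling out invariance of the ball or uniform decrease outside larger balls.

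Your $\epsilon$-slack and the split at $N$ also correctly handle the fact that $\|\boldsymbol p_n\|\le p_{\max}$ and $r_n\in[r_{\min},r_{\max}]$ hold only eventually. The paper writes these as plain inequalities. The same device covers a further point: Corollary~\ref{cor:StepRatioBound} is derived from $\liminf/\limsup$ bounds on $h_n$, so strictly it only gives $r_n\in[r_{\min}-\epsilon,r_{\max}+\epsilon]$ eventually. This is harmless because $\bar\gamma<1$ strictly and the final bound is continuous in $\bar\gamma$. The paper's Lyapunov route, in turn, follows the standard discrete sliding-mode template and would extend more naturally to matrix-valued effective gains with a weighted quadratic $V$.

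Two minor remarks.

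\textbf{The circularity you flag does not actually arise.} The explicit solution in Appendix~\ref{app:timestep_properties} gives $\lambda^n\ln h_0+(1-\lambda^n)kC_{\min}\le\ln h_n\le\lambda^n\ln h_0+(1-\lambda^n)kC_{\max}$. Hence $h_n$, and therefore $r_n$, is bounded for every $n$, with constants depending on $h_0$ but not on $\boldsymbol s_n$. Remark~\ref{re:OrderConsistency} therefore applies from the start.

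\textbf{Your uniformity argument is a bit quick.} $N$ depends on $h_0$ and on when the regulated regime sets in. Uniformity therefore holds only over initial data for which $N$ is uniformly bounded. The paper does not address this point either.
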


\begin{proof} 
From \eqref{eq:StabilityDynamics}, define the aggregate perturbation as $\boldsymbol{p}_n\triangleq r_n\boldsymbol{\sigma}_n+\boldsymbol{\eta}_n$. Then, the discrete sliding dynamics in \eqref{eq:StabilityDynamics} can be written as:
\begin{equation}
\boldsymbol{s}_{n+1}=\gamma_n^{\mathrm{eff}}\boldsymbol{s}_n+\boldsymbol{p}_n
\label{eq:CompactStabilityDynamics}
\end{equation}
Using the ultimate bounds on the step ratio, System TDE, and higher-order residual, the aggregate perturbation satisfies:
\begin{equation}
\begin{split}
\|\boldsymbol{p}_n\|&
\leq r_n\|\boldsymbol{\sigma}_n\|+\|\boldsymbol{\eta}_n\|\\&
\leq r_{\max}\mathrm{tol}+\kappa h_{\max}^2\triangleq p_{\max}.
\end{split}
\label{eq:Bmax}
\end{equation}

Following the standard discrete-time Lyapunov analysis in \cite{furuta1990}, consider
\begin{equation}
V_n=\frac{1}{2}\boldsymbol{s}_n^T\boldsymbol{s}_n=\frac{1}{2}\|\boldsymbol{s}_n\|^2.
\label{eq:LyapunovFunction}
\end{equation}
Substituting \eqref{eq:CompactStabilityDynamics} into the one-step Lyapunov difference $\Delta V_n$ yields:
\begin{equation}
\begin{split}
\Delta V_n&= V_{n+1} - V_n = \frac{1}{2}\|\boldsymbol{s}_{n+1}\|^2 - \frac{1}{2}\|\boldsymbol{s}_n\|^2\\
&=\frac{1}{2}\left\|\gamma_n^{\mathrm{eff}}\boldsymbol{s}_n+\boldsymbol{p}_n\right\|^2-\frac{1}{2}\|\boldsymbol{s}_n\|^2\\
&=\frac{1}{2}\Big[\big((\gamma_n^{\mathrm{eff}})^2-1\big)\|\boldsymbol{s}_n\|^2+2\gamma_n^{\mathrm{eff}}
\boldsymbol{s}_n^T\boldsymbol{p}_n+\|\boldsymbol{p}_n\|^2\Big].
\end{split}
\label{eq:LyapunovDifference}
\end{equation}

Using the Cauchy--Schwarz inequality, $|\gamma_n^{\mathrm{eff}}\boldsymbol{s}_n^T\boldsymbol{p}_n|\leq|\gamma_n^{\mathrm{eff}}|\|\boldsymbol{s}_n\|\|\boldsymbol{p}_n\|$, we obtain
\begin{equation}
\Delta V_n\leq
\frac{1}{2}\underbrace{\left[-(1-|\gamma_n^{\mathrm{eff}}|^2)\|\boldsymbol{s}_n\|^2+
2|\gamma_n^{\mathrm{eff}}|p_{\max}\|\boldsymbol{s}_n\|+p_{\max}^2\right]}_{\Phi(\|\boldsymbol{s}_n\|)}.
\label{eq:LyapunovUpperBound}
\end{equation}

For $1-|\gamma_n^{\mathrm{eff}}|^2 > 0$ (which is equivalent to $|\gamma_n^{\mathrm{eff}}| < 1$), the function $\Phi(\|\boldsymbol{s}_n\|)$ is a concave quadratic function. Setting $\Phi(\|\boldsymbol{s}_n\|) = 0$, the positive root is:

\begin{equation}
\|\boldsymbol{s}_n\| =\frac{p_{\max}(|\gamma_n^{\mathrm{eff}}|+1)}{1-|\gamma_n^{\mathrm{eff}}|^2}=\frac{p_{\max}}{1-|\gamma_n^{\mathrm{eff}}|}.\label{eq:BoundSol}
\end{equation}

Therefore, whenever $|\gamma_n^{\mathrm{eff}}|<1$, $\Delta V_n<0$ for all $\|\boldsymbol{s}_n\|$ greater than the bound in \eqref{eq:BoundSol}. Consequently, the sliding variable is driven toward an ultimate bounded region whose size depends on the worst-case magnitude of the effective reaching parameter. Recall from Section~\ref{sec:ErrorDynamics} that $\gamma_n^{\mathrm{eff}}=1+r_n(\gamma-1)$.

By Corollary~\ref{cor:StepRatioBound}, the step ratio satisfies $r_{\min}\leq r_n\leq r_{\max}$, while $\gamma$ is a prescribed reaching parameter satisfying $|\gamma|<1$. Since $|1+r(\gamma-1)|$ is a convex function of $r$, its maximum over the bounded interval $[r_{\min},r_{\max}]$ occurs at one of the two endpoints. We therefore define the worst-case effective reaching magnitude as:
\begin{equation}
\bar{\gamma}\triangleq
\max\left\{\left|1+r_{\min}(\gamma-1)\right|,\left|1+r_{\max}(\gamma-1)\right|\right\}\label{eq:GammaBar}
\end{equation}

If $\bar{\gamma}<1$, then $|\gamma_n^{\mathrm{eff}}|\leq\bar{\gamma}<1$ for every admissible step ratio. Therefore,
\begin{equation}
\limsup_{n\rightarrow\infty}\|\boldsymbol{s}_n\|
\leq\frac{p_{\max}}{1-\bar{\gamma}}
=\frac{r_{\max}\mathrm{tol}+\kappa h_{\max}^2}{1-\bar{\gamma}}
\label{eq:UUBResult}
\end{equation}
Hence, the sliding variable is uniformly ultimately bounded provided that $\bar{\gamma}<1$.

To guarantee $\bar{\gamma}<1$, it is sufficient to require:
\begin{equation}
    -1<1+r_n(\gamma-1)<1
\end{equation}
for all admissible $r_n$, since $r_n>0$ and $\gamma <1$, the upper inequality is automatically satisfied. The lower inequality is most restrictive at $r_n = r_{\max}$, because $1 + r_{n}(\gamma - 1)$ decreases monotonically with $r_n$. Hence, it is sufficient to require $1 + r_{\max}(\gamma - 1) > -1$, which rearranges as follows with Corollary~\ref{cor:StepRatioBound}:
\begin{equation}
r_{\max}=\left( \frac{\Psi}{\psi_{\min}} \right)^{\frac{1}{2k}} < \frac{2}{1-\gamma}
\label{eq:gammaCondition}
\end{equation}
Taking the natural logarithm of both sides isolates the sensitivity parameter $k$, yielding the conditional stability criterion:
\begin{equation}
k > \frac{\ln(\Psi/\psi_{\min})}{2\ln(2/(1-\gamma))}
\label{eq:kCondition}
\end{equation}
Since $0<\psi_{\min}\leq\Psi<\infty$ and $|\gamma|<1$, the right-hand side of \eqref{eq:kCondition} is finite. Hence, a finite $k>1$ satisfying this condition always exists. Therefore, choosing $k$ according to \eqref{eq:kCondition} limits the worst-case step ratio sufficiently to ensure $\bar{\gamma} < 1$, thereby establishing the UUB result in \eqref{eq:sBound}.
\end{proof}

\textbf{Practical Feasibility:} Although Theorem \ref{Theo:UUB} establishes conditional stability, the requirement in \eqref{eq:kCondition} is mild. Rearranging the condition yields a maximum tolerable variation ratio of $\Psi/\psi_{\min} < \big( 2/(1-\gamma) \big)^{2k}$. With the parameters used in our simulations ($\gamma = 0.5$, $k = 24$), this allowable ratio reaches an immense margin of $\approx 7.92 \times 10^{28}$. Thus, for the parameter values considered in this study, the sufficient stability condition admits a very large range of TDE-coefficient variation.

\begin{corollary}[\textit{Steady-State Convergence Radius}] \label{cor:SteadyStateBound} 
Suppose that the timestep adaptation reaches a steady-state regime in which $h_n\to h_{ss}>0$. Then $r_n =h_{n+1}/h_n \to 1$, and consequently $\gamma^{\mathrm{eff}}_n \to \gamma$.
 Accordingly, the discrete sliding dynamics in \eqref{eq:effectiveErrorDynamics} asymptotically reduce to
\begin{equation}
\boldsymbol{s}_{n+1}=\gamma\boldsymbol{s}_n+\boldsymbol{\sigma}_n+\mathcal{O}(h_{\mathrm{ss}}^2).
\label{eq:SteadyStateDynamics}
\end{equation}
Here, $\mathcal{O}(h_{\mathrm{ss}}^2)$ denotes the second-order higher-order residual in the local small-step sense as $h_n\to h_{ss}$. Using the asymptotic TDE regulation $\|\boldsymbol{\sigma}_n\|\leq\mathrm{tol}$ and boundedness of residual  $\mathcal{O}(h_{\mathrm{ss}}^2)\leq\kappa\cdot h_{\mathrm{ss}}^2$, the corresponding steady-state ultimate bound becomes:
\begin{equation}
\limsup_{n\to\infty}\|\boldsymbol{s}_n\|
\leq\frac{\mathrm{tol}}{1-|\gamma|}+\frac{\kappa\cdot h_{\mathrm{ss}}^2}{1-|\gamma|}.
\label{eq:SteadyStateBound}
\end{equation}
Thus, once the timestep variation settles, the steady-state sliding-variable bound approaches the nominal fixed-step result to leading order, with only a second-order correction associated with the finite control interval.
\end{corollary}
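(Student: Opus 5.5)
The argument specializes the proof of Theorem~\ref{Theo:UUB} to the steady-state regime. The first step is the limit claims. If $h_n\to h_{ss}>0$, then $h_{n+1}\to h_{ss}$ as well, so $r_n=h_{n+1}/h_n\to h_{ss}/h_{ss}=1$ by continuity of the quotient, since the denominator stays away from zero. Hence $\gamma_n^{\mathrm{eff}}=1+r_n(\gamma-1)\to\gamma$ and $r_n\boldsymbol\sigma_n=\boldsymbol\sigma_n+(r_n-1)\boldsymbol\sigma_n$. The correction $(r_n-1)\boldsymbol\sigma_n$ is $o(1)$ times a bounded quantity. Boundedness holds because $\|\boldsymbol\sigma_n\|\le\Psi h_n^2\to\Psi h_{ss}^2$ by Lemma~\ref{lemma:BoundnessOfError}. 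By Remark~\ref{re:ResidualBound}, $\|\boldsymbol\eta_n\|\le\kappa h_n^2$, and $\limsup\kappa h_n^2=\kappa h_{ss}^2$. Rewriting \eqref{eq:effectiveErrorDynamics} therefore gives \eqref{eq:SteadyStateDynamics}, with all vanishing discrepancies $(\gamma_n^{\mathrm{eff}}-\gamma)\boldsymbol s_n$ and $(r_n-1)\boldsymbol\sigma_n$ collected into an explicit perturbation $\boldsymbol\epsilon_n$. Dropping $\boldsymbol\epsilon_n$ needs the boundedness of $\boldsymbol s_n$, which is already guaranteed by Theorem~\ref{Theo:UUB}.

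Next, rather than redo the quadratic Lyapunov argument, I will use a direct norm recursion, because it produces exactly the stated form. Taking norms gives
\[
\|\boldsymbol s_{n+1}\|\le|\gamma_n^{\mathrm{eff}}|\,\|\boldsymbol s_n\|+r_n\|\boldsymbol\sigma_n\|+\|\boldsymbol\eta_n\|.
\]
Fix $\varepsilon>0$. For all $n\ge N(\varepsilon)$, the following hold: $|\gamma_n^{\mathrm{eff}}|\le|\gamma|+\varepsilon<1$; $r_n\le1+\varepsilon$; $\|\boldsymbol\sigma_n\|\le\mathrm{tol}+\varepsilon$, using the asymptotic TDE regulation from Theorem~\ref{theo:UpdateRule}, whose limit $\mathrm{SF}^{2k}\mathrm{tol}$ is below $\mathrm{tol}$; and $\|\boldsymbol\eta_n\|\le\kappa h_{ss}^2+\varepsilon$. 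Then I apply the standard lemma for scalar sequences. If $a_{n+1}\le\rho a_n+b$ with $\rho<1$, unrolling gives $a_n\le\rho^{n-N}a_N+b(1-\rho^{n-N})/(1-\rho)$, and so $\limsup a_n\le b/(1-\rho)$. With the constants above, this yields
\[
\limsup_{n\to\infty}\|\boldsymbol s_n\|\le\frac{(1+\varepsilon)(\mathrm{tol}+\varepsilon)+\kappa h_{ss}^2+\varepsilon}{1-|\gamma|-\varepsilon}.
\]
Letting $\varepsilon\to0$ gives \eqref{eq:SteadyStateBound}. This is the same conclusion as \eqref{eq:UUBResult} with $\bar\gamma$ replaced by $|\gamma|$, $r_{\max}$ by $1$, and $h_{\max}$ by $h_{ss}$. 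The closing interpretive sentence of the corollary then follows: with $r_n\equiv1$, the fixed-step reaching law \eqref{eq:ControlObjective} perturbed by $\boldsymbol\sigma_n$ gives the leading term $\mathrm{tol}/(1-|\gamma|)$, and $\kappa h_{ss}^2/(1-|\gamma|)$ is the second-order correction.

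The main obstacle is rigor around the hypothesis $\|\boldsymbol\sigma_n\|\le\mathrm{tol}$, which Theorem~\ref{theo:UpdateRule} provides only under the frozen-coefficient approximation. I will state it explicitly as the standing "TDE-regulated regime" assumption, as the corollary itself does. A useful cross-check is that in steady state $r_n\to1$ combined with \eqref{eq:StepRatio2} forces $\|\boldsymbol\sigma_n\|\to\mathrm{SF}^{2k}\mathrm{tol}<\mathrm{tol}$ directly, which gives the regulation independently of the frozen-coefficient argument. A secondary point is making sure the $\mathcal O(h_{ss}^2)$ constant is the same $\kappa$ as in Remark~\ref{re:ResidualBound}. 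This holds because $r_n\to1$ lies within the bounded-ratio condition of Remark~\ref{re:OrderConsistency}.
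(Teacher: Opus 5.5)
Your argument is correct, but it reaches the bound by a somewhat different route than the paper. The paper gives no separate proof. The corollary is read off from Theorem~\ref{Theo:UUB} by setting $r_{\min}=r_{\max}=1$ (so $\bar\gamma=|\gamma|$), $h_{\max}=h_{ss}$ and $p_{\max}=\mathrm{tol}+\kappa h_{ss}^2$ in the quadratic Lyapunov argument. The vanishing terms are absorbed into the phrase ``asymptotically reduce to'', and the regulation $\|\boldsymbol\sigma_n\|\le\mathrm{tol}$ is taken from the frozen-coefficient limit of Theorem~\ref{theo:UpdateRule}.

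You instead run the norm recursion $\|\boldsymbol s_{n+1}\|\le|\gamma_n^{\mathrm{eff}}|\,\|\boldsymbol s_n\|+r_n\|\boldsymbol\sigma_n\|+\|\boldsymbol\eta_n\|$ with $\varepsilon$-slack and the scalar comparison lemma. This buys three things. It makes the passage to the $\limsup$ explicit, whereas the Lyapunov step only shows $\Delta V_n<0$ outside a ball and leaves the ultimate-bound conclusion to the reader. It works with $\gamma_n^{\mathrm{eff}}$ and $r_n$ directly instead of silently replacing them by their limits. And it shows that no condition on $k$ is needed here, since $r_n\to1$ alone gives $|\gamma_n^{\mathrm{eff}}|\le|\gamma|+\varepsilon<1$ for large $n$.

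For the same reason, your first-paragraph appeal to Theorem~\ref{Theo:UUB} for boundedness of $\boldsymbol s_n$ is unnecessary. It would also import the hypothesis $\bar\gamma<1$, which the corollary does not assume; the recursion itself yields boundedness.

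Your cross-check that $r_n\to1$ together with \eqref{eq:StepRatio2} forces $\|\boldsymbol\sigma_n\|\to\mathrm{SF}^{2k}\mathrm{tol}$ is valid and is a genuine strengthening. It derives the TDE regulation from the steady-state hypothesis and the exact update rule, without the frozen-coefficient approximation, and it even sharpens the leading term to $\mathrm{SF}^{2k}\mathrm{tol}/(1-|\gamma|)$.

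What the paper's route buys in exchange is economy and a uniform picture: the steady-state radius appears directly as the $r_n\equiv1$, $h_{\max}=h_{ss}$ instance of \eqref{eq:UUBResult}, which makes the comparison with the general transient radius immediate.
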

\section{Practical Implementation}\label{sec:PracticalImple}
\subsection{VTDC Implementation Procedure}
\begin{figure}
  \includegraphics[width=1\linewidth]{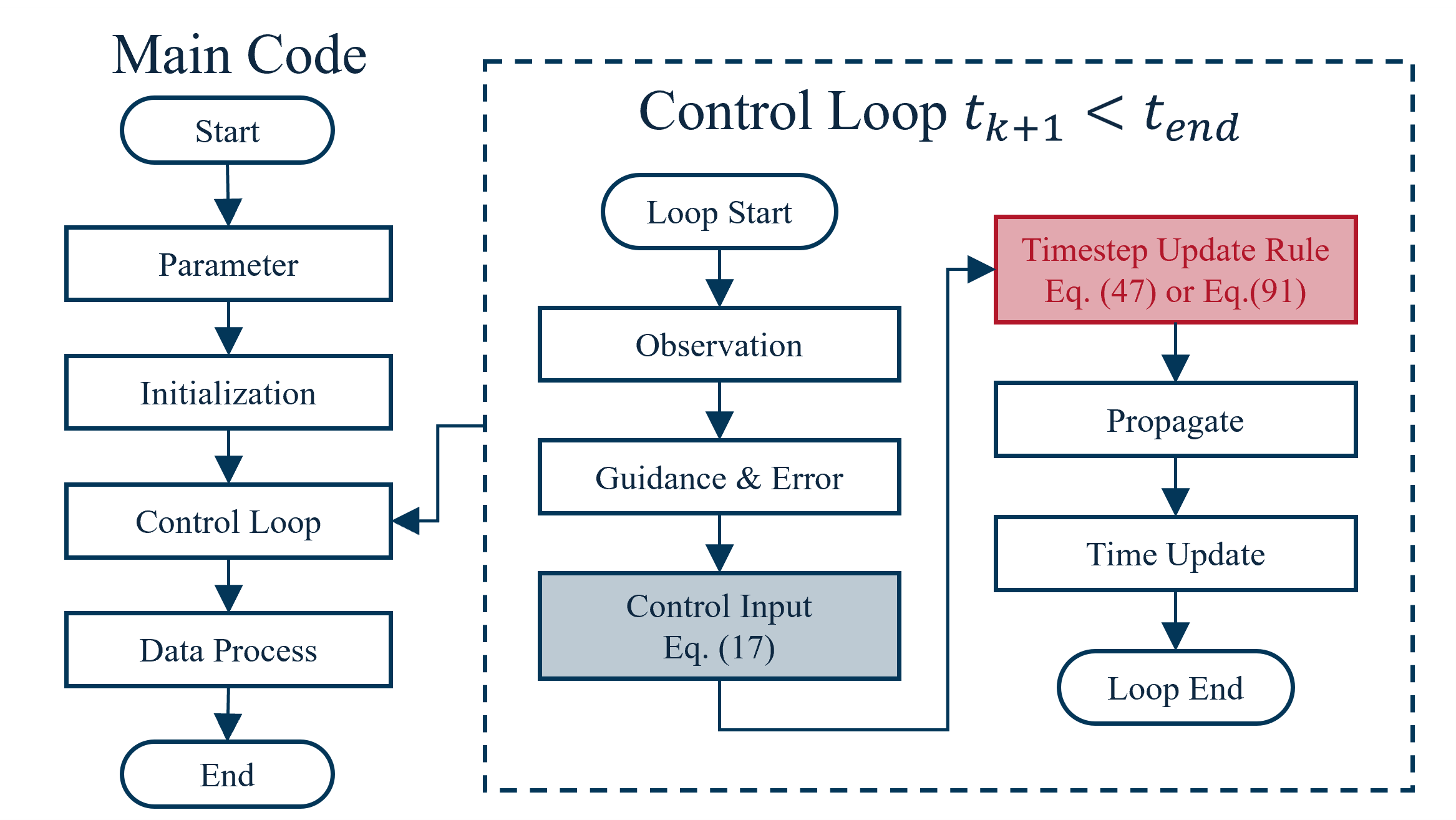}
  \caption{Computational flow of the proposed VTDC implementation, including control-input computation and adaptive determination of the subsequent control-update interval}
  \label{fig:VTDCAlgorithm}
\end{figure}
The proposed VTDC follows the onboard control-update sequence illustrated in Fig.~\ref{fig:VTDCAlgorithm}, which is implemented in MATLAB here for numerical validation. At each control update instant $t_n$, the current state is first observed, and the guidance command and corresponding tracking error are evaluated. The control input is then computed from \eqref{eq:TotalControl} and held constant over the subsequent control interval according to the zero-order-hold implementation. The next control interval is determined algebraically by the timestep update rule in the practical form introduced in \eqref{eq:PracticalUpdateRule}. The system dynamics are subsequently propagated over the determined interval, after which the simulation time is advanced to the next control update instant $t_{n+1}$. This sequence is repeated until the terminal simulation time is reached.

It is important to distinguish the control-update interval from the numerical integration step used to propagate the continuous-time spacecraft dynamics. VTDC adaptively varies only the former, whereas the integration step should remain sufficiently small and fixed so that numerical stepsize adaptation does not interfere with the intended control scheduling behavior. Accordingly, adaptive-step numerical integrators such as ode45 were not used in the simulations. Instead, the spacecraft dynamics were propagated using a fixed-step fifth-order Runge--Kutta method with an integration step of $10^{-4}$~s, while the control-update interval was varied independently according to \eqref{eq:UpdateRule}.

\subsection{Practical System-TDE Estimation}
In the System TDE defined in \eqref{eq:TimeDelayError}, the Model TDE can be evaluated from the known nominal dynamics and measured states. In contrast, the Disturbance TDE depends on the variation of the lumped uncertainty $\boldsymbol{d}(t)$, which is not directly available without additional estimation. To avoid introducing a separate disturbance observer, we estimate this variation from the robust control-input variation. The objective is not exact disturbance reconstruction, but a computationally inexpensive scheduling surrogate constructed from controller-available quantities. For this scheme, we assume that the actuator-generated control variation can compensate for the interval-wise variation of the nominal dynamics and disturbances.

It is not $\boldsymbol\sigma$ itself but its magnitude, $\|\boldsymbol\sigma\|$, that is used to determine the time step. In this case, the following holds for $\|\boldsymbol\sigma\|$.
\begin{equation}
    \|\boldsymbol\sigma_n\|=\|\boldsymbol\sigma^m_n+\boldsymbol\sigma^d_n\|\leq\|\boldsymbol\sigma^m_n\|+\|\boldsymbol\sigma^d_n\| \label{eq:TDEtri}
\end{equation}

Equation~\eqref{eq:TDEtri} provides the key principle for constructing the practical scheduling surrogate. By separately evaluating the magnitudes of the model and disturbance TDE components and adding them, possible cancellation between the two vector components is deliberately neglected. Therefore, for given component estimates, this additive construction biases the estimated TDE magnitude toward a larger value. Since the timestep update law is monotonically decreasing with respect to the TDE magnitude, such overestimation results in a shorter, and hence more conservative, subsequent control interval.

\subsubsection{Estimation of Model TDE}
The equivalent control $\boldsymbol{u}^{eq}_n$ is defined as $\boldsymbol{u}^{eq}_n = - (\boldsymbol{Cg})^{-1}\boldsymbol{C}\boldsymbol{f}_n$ as shown in \eqref{eq:EquivalentControl}, so $\boldsymbol{C}\boldsymbol{f}_n$ can be expressed in terms of $\boldsymbol{u}^{eq}_n$ as follows.

\begin{equation}
\boldsymbol{C}\boldsymbol{f}_n=-(\boldsymbol{Cg})\boldsymbol u^{eq}_n
\end{equation}

Based on this, the model TDE can be approximated by $\boldsymbol{u}^{eq}(t)$ as follows:
\begin{equation}
\begin{split}
\|\boldsymbol\sigma^m_n\|&=\frac{h_n}{2}\|\boldsymbol{C}\Delta\boldsymbol{f}_n\|=\frac{h_n}{2}\Big\|\boldsymbol{Cg}\Delta\boldsymbol u^{eq}_n\Big\|\\
&\leq \frac{\|\boldsymbol{Cg}\|}{2}h_n\|\Delta\boldsymbol u^{eq}_n\|=\|\hat{\boldsymbol\sigma}^m_n\|
\end{split}\label{eq:ModelTDEEstimation}
\end{equation}

\subsubsection{Practical Surrogate for Disturbance TDE}

From the reduced dynamics in \eqref{eq:ReducedDynamics}, the
interval-wise variations satisfy
\begin{equation}
\Delta\dot{\boldsymbol{s}}_n =\boldsymbol{Cg}\left(\Delta\boldsymbol{u}^d_n+\Delta\boldsymbol{d}_n\right)\quad\text{or}\quad
\boldsymbol{Cg}\Delta\boldsymbol{d}_n=\Delta\dot{\boldsymbol{s}}_n-\boldsymbol{Cg}\Delta\boldsymbol{u}^d_n.
\end{equation}
Although \(\Delta\boldsymbol{u}^d_n\) does not directly represent \(\Delta\boldsymbol{d}_n\) when $\Delta\dot{\boldsymbol{s}}_n\neq\boldsymbol{0}$, the robust control input $\boldsymbol{u}^d$ is designed to compensate for the lumped uncertainty while driving the sliding dynamics toward the sliding manifold. Its interval-wise variation therefore reflects the overall local compensation demand, including the effect of disturbance variation. As the sliding dynamics approach the quasi-sliding regime and $\Delta\dot{\boldsymbol{s}}_n$ becomes small, the contribution of the transient sliding dynamics diminishes, and $\Delta\boldsymbol{u}^d_n$ more directly reflects $\Delta\boldsymbol{d}_n$. Accordingly, to avoid direct disturbance measurement or additional derivative estimation, we adopt the following computationally inexpensive scheduling surrogate:
\begin{equation}
\|\hat{\boldsymbol{\sigma}}^d_n\| \triangleq \frac{h_n}{2}\|\boldsymbol{Cg}\|
\|\Delta\boldsymbol{u}^d_n\|.
\label{eq:DisturbanceTDEEstimation}
\end{equation}

\subsubsection{Practical System-TDE Surrogate}\label{cor:SystemTDE}
For onboard implementation, the individual System-TDE components are replaced by quantities available from the controller. Following the additive construction in \eqref{eq:TDEtri}, the practical System-TDE surrogate is defined by combining the equivalent- and robust-control variations as follows:
\begin{equation}
    \|\boldsymbol{\hat{\sigma}}_n\| =\|\boldsymbol{\hat{\sigma}}^m_n\|+\|\boldsymbol{\hat{\sigma}}^d_n\| = \frac{\|\boldsymbol{Cg}\|}{2} h_n\left( \|\Delta \boldsymbol{u}^{eq}_n\| + \|\Delta \boldsymbol{u}^d_n\| \right) \label{eq:SystemTDEEstimation}
\end{equation}

The construction in \eqref{eq:SystemTDEEstimation} intentionally favors conservative scheduling in two respects. First, the induced-norm bound in \eqref{eq:ModelTDEEstimation} provides an upper bound for the model-related contribution. Second, following \eqref{eq:TDEtri}, the two component magnitudes are added separately, thereby neglecting any cancellation between the model and disturbance TDE components. Since the timestep update law decreases monotonically with the TDE magnitude, a larger surrogate produces a shorter subsequent control interval. Although the disturbance-related term does not exactly coincide with the disturbance variation during transient sliding motion, these conservative components provide practical margin against the resulting discrepancy. The resulting conservatism is assessed numerically in Fig.~\ref{fig:TDEValidation} of Section~\ref{sec:NumDef}, where $\|\boldsymbol{\hat{\sigma}}_n\|$ remains above the actual System TDE for the considered maneuver. Using this estimated System TDE, the practical timestep update rule is rewritten as follows:

\begin{equation}
h_{n+1}=\mathrm{SF}\cdot\left(\frac{\mathrm{tol}}{\|\boldsymbol{\hat{\sigma}}_n\|+\epsilon}\right)^{\frac{1}{2k}}\cdot h_n,
\label{eq:PracticalUpdateRule}
\end{equation}
where $\epsilon\geq0$ is an optional safeguard constant that prevents excessive timestep growth when $\|\boldsymbol{\hat{\sigma}}_n\|$ approaches zero under nearly stationary operating conditions. For the simulations in this study, $\epsilon=0$ was used because the scheduling surrogate remained sufficiently nonzero throughout the considered maneuver.

\section{Numerical Demonstration} \label{sec:NumDef}
This section evaluates the VTDC  framework proposed in Section~\ref{sec:framework} and \ref{sec:PracticalImple} through a nonlinear three-axis spacecraft attitude maneuver under representative model, disturbance, and actuator uncertainties. Numerical simulations are used to evaluate the closed-loop tracking performance, practical TDE estimation, timestep-dependent error scaling, adaptive timestep behavior, and consistency with the derived timestep and stability bounds.

\subsection{Problem Statement: Spacecraft Attitude Control}
To evaluate the performance of the proposed VTDC, we consider the attitude control of a rigid-body spacecraft. The attitude error is described by the error quaternion $\boldsymbol{q}^e = [q_0^e, (\boldsymbol{q}_v^e)^T]^T \in \mathbb{R}^4$, defined as the quaternion product between the desired orientation $\boldsymbol{q}^d$ and the body orientation $\boldsymbol{q}^b$. The attitude-error quaternion is defined using the standard
quaternion composition convention~\cite{markley2014fundamentals}:
\begin{equation}
\boldsymbol{q}^e = (\boldsymbol{q}^d)^{*} \otimes \boldsymbol{q}^b\label{eq:err_quat}
\end{equation}
In \eqref{eq:err_quat}, the operator ``$\otimes$'' denotes the quaternion multiplication. For any two quaternions $\boldsymbol{p}=[p_0, \boldsymbol p_v^T]^T$ and $\boldsymbol{q}=[q_0, \boldsymbol q_v^T]^T$, their product is evaluated as $\boldsymbol{p} \otimes \boldsymbol{q} = \big[ p_0 q_0 - \boldsymbol{p}_v \cdot \boldsymbol{q}_v, \ (p_0 \boldsymbol{q}_v + q_0 \boldsymbol{p}_v + \boldsymbol{p}_v \times \boldsymbol{q}_v)^T \big]^T$. The superscript ``$(\cdot)^*$'' denotes the quaternion conjugate, defined as $\boldsymbol{q}^* = [q_0, -\boldsymbol{q}_v^T]^T$. 

Furthermore, because unit quaternions double-cover $SO(3)$, the error quaternions $\boldsymbol q^e$ and $-\boldsymbol q^e$ represent the same physical attitude. To avoid the unwinding phenomenon\cite{bhat2000unwinding}, we select the equivalent quaternion representation with a nonnegative scalar component, consistent with the geometric quaternion formulation in \cite{lopez2021sliding}:
\begin{equation}
\boldsymbol{q}^e_{\text{mod}} =\begin{cases}\boldsymbol{q}^e, & \text{if } q_0^e \ge 0 \\-\boldsymbol{q}^e, & \text{if } q_0^e < 0\end{cases}\label{eq:Unwinding}
\end{equation}
This representation confines the principal attitude error to $[0,\pi]$ and selects a shortest rotational path. Hereafter, $\boldsymbol{q}^e$ denotes the sign-adjusted representation in \eqref{eq:Unwinding}.

The error angular velocity $\boldsymbol{\omega}^e \in \mathbb{R}^3$ represents the body angular velocity relative to the desired angular velocity, with both terms expressed in the body-fixed frame.
\begin{equation}
\boldsymbol{\omega}^e = \boldsymbol{\omega}^b - \boldsymbol{R}\boldsymbol{\omega}^d\label{eq:ErrorAngVel}
\end{equation}
where $\boldsymbol{\omega}^b$ denotes the angular velocity of the spacecraft body frame and $\boldsymbol{\omega}^d$ is the desired angular velocity relative to the inertial frame. Here, the rotation matrix $\boldsymbol{R}$ serves to transform the desired angular velocity into the body frame, ensuring that the error dynamics are consistently formulated within the body-fixed coordinate system.

Based on the error states defined above, we establish the state vector $\boldsymbol{x} \in \mathbb{R}^6$ in \eqref{eq:StateVector} to describe the general behavior of the system as follows:
\begin{equation}\boldsymbol{x} = \begin{bmatrix} \boldsymbol{q}_v^e \\ \boldsymbol{\omega}^e \end{bmatrix} \in\mathbb{R}^6
\end{equation}

The derivative of this state vector can be defined in terms of the standard quaternion kinematics and rigid-body rotational dynamics of the spacecraft \cite{markley2014fundamentals}:

\begin{equation}
\begin{split}
    \dot{\boldsymbol{q}}_v^e &= \frac{1}{2}(q_0^e \boldsymbol{I} + (\boldsymbol{q}_v^e)^\times)\boldsymbol{\omega}^e\\
    \boldsymbol{J}\dot{\boldsymbol{\omega}}^b &=\boldsymbol{J}(\dot{\boldsymbol{\omega}}^e+\boldsymbol{\omega}^e \times \boldsymbol{R}\boldsymbol{\omega}^d - \boldsymbol{R}\dot{\boldsymbol{\omega}}^d)\\
    &=-\boldsymbol{\omega}^b \times \boldsymbol{J}\boldsymbol{\omega}^b +\boldsymbol{u} + \boldsymbol{d}
\end{split}
\end{equation}

In this formulation, the operator $(\cdot)^\times$ denotes the skew-symmetric matrix used to represent the vector cross product as a matrix multiplication, and $\boldsymbol{J}$ denotes the inertia matrix of the spacecraft. By combining these kinematic and rotational dynamic equations, we can define the nominal dynamics $\boldsymbol{f}(\boldsymbol{x},t)$ of the general reaching dynamics in \eqref{eq:ReachingDynamics} as follows:

\begin{equation}
\boldsymbol{f}(\boldsymbol{x}, t) = \begin{bmatrix} \frac{1}{2}(q_0^e \boldsymbol{I} + (\boldsymbol{q}_v^e)^\times)\boldsymbol{\omega}^e \\ \boldsymbol{J}^{-1}(-\boldsymbol{\omega}^b \times \boldsymbol{J}\boldsymbol{\omega}^b) - (\boldsymbol{\omega}^e \times \boldsymbol{R}\boldsymbol{\omega}^d - \boldsymbol{R}\dot{\boldsymbol{\omega}}^d) \end{bmatrix}
\end{equation}

The nominal dynamics $\boldsymbol{f}(\boldsymbol{x}, t)$ encapsulates all known nonlinearities of the system. The VTDC then utilizes this nominal model in conjunction with delayed data to effectively cancel the lumped uncertainty $\boldsymbol{d}$, ensuring robust tracking performance.

To implement the robust control framework in Section~\ref{sec:framework}, the sliding variable $\boldsymbol{s} \in \mathbb{R}^3$ is defined to characterize the desired error convergence behavior. Following the quaternion-based sliding-variable construction of Lopez and Slotine \cite{lopez2021sliding}, we consider:
\begin{equation}
\boldsymbol{s} = \boldsymbol{\omega}^e + \boldsymbol{\beta}\mathrm{sgn}_+(q_0^e) \boldsymbol{q}_v^e
\end{equation}
where $\boldsymbol{\beta} = \textrm{diag}[\beta_1, \beta_2, \beta_3] \succ 0$ is a positive definite diagonal gain matrix. As discussed in the previous section, we select the equivalent quaternion representative with $q_0^e \ge 0$. Consequently, the term $\mathrm{sgn}_+(q_0^e)$ can be simplified to unity without loss of generality. To align this definition with the state-space formulation $\boldsymbol{s} = \boldsymbol{Cx}$ in \eqref{eq:SlidingVariable}, the sliding variable is expressed using the sliding manifold gain matrix $\boldsymbol{C} \in \mathbb{R}^{3\times6}$ as follows:

\begin{equation}
\boldsymbol{s} =  \boldsymbol{\omega}^e + \boldsymbol{\beta}\boldsymbol{q}_v^e = \underbrace{\begin{bmatrix} \boldsymbol{\beta} & \boldsymbol{I}_3 \end{bmatrix}}_{\boldsymbol{C}} \underbrace{\begin{bmatrix} \boldsymbol{q}_v^e \\ \boldsymbol{\omega}^e \end{bmatrix}}_{\boldsymbol{x}} \label{eq:sliding_def}
\end{equation}
Unlike the canonical representation in Section~\ref{sec:framework}, $\boldsymbol{\omega}^e$  is not the time derivative of $\boldsymbol{\beta}\boldsymbol{q}_v^e$; however, under the manifold in \eqref{eq:sliding_def}, $\boldsymbol{s}=\boldsymbol{0}$  yields the quaternion-based stable reduced dynamics established in \cite{lopez2021sliding}. The following uncertainties are present in this attitude control simulation.

\begin{enumerate}[label=\textcircled{\small\arabic*}]
    \item \textbf{Moment of Inertia Uncertainty:} 
    The spacecraft's moment of inertia is composed of a nominal term $\boldsymbol{J}_0$ and an uncertainty $\Delta\boldsymbol{J}$, expressed as $\boldsymbol{J}=\boldsymbol{J}_0+\Delta\boldsymbol{J}\in\mathbb{R}^{3\times3}$. This yields uncertainty $\boldsymbol{d}_{\Delta \boldsymbol{J}}$.
    
    \item \textbf{Disturbance Torque:} Disturbance torque $\boldsymbol{\tau}_{d}$ is the sum of three types of disturbance torques acting on the spacecraft during the simulation: Gravity Gradient Torque ($\boldsymbol{\tau}_{gg}$)  \cite{markley2014fundamentals}, Constant Bias Torque  ($\boldsymbol{\tau}_{bias}$), and Sine Torque ($\boldsymbol{\tau}_{sin}$).
    \begin{align*}
    \boldsymbol{\tau}_{gg} &= 3 \frac{\mu}{\|\boldsymbol{r}_I\|^3} (\hat{\boldsymbol{r}}_b \times (\boldsymbol{J} \hat{\boldsymbol{r}}_b))\\ 
    \boldsymbol{\tau}_{bias} &= 0.01 \times \begin{bmatrix}0.5& -0.3& 0.2\end{bmatrix}^T\\    
    \boldsymbol{\tau}_{sin} &= 0.01 \times \begin{bmatrix} \sin 0.1t & \sin 0.2t & \sin 0.3t \end{bmatrix}^T
    \end{align*}
    where $\mu$ is the Earth's standard gravitational parameter, and $\|\boldsymbol{r}_I\|$ is the geocentric distance of the spacecraft. The unit vector $\hat{\boldsymbol{r}}_b$ represents the direction from the center of the Earth to the spacecraft center of mass, transformed into the body-fixed coordinate system. 
    \item \textbf{Actuator Failure: }In the simulation, actuator performance is degraded such that only approximately 90\% of the original control input is applied.
    $$ \boldsymbol{J} \dot{\boldsymbol{\omega}}^b = -\boldsymbol{\omega}^b \times \boldsymbol{J} \boldsymbol{\omega}^b + \underbrace{(\boldsymbol{I} - \boldsymbol{a}) \boldsymbol{u}}_{\text{Actuator fail}} + \boldsymbol{d} $$
    Here, the Actuator Failure Parameter $\boldsymbol{a}$ is defined as follows:
    $$ \boldsymbol{a} = 0.1 \times \left( \boldsymbol{I} + 0.1 \cdot \text{diag} \begin{bmatrix} \sin(\frac{t}{10}) & \cos(\frac{t}{15}) & \sin(\frac{t}{20}) \end{bmatrix} \right) $$
\end{enumerate}
By mapping the attitude dynamics and uncertainties into the general reaching dynamics form \eqref{eq:ReachingDynamics}, we obtain the following formulated expression:
\begin{equation}
\begin{split}
\dot{\boldsymbol s}(t) &= \underbrace{\begin{bmatrix} \boldsymbol{\beta} & \boldsymbol{I}_3 \end{bmatrix}}_{\boldsymbol C} \Bigg(\underbrace{\begin{bmatrix} \boldsymbol{0}_{3\times3} \\ \boldsymbol{J}_0^{-1} \end{bmatrix}}_{\boldsymbol g}\Big[\boldsymbol u(t)+\underbrace{\boldsymbol{\tau}_d+\boldsymbol{d}_{\Delta \boldsymbol{J}}-\boldsymbol{a} \boldsymbol{u}}_{\boldsymbol d(t)}\Big]\\
&+\underbrace{\begin{bmatrix} \frac{1}{2}(q_0^e \boldsymbol{I} + (\boldsymbol{q}_v^e)^\times)\boldsymbol{\omega}^e \\ \boldsymbol{J}_0^{-1}\big(-\boldsymbol{\omega}^b\times \boldsymbol{J}_0\boldsymbol{\omega}^b\big) - \big(\boldsymbol{\omega}^e\times \boldsymbol{R}\boldsymbol{\omega}^d - \boldsymbol{R}\dot{\boldsymbol{\omega}}^d\big) \end{bmatrix}}_{\boldsymbol f(\boldsymbol x,t)}\Bigg) \label{eq:ReachingDynamicsAttitude}
\end{split}
\end{equation}

To prevent abrupt initial transients and excessive control effort, a smooth rest-to-rest slew trajectory is generated using a fifth-order polynomial time-scaling profile $\rho(\tau)$ for a specified slewing time $t_s$, where $\tau=t/t_s$ denotes the normalized time. The quintic profile imposes zero angular velocity and acceleration at both endpoints, providing a smooth spacecraft slew trajectory \cite{caubet2013optimal}. This profile is determined based on the total principal angle $\Delta \theta$ and the principal unit axis $\hat{\boldsymbol{e}}$, which are calculated from the initial error quaternion $\boldsymbol{q}^e(t_0) = [q_0^e(t_0), \boldsymbol{q}_v^e(t_0)^T]^T$ as follows:
\begin{equation}
\Delta \theta = 2 \cos^{-1} \big( q_0^e(t_0) \big), \quad \hat{\boldsymbol{e}} = \frac{\boldsymbol{q}_v^e(t_0)}{|\boldsymbol{q}_v^e(t_0)|}
\end{equation}
Using these parameters, the reference trajectory for the spacecraft maneuver is formulated as:
\begin{equation}
\begin{split}\rho(\tau) &=\begin{cases}10\tau^3 - 15\tau^4 + 6\tau^5, & \tau \in [0,1] \\1, & \tau > 1\end{cases}\\
\theta(t) &= \rho\left(\frac{t}{t_s}\right) \Delta \theta \\
\boldsymbol{\omega}^d(t) &= \dot{\theta}(t)\hat{\boldsymbol{e}}, \quad\boldsymbol{q}^d(t) =\begin{bmatrix}\cos(\theta(t)/2) \\\hat{\boldsymbol{e}} \sin(\theta(t)/2)\end{bmatrix}\end{split} \label{eq:Guidance}\end{equation}
The resulting slewing guidance profiles for the attitude and angular velocity are illustrated in Figure \ref{fig:NominalTrajectory}.

\begin{figure}
    \centering
    \includegraphics[width=1\linewidth]{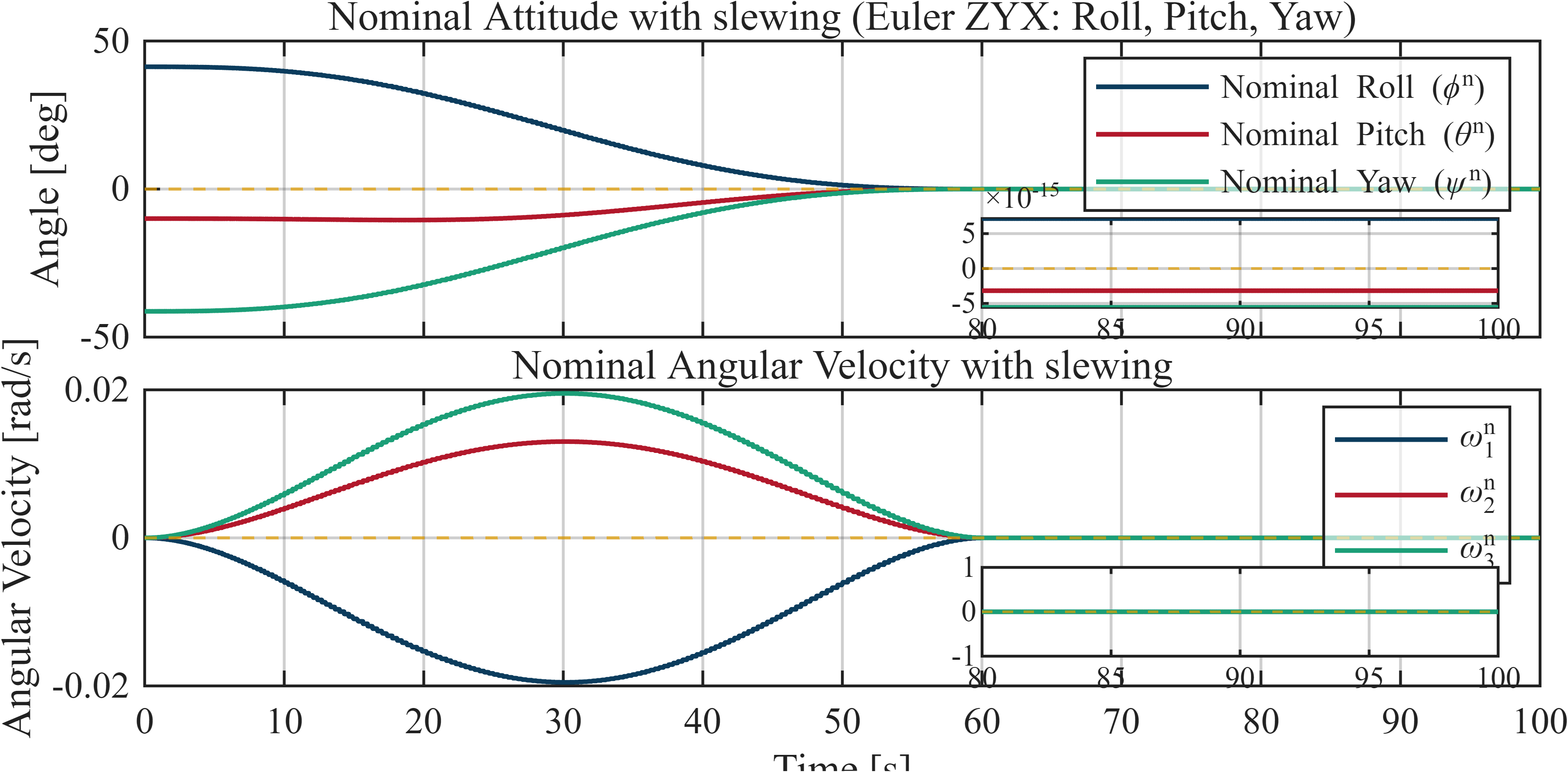}
    \caption{Reference trajectory for the spacecraft slew generated by \eqref{eq:Guidance}: (top) commanded attitude expressed in Euler ZYX angles and (bottom) commanded angular velocity.}
    \label{fig:NominalTrajectory}
\end{figure}
The simulation and controller parameters are set as Table \ref{tab:system_parameters} and Table \ref{tab:controller_parameters}, respectively. There are only five parameters for controller design and timestep adjustment as we can see in Table \ref{tab:controller_parameters}:

\begin{table}[t]
\caption{System Parameters}
\label{tab:system_parameters}
\centering
\renewcommand{\arraystretch}{1.15}
\begin{tabular}{@{}ll@{}}
\toprule
Parameter & Value \\
\midrule
Initial attitude &
$\mathbf{q}^{b}_0=[0.8832,\;0.3,\;-0.2,\;-0.3]^{T}$ \\
Initial angular velocity &
$\boldsymbol{\omega}^{b}_0=[0,\;0,\;0]^{T}\; \mathrm{rad/s}$ \\
Desired attitude &
$\mathbf{q}^{d}=[1,\;0,\;0,\;0]^{T}$ \\
Desired angular velocity &
$\boldsymbol{\omega}^{d}=[0,\;0,\;0]^{T}\; \mathrm{rad/s}$ \\
Slewing/Simulation time &
$t_s = 60\;\mathrm{s}, \quad t \in [0,\;100]\;\mathrm{s}$ \\
Nominal moment of inertia &
$\mathbf{J}_0=\begin{bmatrix}
20 & 1.2 & 0.9 \\
1.2 & 17 & 1.4 \\
0.9 & 1.4 & 15
\end{bmatrix}\mathrm{kg \cdot m^2}$ \\

Inertia uncertainty &
$ \Delta \mathbf{J}=\mathrm{diag}
\begin{bmatrix}
2 & 2 & 1.5
\end{bmatrix}\mathrm{kg \cdot m^2}$ \\

Gravitational parameter &
$\mu=3.986 \times 10^{14}\;\mathrm{m^{3}/s^{2}}$ \\

Orbital radius&
$\|\mathbf{r}_{I}\|=6878.137\;\mathrm{km}$ \\
\bottomrule
\end{tabular}
\end{table}

\begin{table}[t]
\caption{Controller Parameters}
\label{tab:controller_parameters}
\centering
\renewcommand{\arraystretch}{1.15}
\begin{tabular}{@{}ll@{}}
\toprule
Parameter & Value \\
\midrule
Control gain &
$\boldsymbol{\beta} = \mathrm{diag}(3.8,\;3.8,\;3.8)$ \\
Decay factor &
$\gamma = 0.5$ \\
Tolerance &
$\mathrm{tol} = 5 \times 10^{-5}$ \\
Safety factor &
$\mathrm{SF} = 0.98$ \\
Timestep update gain &
$k = 24$ \\
\bottomrule
\end{tabular}
\end{table}

\subsection{Closed-Loop Performance and Theoretical Validation}
\subsubsection{Closed-Loop Performance} Fig. \ref{fig:BodyResult} shows that the spacecraft accurately follows the commanded slew despite inertia uncertainty, environmental disturbances, and time-varying actuator degradation. After the convergence, the attitude and angular-velocity errors are confined to approximately $\pm 2.5 \times 10^{-3}\text{ deg}$ and $\pm 1.5 \times 10^{-5}\text{ rad/s}$. As shown in Fig. \ref{fig:ControlTorque}, the equivalent and robust control components generate a bounded total control torque of approximately $\pm0.04 \textrm{ N} \cdot \textrm{m}$. The proposed scheduling law simultaneously adapts the control interval to the local TDE level, yielding a mean timestep of $0.35297$ s while maintaining the above tracking performance. The resulting timestep behavior is examined in Fig. \ref{fig:TDEPerformance}

\begin{figure}
    \centering
    \includegraphics[width=1\linewidth]{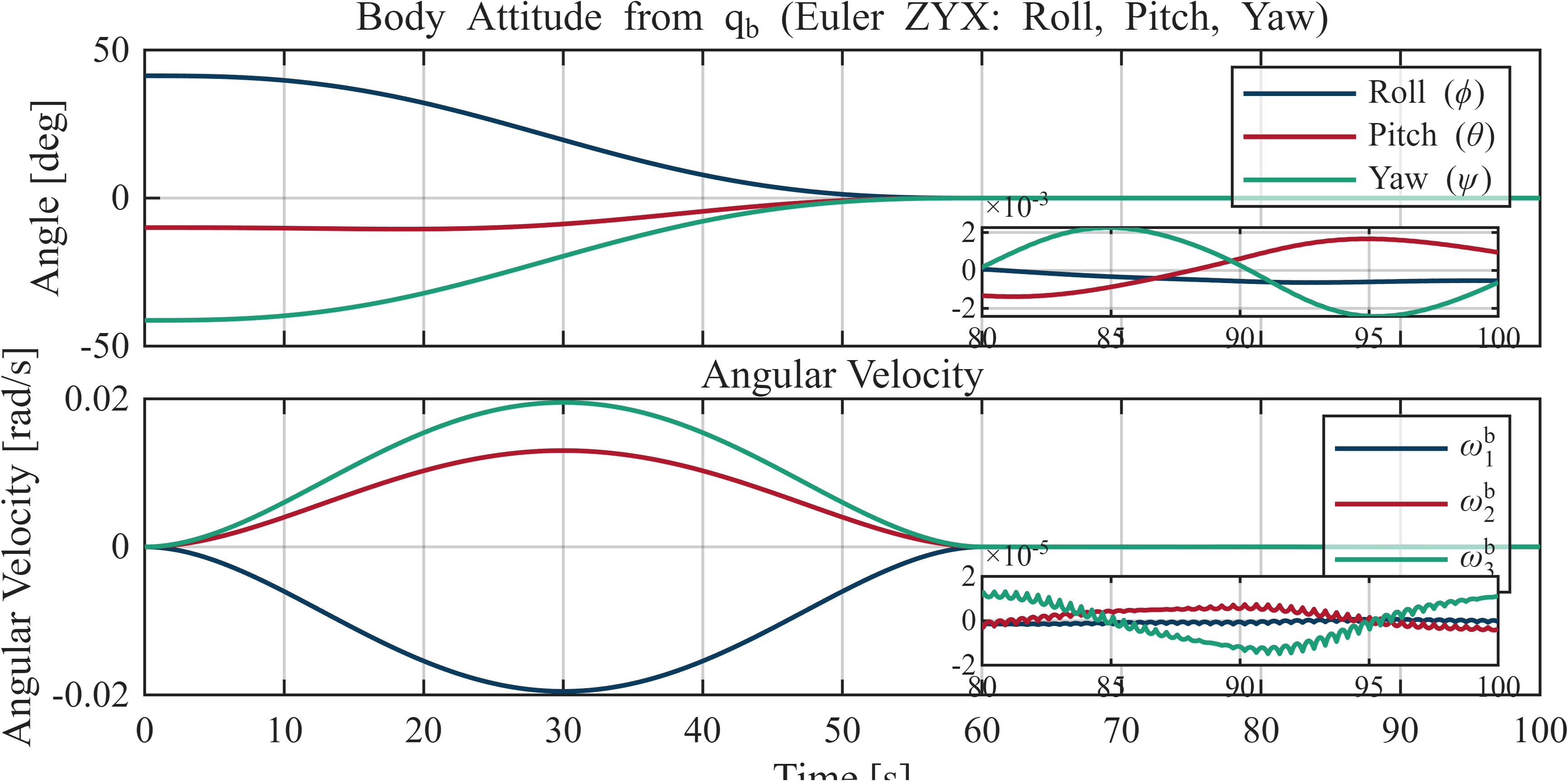}
    \caption{Closed-loop spacecraft response under VTDC: (top) body attitude expressed in Euler ZYX angles and (bottom) body angular velocity. Insets show the post-slew steady-state response.}
    \label{fig:BodyResult}
\end{figure}

\begin{figure}
    \centering
    \includegraphics[width=1\linewidth]{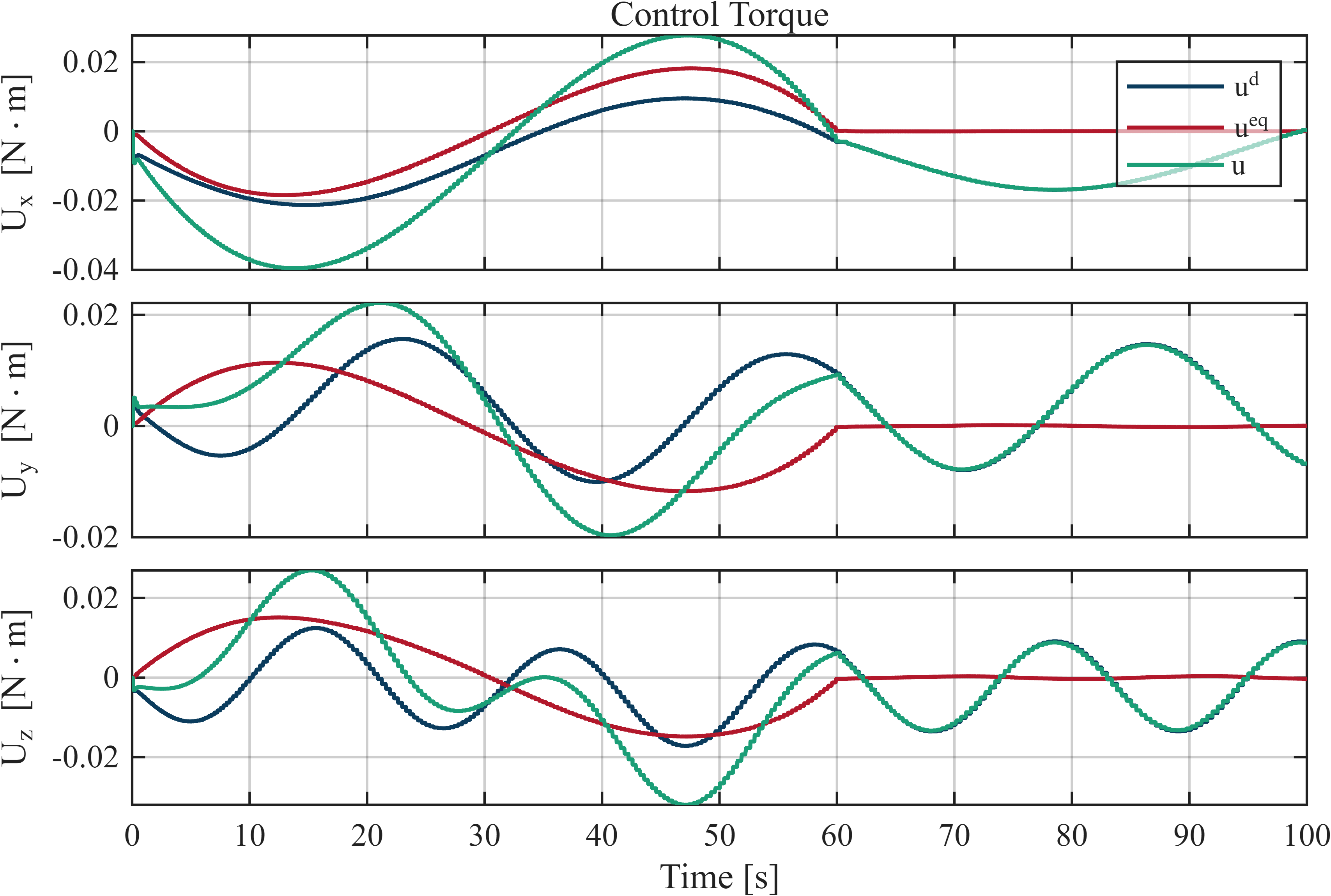}
    \caption{Control torque decomposition under VTDC for each body axis, showing the equivalent control $\boldsymbol{u}^{\text{eq}}$, robust time-delay compensation $\boldsymbol{u}^{\text{d}}$, and total control input $\boldsymbol{u}$.}
    \label{fig:ControlTorque}
\end{figure}

\subsubsection{TDE Characterization and Estimation} 
Fig.~\ref{fig:TDEValidation} assesses the practical System TDE estimator in \eqref{eq:SystemTDEEstimation} using the actual model and disturbance information available in simulation. The estimated model, disturbance, and total TDEs closely follow their actual counterparts, while the estimation margins remain strictly positive and on the order of $10^{-6}$ to $10^{-5}$ throughout the maneuver. This supports the use of the proposed estimator as a conservative scheduling surrogate under the considered operating conditions.
 
\begin{figure}
    \centering
    \includegraphics[width=1\linewidth]{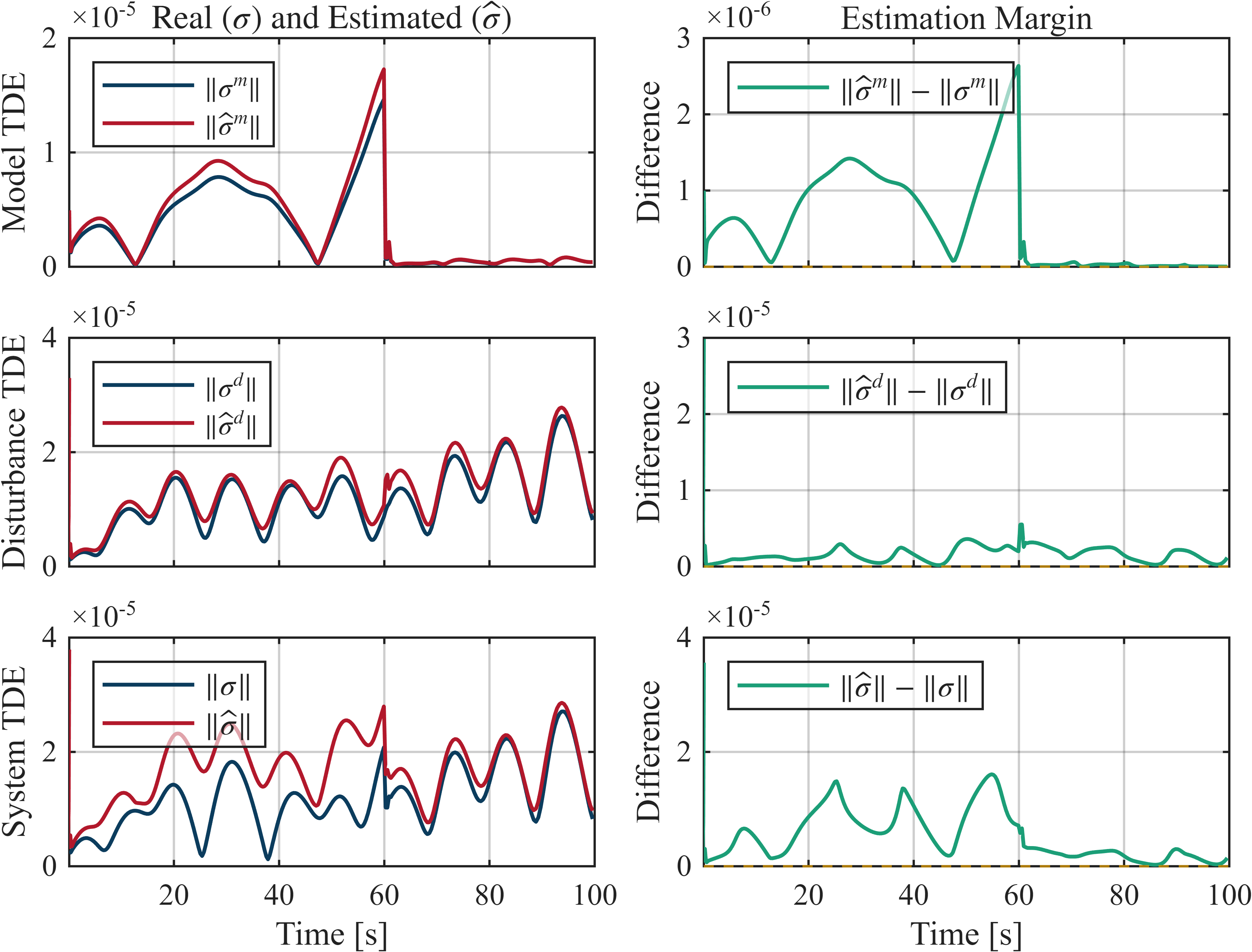}
    \caption{Assessment of the practical System-TDE estimator in \eqref{eq:SystemTDEEstimation}: (left) actual and estimated magnitudes of the model, disturbance, and total System TDEs; (right) corresponding estimation margins, showing conservative estimation under the considered operating conditions.}
    \label{fig:TDEValidation}
\end{figure}

The timestep dependence predicted by Lemma~\ref{lemma:BoundnessOfError} was further examined using fixed-step simulations with $h=10^{-4}, 10^{-3}, 10^{-2}, 10^{-1}$ s. Fig.~\ref{fig:TDE_IAE_perTimestep} shows that the maximum System TDE increases by approximately two orders of magnitude for each tenfold increase in $h$, consistent with the $\mathcal{O}(h^2)$ bound in \eqref{eq:BoundnessOfError}. The quaternion-error Integral Absolute Error (IAE) exhibits the same empirical trend, illustrating the associated degradation in tracking accuracy.

\begin{figure}
    \centering
    \includegraphics[width=1\linewidth]{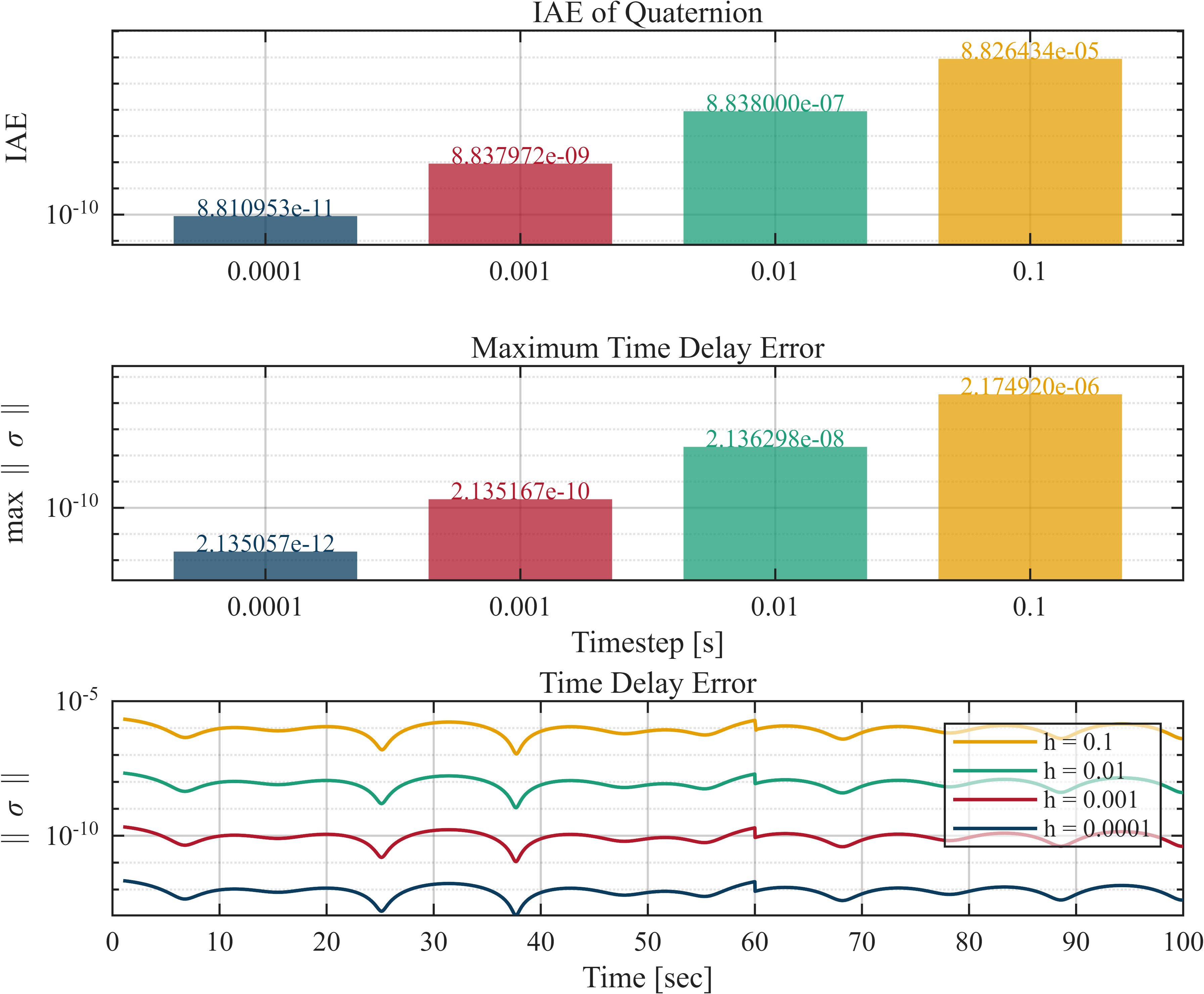}
    \caption{Fixed-step assessment of the timestep dependence for $h=10^{-4}$, $10^{-3}$, $10^{-2}$, and $10^{-1}$~s: (top) quaternion-error IAE, (middle) maximum System TDE, and (bottom) System-TDE histories. The maximum TDE exhibits the $\mathcal{O}(h^{2})$ scaling predicted by Lemma \ref{lemma:BoundnessOfError}, while the quaternion-error IAE shows the same empirical trend.}
    \label{fig:TDE_IAE_perTimestep}
\end{figure}

\subsubsection{Validation of Timestep and Stability Properties} 
The initial-timestep dependence in Corollary \ref{cor:initial_independence} was assessed through 100 Monte Carlo simulations with $h_0\in[0.1, 1]$ s. As shown in Fig. \ref{fig:InitialConditionIndep}, the initially broad timestep distribution rapidly contracts, and its standard deviation decreases from approximately $0.25$ s to $10^{-3}$ s, confirming strong attenuation of the initial-timestep influence.

\begin{figure}
    \centering
    \includegraphics[width=1\linewidth]{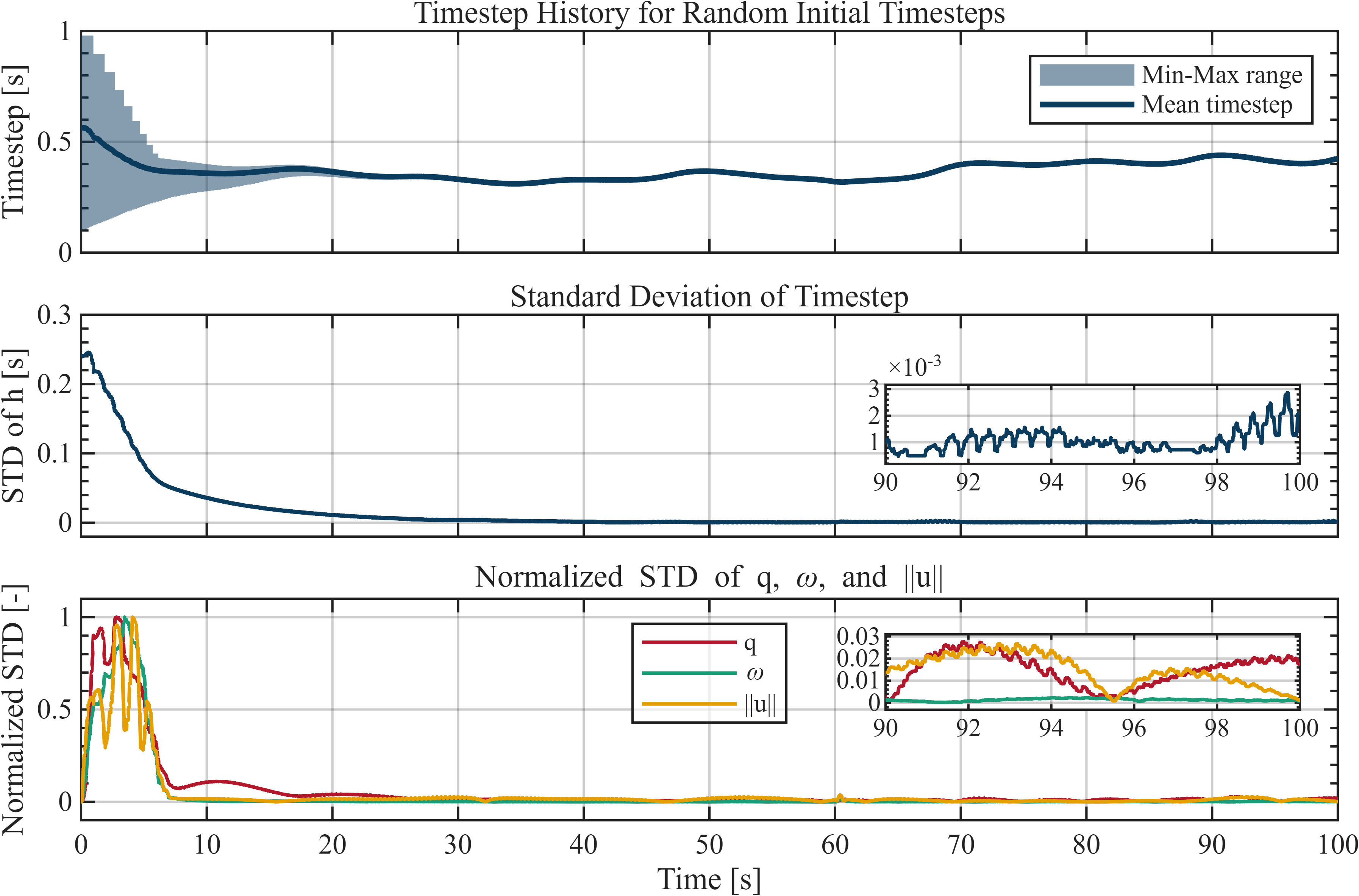}
    \caption{Monte Carlo assessment of initial-timestep independence for $N_{\mathrm{MC}}=100$ and $h_0\in[0.1, 1]$~s:(a) mean timestep with min–max envelope, (b) timestep standard deviation, and (c) normalized standard deviations of $q$, $\omega$, and $\|u\|$, showing attenuation of the initial timestep-induced dispersion.}
    \label{fig:InitialConditionIndep}
\end{figure}

For the post-transient interval $t\geq15$ s, the observed TDE coefficients from $\boldsymbol{f}_n$ and $\boldsymbol{d}_n$ give $\hat{\Psi} =2.7858\times10^{-4}$ and $\hat{\psi}_{\min}=4.9566\times10^{-5}$. Corollaries \ref{cor:NoZeno} and \ref{cor:StepRatioBound} therefore predict

\begin{equation}
\begin{split}
    h_{\min} = 0.26088~\mathrm{s},\quad h_{\max}=0.61847~\mathrm{s}\\
    r_{\min} = 0.96467, \qquad r_{\max} = 1.03662.
\end{split}
\end{equation}

Fig. \ref{fig:TDEPerformance} shows that the realized post-transient timestep, $h_n\in[0.317,0.450]$ s, and the corresponding step ratios remain within these predicted ranges while the System TDE remains below the prescribed tolerance.

\begin{figure}
    \centering
    \includegraphics[width=1\linewidth]{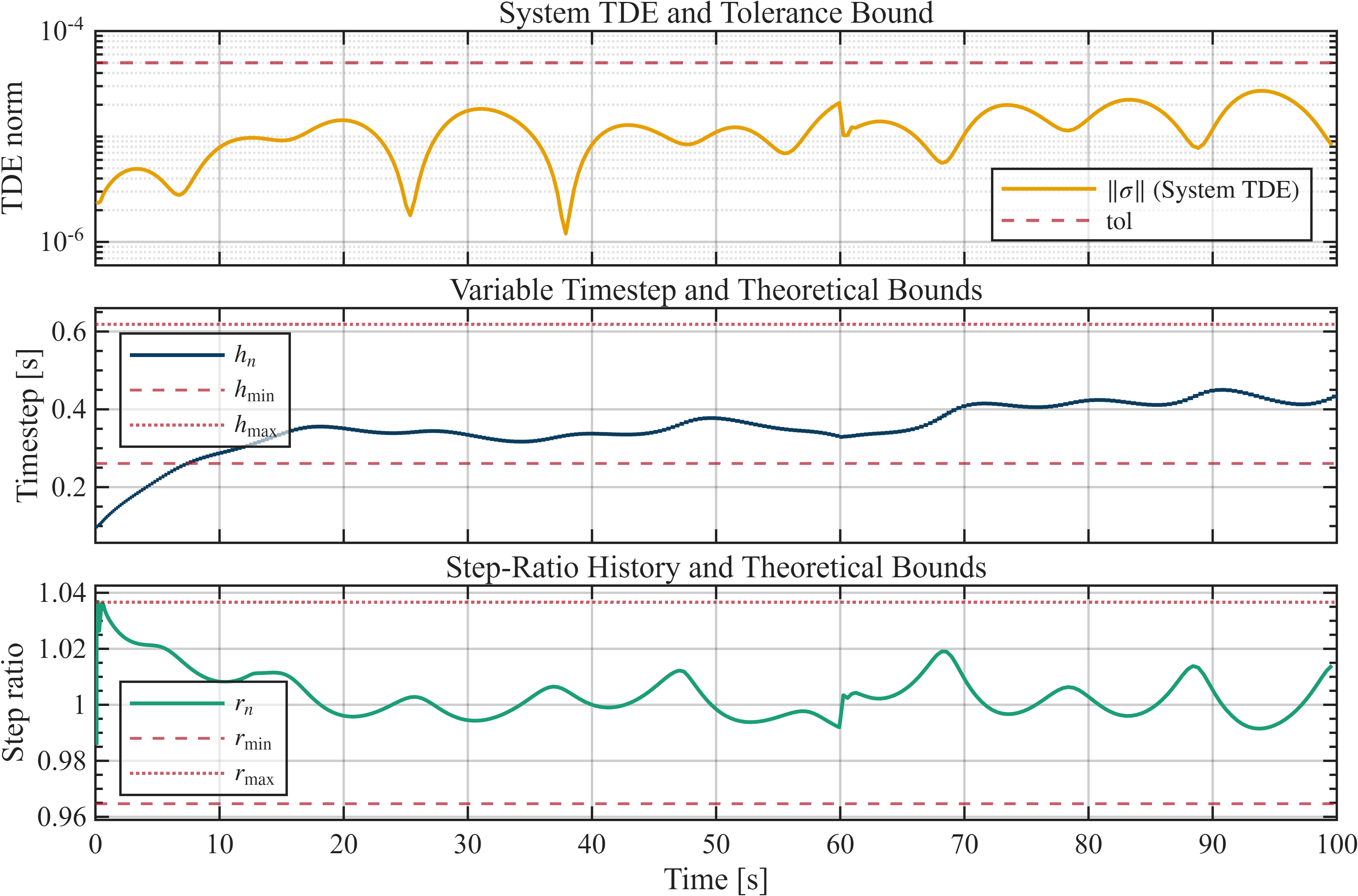}
    \caption{Consistency of the realized TDE, timestep, and step ratio with the derived scheduling properties: (top) System TDE and prescribed tolerance, (middle) variable timestep $h_n$, and its post-transient bounds $h_{\min}$ and $h_{\max}$, and (bottom) step ratio $r_n$ and its bounds $r_{\min}$ and $r_{\max}$. The theoretical bounds are evaluated over post-transient interval $t\geq15$~s}.
    \label{fig:TDEPerformance}
\end{figure}

Using the theoretical step-ratio bounds gives $\bar{\gamma}=0.51766<1$ satisfying the stability condition of Theorem 2. Neglecting the higher-order $\mathcal{O}(h^2)$ residual terms, the leading-order general and steady-state convergence radii are:
\begin{equation}
    \frac{r_{\max}}{1-\bar{\gamma}}\cdot\mathrm{tol} =  1.0746\times10^{-4},\qquad
    \frac{\mathrm{tol}}{1-|\gamma|} =    1.0000\times10^{-4}.
\end{equation}

Fig. \ref{fig:BoundedSV} shows that the sliding-variable norm remains below even these leading-order reference radii throughout the simulation. Since the complete UUB bounds in Theorem \ref{Theo:UUB} and Corollary \ref{cor:SteadyStateBound} additionally contain bounded $\mathcal{O}(h^2)$ residual terms, the observed response is consistent with the derived stability result and indicates that the higher-order contribution is small under the considered operating conditions.

\begin{figure}
    \centering
    \includegraphics[width=1\linewidth]{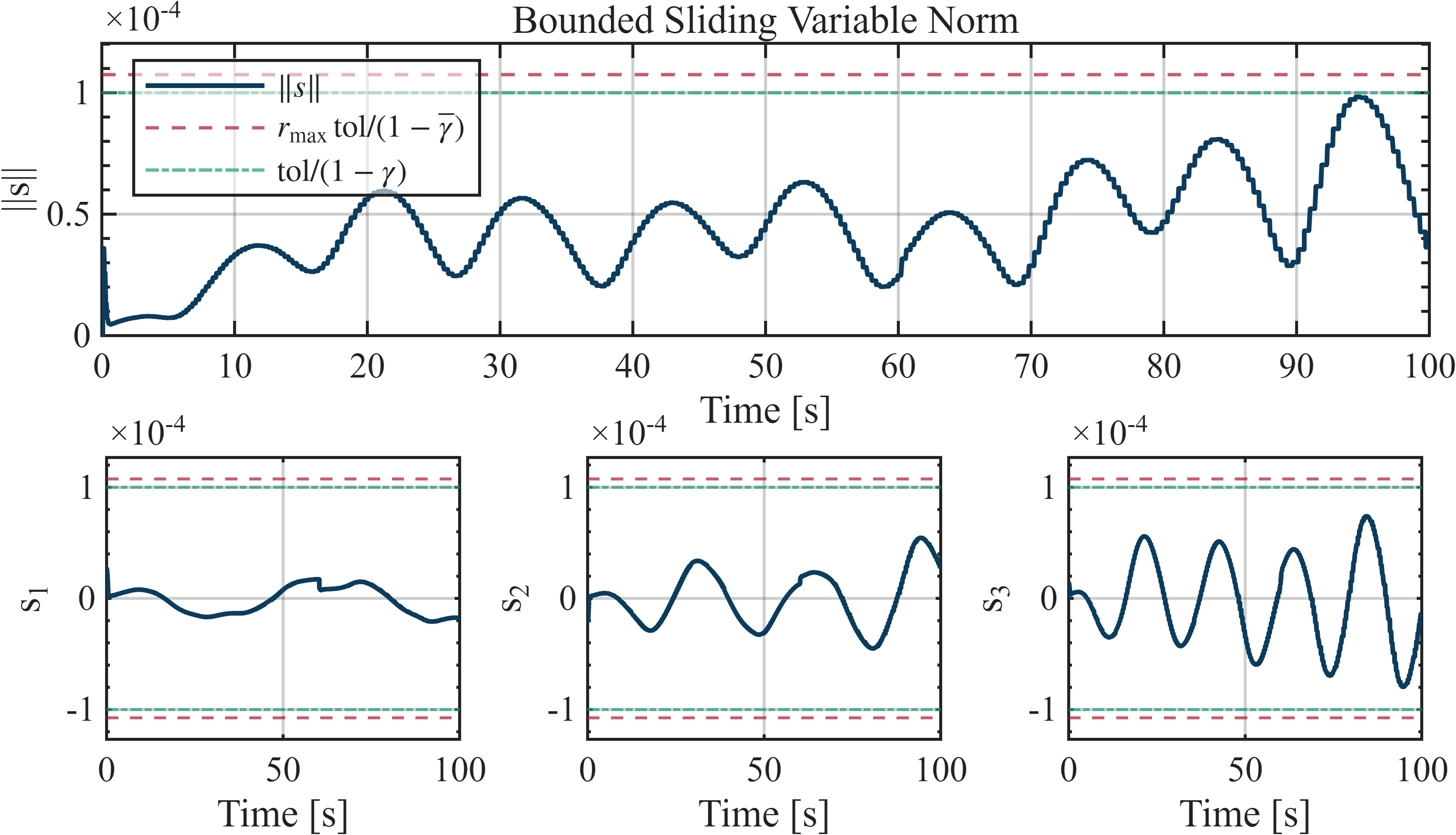}
    \caption{Sliding-variable response relative to the leading-order convergence radii: (top) $\|\boldsymbol{s}\|$ with the general and steady-state leading-order reference radii, and (bottom) individual sliding-variable components $s_1$, $s_2$ and $s_3$. The reference radii omit the bounded $\mathcal{O}(h^2)$ residual terms in Theorem \ref{Theo:UUB} and Corollary~\ref{cor:SteadyStateBound}}.
    \label{fig:BoundedSV}
\end{figure}


\subsection{Comparative Evaluation of Control and Scheduling Performance}
\begin{table*}[t]
\centering
\caption{Comparison of Tracking Error and Control Effort among STC, VTDC, and FTDC}
\label{tab:performance_comparison}
\scriptsize
\setlength{\tabcolsep}{4pt}
\renewcommand{\arraystretch}{1.15}
\begin{tabular}{c|ccc|ccc|ccc}
\hline
\multirow{2}{*}{Method}
& \multicolumn{3}{c|}{Attitude Error}
& \multicolumn{3}{c|}{Angular Velocity Error}
& \multicolumn{3}{c}{Control Effort} \\
\cline{2-10}
& IAE & ISE & ITAE
& IAE & ISE & ITAE
& Max Torque & IAC & ISC \\
\hline
STC
& $1.1836\times10^{-3}$ & $2.2856\times10^{-8}$ &$6.5074\times10^{-2}$
& $1.0556\times10^{-3}$ & $1.9343\times10^{-8}$ & $5.4860\times10^{-2}$
& $5.0146\times10^{-2}$ & $2.3852$ & $7.3162\times10^{-2}$ \\
VTDC
& $1.1102\times10^{-3}$ & $1.2368\times10^{-8}$ & $6.5405\times10^{-2}$
& $6.5684\times10^{-4}$ & $5.0111\times10^{-9}$ & $3.7210\times10^{-2}$
& $5.0176\times10^{-2}$ & $2.3822$ & $7.3115\times10^{-2}$ \\
FTDC
& $1.0218\times10^{-3}$ & $9.3963\times10^{-9}$ & $5.1869\times10^{-2}$
& $6.4220\times10^{-4}$ & $5.9835\times10^{-9}$ & $2.9756\times10^{-2}$
& $5.0169\times10^{-2}$ & $2.3822$ & $7.3149\times10^{-2}$ \\
\hline
\end{tabular}
\end{table*}

\begin{table*}[t]
\centering
\caption{Comparison of Timestep Behavior and Timestep Computation Cost between STC and VTDC}
\label{tab:timestep_computation_comparison}
\scriptsize
\setlength{\tabcolsep}{4pt}
\renewcommand{\arraystretch}{1.15}
\begin{tabular}{c|cccc|ccccc}
\hline
\multirow{2}{*}{Method}
& \multicolumn{4}{c|}{Timestep Behavior}
& \multicolumn{4}{c}{Timestep Computation Cost} \\
\cline{2-9}
& Mean $h$ [s] & Max $h$ [s] & Min $h$ [s] & Iterations
& Mean [s] & Max [s] & Min [s] & Total [s]  \\
\hline
STC
& $0.33165$ & $0.886$ & $0.100$ & $343$
& $7.6163\times10^{-2}$ & $3.1444\times10^{-1}$ & $4.5463\times10^{-2}$ & $26.048$  \\
VTDC
& $0.35297$ & $0.450$ & $0.0985$ & $299$
& $4.2032\times10^{-5}$ & $3.7600\times10^{-4}$ & $3.0700\times10^{-5}$ & $1.2525\times10^{-2}$  \\
\hline
\end{tabular}
\end{table*}

\begin{figure}
    \centering
    \includegraphics[width=1\linewidth]{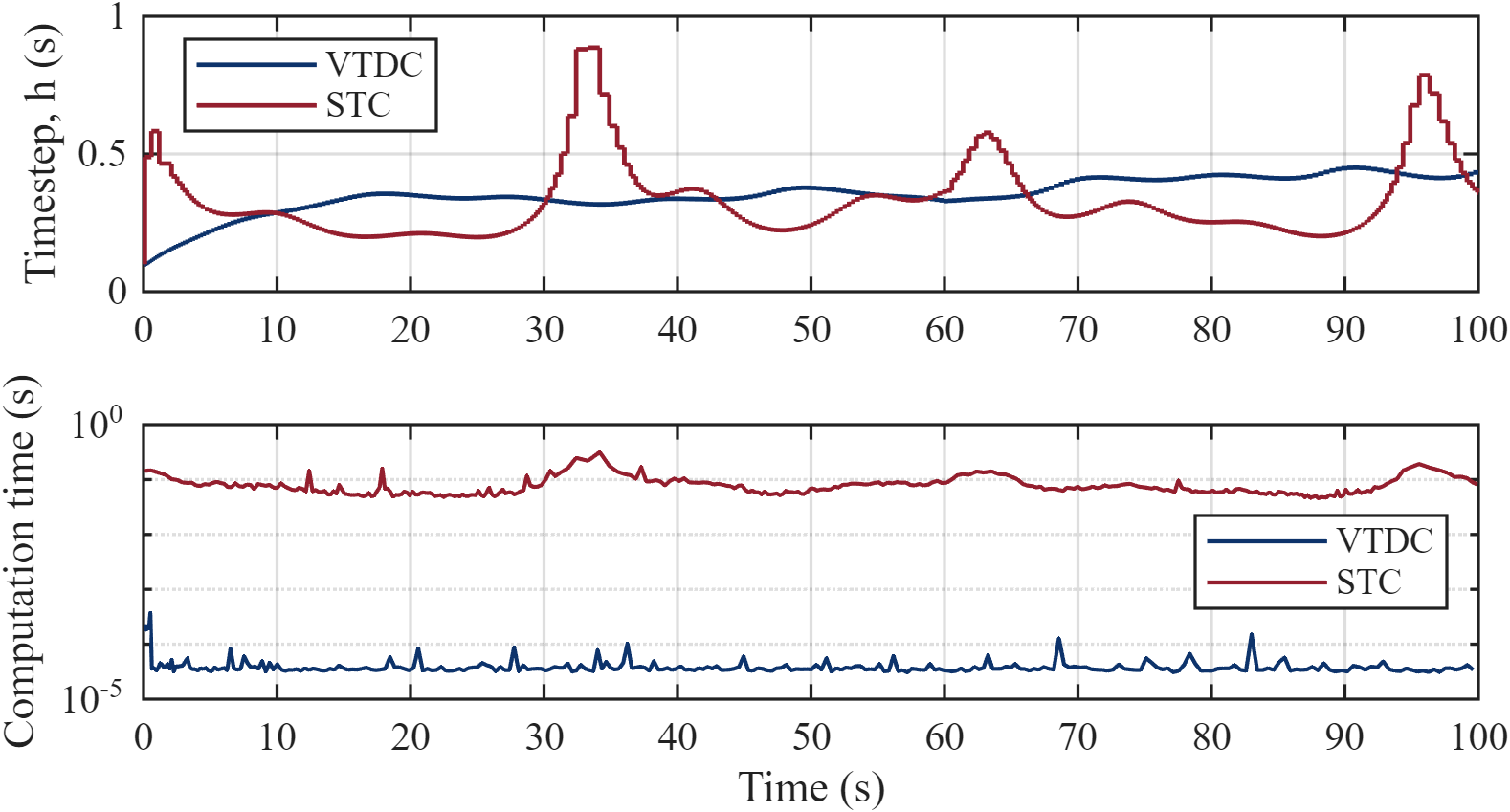}
    \caption{Comparison of STC and VTDC scheduling behavior: (top) control-interval histories and (bottom) computation time required to determine each subsequent control interval.}
    \label{fig:stc_vtdc_timestep_computation}
\end{figure}

The proposed VTDC was compared with a Lyapunov-based self-triggered controller (STC) and fixed-step TDC (FTDC). All methods were evaluated under identical spacecraft dynamics, initial conditions, reference trajectories, uncertainties, disturbances, and actuator degradation. The same TDC-based control law in \eqref{eq:TotalControl} and zero-order hold implementation were used; therefore, the primary difference among the methods is the determination of the control-update interval.

VTDC determines the next timestep directly from the estimated System TDE using \eqref{eq:UpdateRule}. For comparison, a Lyapunov-based STC benchmark was formulated following the self-triggered implementation principle in \cite{heemels2012introduction}. At each update instant $t_k$, the nonlinear spacecraft dynamics are numerically propagated under the held control input, and the future sliding variable $\boldsymbol{s}(t_k+\tau)$ is evaluated over candidate intervals. The largest admissible interval is selected such that
\begin{equation}
V(\boldsymbol{s}(t_k+\tau))\leq V(\boldsymbol{s}(t_k)) e^{-\delta \tau}+\|\hat w\|.
\end{equation}

The STC parameters were set to $\delta=5$ and $\|\hat w\|=6\times10^{-8}$, with the candidate interval searched numerically in increments of $10^{-4}$~s. FTDC used the VTDC mean timestep,  $h=0.35297$~s, providing a fixed-step reference with approximately the same average control-update rate.

The performance comparison was conducted in terms of tracking accuracy, control effort, update efficiency, and scheduling cost. Tracking performance was evaluated using the Integral Absolute Error (IAE), Integral Squared Error (ISE), and Integral Time-weighted Absolute Error (ITAE), while control effort was assessed using the maximum torque, Integral Absolute Control (IAC), and Integral Squared Control (ISC). IAE represents the accumulated tracking error, ISE emphasizes large deviations, and ITAE penalizes persistent errors, whereas IAC and ISC quantify the overall and squared control usage, respectively.
\begin{equation}
\begin{split}
&\mathrm{IAE} =\int |e| dt,\quad\mathrm{ISE} =\int e^2 dt,\quad \mathrm{ITAE} =\int t\cdot|e| dt\\
&u_{\max} =\max_i{\|\boldsymbol{u}^c_i\|},\;
\mathrm{IAC} =\sum_i{\|\boldsymbol{u}^c_i\|}h_i,\; \mathrm{ISC} =\sum_i{\|\boldsymbol{u}^c_i\|^2}h_i
\end{split}
\end{equation}
Table \ref{tab:performance_comparison} compares tracking error and control effort using IAE, ISE, and ITAE for the attitude and angular-velocity errors, together with maximum torque, IAC, and ISC. All three methods produced nearly identical control-effort metrics, as expected from their common TDC-based feedback structure. Compared with STC, VTDC reduced the attitude IAE and ISE by approximately $6.2\%$ and $45.9\%$, respectively, and reduced the angular-velocity IAE, ISE, and ITAE by $37.8\%$, $74.1\%$, and $32.2\%$, respectively. Thus, the reduced update frequency of VTDC did not compromise closed-loop performance.

FTDC achieved smaller values for several tracking metrics under the considered simulation condition. This result is reasonable because FTDC uses the mean timestep obtained from the VTDC simulation, $h=0.35297$~s, and therefore operates at a fixed interval identified a posteriori from the VTDC scheduling behavior without additional performance variation caused by timestep adaptation. The comparison thus does not imply that a fixed timestep is intrinsically superior; rather, it suggests that VTDC can also serve as a practical means of identifying an effective fixed control interval for a given operating condition. Unlike FTDC, however, VTDC retains the ability to adapt this interval online in response to changes in the local TDE and system variation.

For onboard spacecraft control, the principal advantage of VTDC lies in its scheduling efficiency. As summarized in Table~\ref{tab:timestep_computation_comparison}, VTDC increased the mean control interval from $0.33165$~s for STC to $0.35297$~s and reduced the number of control updates from $343$ to $299$, corresponding to a $6.4\%$ longer mean interval and $12.8\%$ fewer updates. More importantly, the accumulated timestep-computation time decreased from $26.048$~s for STC to $0.012525$~s for VTDC, representing an approximately $2.08\times10^3$-fold reduction in onboard scheduling overhead.

Fig. \ref{fig:stc_vtdc_timestep_computation} illustrates this difference directly. STC requires repeated integration to determine each triggering instant, whereas VTDC evaluates the next timestep through the closed-form TDE-based update rule. Consequently, VTDC retains proactive aperiodic scheduling while avoiding the computational burden associated with iterative future-state prediction.

\section{Conclusion}
This paper developed Variable-step Time Delay Control (VTDC), a proactive aperiodic robust control framework integrating TDC-based uncertainty compensation and control-update scheduling. By establishing the quadratic dependence of the System TDE on the realized timestep, a closed-form feedback law was derived to adapt the control interval according to local model and disturbance variations.

The variable-step closed loop was shown to admit asymptotically bounded timesteps and step ratios while attenuating the influence of the initial timestep and excluding Zeno behavior. These properties were incorporated into the stability analysis to establish uniform ultimate boundedness of the sliding variable and characterize its steady-state convergence radius.

A practical System-TDE surrogate constructed from equivalent- and robust-control variations enables algebraic timestep scheduling without direct disturbance measurement, continuous trigger monitoring, future-state prediction, or iterative trigger-time search.

Spacecraft attitude simulations under inertia uncertainty, disturbance torques, and actuator degradation demonstrated accurate tracking and adaptive scheduling. Compared with Lyapunov-based STC, VTDC reduced control updates from $343$ to $299$ and timestep-scheduling computation from $26.048$~s to $0.012525$~s. The fixed-step comparison further indicates that the timestep profile can also serve as a reference for selecting an effective constant control interval when adaptive scheduling is unnecessary.

\appendices
\section{Proof of Corollary \ref{cor:initial_independence}: Initial Timestep Independence}\label{app:initial_independence}
Substituting the definition of $\|\boldsymbol{\sigma}_n\| = \psi_n h_n^2$ in \eqref{eq:DiscreteBound1} into the update rule in \eqref{eq:UpdateRule} yields the following recurrence relation for time step $h$. 
\begin{equation}
h_{n+1} = \text{SF} \cdot \Bigg( \frac{\text{tol}}{\psi _nh_n^2} \Bigg)^{\frac{1}{2k}} h_n = \text{SF} \cdot \Big( \frac{\text{tol}}{\psi_n} \Big)^{\frac{1}{2k}} h_n^{1-\frac{1}{k}} \label{eq:hrecurr}
\end{equation}
By substituting $\lambda = 1 - \frac{1}{k}\in(0, 1)$ and taking the natural logarithm of both sides, we obtain the following non-homogeneous linear difference equation for $\ln h_n$:
\begin{equation}
\ln h_{n+1} = \lambda\ln h_n + \underbrace{\ln{\Bigg(\text{SF} \cdot \Big( \frac{\text{tol}}{\psi_n} \Big)^{\frac{1}{2k}}\Bigg)}}_{C_n} \label{eq:hRecurrence}
\end{equation}

Here, $C_n$ represents the time-varying input driven by the actual system state $\psi_n$. The general solution for the first-order difference equation $x_{n+1} = \lambda x_n + C_n$ is $x_{n} = \lambda^n x_0 + \Sigma ^{n-1}_{i=0}\lambda^{n-1-i}C_i$. Using this, the general solution for $\ln{h_n}$ in \eqref{eq:hRecurrence} is given as follows:

\begin{equation}
\ln h_n = \lambda^n \ln h_0 + \sum_{i=0}^{n-1} \lambda^{n-1-i} C_i \label{eq:h_lim}
\end{equation}

Since $0<\lambda< 1$. As the number of iterations $n \to \infty$, the term involving the initial timestep, $\lambda^n \ln h_0$, decays exponentially to zero.
\begin{equation}\label{eq:hConverge}
\lim_{n \to \infty}\ln h_n = \underbrace{\lim_{n \to \infty}\lambda^n \ln h_0}_{=0} + \lim_{n \to \infty}\sum_{i=0}^{n-1} \lambda^{n-1-i} C_i\\
\end{equation}
This mathematical property demonstrates that the transient influence of the initial heuristic setting $h_0$ vanishes over time. This shows that the asymptotic timestep sequence is governed by the accumulated coefficient history ($C_i$), while its dependence on the initial heuristic timestep $h_0$ vanishes.

\section{Proof of Corollary \ref{cor:NoZeno}: Timestep Boundedness and Exclusion of Zeno Behavior}\label{app:timestep_properties}
By Assumption \ref{assumption:Lipschitz} and Assumption \ref{assumption:PersistentTDE}, the condition $0<\psi_{\min}\leq\psi_i\leq\Psi$ holds for all steps $i$. Therefore, $C_i$ from \eqref{eq:hRecurrence} is uniformly bounded as 
\begin{equation} 
C_{\min} \leq C_i=\ln{\Bigg(\text{SF} \cdot \Big( \frac{\text{tol}}{\psi_i} \Big)^{\frac{1}{2k}}\Bigg)} \leq C_{\max}, 
\end{equation} 
with 
\begin{equation} C_{\min} = \ln\Bigg[\mathrm{SF} \Big( \frac{\mathrm{tol}}{\Psi}\Big)^{\frac{1}{2k}} \Bigg], \quad C_{\max} = \ln\Bigg[\mathrm{SF} \Big( \frac{\mathrm{tol}}{\psi_{\min}}\Big)^{\frac{1}{2k}} \Bigg] \end{equation}

Note that the general solution for $\ln{h_n}$ is given as \eqref{eq:h_lim}. Using the bounds on $C_i$ gives \begin{equation} 
\begin{split}
    \ln h_n &\geq \lambda^n \ln h_0 + C_{\min} \sum_{i=0}^{n-1} \lambda^{n-1-i}\\
    \ln h_n &\leq \lambda^n \ln h_0 + C_{\max} \sum_{i=0}^{n-1} \lambda^{n-1-i} 
\end{split}
\end{equation}

The summation term $\sum_{i=0}^{n-1} \lambda^{n-1-i}$ represents a finite geometric series. By reversing the index of summation (letting $j = n-1-i$), it can be rewritten and evaluated as follows:
\begin{equation}
\sum_{i=0}^{n-1} \lambda^{n-1-i} = \sum_{j=0}^{n-1} \lambda^j = \frac{1 - \lambda^n}{1 - \lambda}
\end{equation}

The term $\lambda^n$ exponentially decays to zero as $n \to \infty$ because of the condition $ \lambda\in(0,1)$. Consequently, the infinite geometric series converges to $\frac{1}{1-\lambda}=k$, which yields:
\begin{equation}
kC_{\min}\leq\liminf_{n\to\infty}\ln h_n\leq\limsup_{n\to\infty}\ln h_n \leq kC_{\max}
\end{equation}

Finally, taking the exponential of both sides yields 
\begin{equation} \liminf_{n\to\infty} h_n \geq \mathrm{SF}^k \sqrt{\frac{\mathrm{tol}}{\Psi}}, \quad \limsup_{n\to\infty} h_n \leq \mathrm{SF}^k \sqrt{\frac{\mathrm{tol}}{\psi_{\min}}}, 
\end{equation} 
which proves \eqref{eq:TimestepUltimateBound}.



\section*{Acknowledgment}

This work was supported by the National Research Foundation of Korea(NRF) grant funded by the Korea government(MSIT) (RS-2025-00560240).

\ifCLASSOPTIONcaptionsoff
  \newpage
\fi



\bibliographystyle{IEEEtran}
\bibliography{References/All}

@book{franklin1998digital,
  title={Digital control of dynamic systems},
  author={Franklin, Gene F and Powell, J David and Workman, Michael L and others},
  volume={3},
  year={1998},
  publisher={Addison-wesley Menlo Park, CA}
}

@article{wittenmark2002computer,
  title={Computer control: An overview},
  author={Wittenmark, Bj{\"o}rn and {\AA}rz{\'e}n, Karl-Erik and {\AA}str{\"o}m, Karl Johan},
  journal={IFAC Professional Brief},
  year={2002},
  publisher={International Federation of Automatic Control}
}

@article{fridman2014introduction,
  title={Introduction to time-delay systems},
  author={Fridman, Emilia},
  journal={Analysis and Control. Birkh{\"a}user},
  volume={75},
  pages={102},
  year={2014},
  publisher={Springer}
}

@article{fridman2005input,
  title={Input/output delay approach to robust sampled-data H$\infty$ control},
  author={Fridman, Emilia and Shaked, Uri and Suplin, Vladimir},
  journal={Systems \& Control Letters},
  volume={54},
  number={3},
  pages={271--282},
  year={2005},
  publisher={Elsevier}
}

@article{seuret2012novel,
  title={A novel stability analysis of linear systems under asynchronous samplings},
  author={Seuret, Alexandre},
  journal={Automatica},
  volume={48},
  number={1},
  pages={177--182},
  year={2012},
  publisher={Elsevier}
}

@inproceedings{heemels2012introduction,
  title={An introduction to event-triggered and self-triggered control},
  author={Heemels, Wilhelmus PMH and Johansson, Karl Henrik and Tabuada, Paulo},
  booktitle={2012 ieee 51st ieee conference on decision and control (cdc)},
  pages={3270--3285},
  year={2012},
  organization={IEEE}
}

@book{fortescue2011spacecraft,
  title={Spacecraft systems engineering},
  author={Fortescue, Peter and Swinerd, Graham and Stark, John},
  year={2011},
  publisher={John Wiley \& Sons}
}

@article{di2021event,
  title={Event-triggered sliding mode attitude coordinated control for spacecraft formation flying system with disturbances},
  author={Di, Fu-Qiang and Li, Ai-Jun and Guo, Yong and Xie, Cheng-Qing and Wang, Chang-Qing},
  journal={Acta Astronautica},
  volume={188},
  pages={121--129},
  year={2021},
  publisher={Elsevier}
}

@article{xie2024dynamic,
  title={Dynamic event-triggered and self-triggered fault-tolerant attitude control for multiple spacecraft systems with uncertainties and input saturation},
  author={Xie, Xiong and Sheng, Tao and Chen, Xiaoqian},
  journal={IEEE Transactions on Aerospace and Electronic Systems},
  volume={60},
  number={3},
  pages={2922--2933},
  year={2024},
  publisher={IEEE}
}

@article{de2020self,
  title={Self-triggered output-feedback control of LTI systems subject to disturbances and noise},
  author={de Albuquerque Gleizer, Gabriel and Mazo Jr, Manuel},
  journal={Automatica},
  volume={120},
  pages={109129},
  year={2020},
  publisher={Elsevier}
}

@article{anta2010sample,
  title={To sample or not to sample: Self-triggered control for nonlinear systems},
  author={Anta, Adolfo and Tabuada, Paulo},
  journal={IEEE Transactions on automatic control},
  volume={55},
  number={9},
  pages={2030--2042},
  year={2010},
  publisher={IEEE}
}

@article{delimpaltadakis2021region,
  title={Region-based self-triggered control for perturbed and uncertain nonlinear systems},
  author={Delimpaltadakis, Giannis and Mazo, Manuel},
  journal={IEEE Transactions on Control of Network Systems},
  volume={8},
  number={2},
  pages={757--768},
  year={2021},
  publisher={IEEE}
}

@book{utkin2013sliding,
  title={Sliding modes in control and optimization},
  author={Utkin, Vadim I},
  year={2013},
  publisher={Springer Science \& Business Media}
}

@article{gao1995discrete,
  title={Discrete-time variable structure control systems},
  author={Gao, Weibing and Wang, Yufu and Homaifa, Abdollah},
  journal={IEEE transactions on Industrial Electronics},
  volume={42},
  number={2},
  pages={117--122},
  year={1995},
  publisher={IEEE}
}

@article{cho2020autonomous,
  title={Autonomous precision control of satellite formation flight under unknown time-varying model and environmental uncertainties},
  author={Cho, Hancheol and Udwadia, Firdaus E and Wanichanon, Thanapat},
  journal={The Journal of the Astronautical Sciences},
  volume={67},
  number={4},
  pages={1470--1499},
  year={2020},
  publisher={Springer}
}

@article{youcef1990time,
    author = {Youcef-Toumi, Kamal and Ito, Osamu},
    title = {A Time Delay Controller for Systems With Unknown Dynamics},
    journal = {Journal of Dynamic Systems, Measurement, and Control},
    volume = {112},
    number = {1},
    pages = {133-142},
    year = {1990},
    month = {03},
    issn = {0022-0434},
    doi = {10.1115/1.2894130},
    eprint = {https://asmedigitalcollection.asme.org/dynamicsystems/article-pdf/112/1/133/5557301/133_1.pdf},
}

@article{drazenovic1969,
  author  = {Dra{\v z}enovi{\'c}, B.},
  title   = {The Invariance Conditions in Variable Structure Systems},
  journal = {Automatica},
  volume  = {5},
  number  = {3},
  pages   = {287--295},
  year    = {1969},
  doi     = {10.1016/0005-1098(69)90071-5}
}

@article{furuta1990,
title = {Sliding mode control of a discrete system},
journal = {Systems and Control Letters},
volume = {14},
number = {2},
pages = {145-152},
year = {1990},
issn = {0167-6911},
doi = {https://doi.org/10.1016/0167-6911(90)90030-X},
url = {https://www.sciencedirect.com/science/article/pii/016769119090030X},
author = {Katsuhisa Furuta}
}

@article{milosavljevic1985,
  author  = {Milosavljevi{\'c}, {\v C}edomir},
  title   = {General Conditions for the Existence of a Quasi-Sliding Mode on the Switching Hyperplane in Discrete Variable Structure Systems},
  journal = {Automation and Remote Control},
  volume  = {46},
  number  = {3},
  pages   = {307--314},
  year    = {1985}
}

@ARTICLE{utkin1977,
  author={Utkin, V.},
  journal={IEEE Transactions on Automatic Control}, 
  title={Variable structure systems with sliding modes}, 
  year={1977},
  volume={22},
  number={2},
  pages={212-222},
  doi={10.1109/TAC.1977.1101446}}

@book{olver1997asymptotics,
  author    = {Frank W. J. Olver},
  title     = {Asymptotics and Special Functions},
  publisher = {A K Peters},
  address   = {Wellesley, MA},
  year      = {1997},
  isbn      = {1-56881-069-5}
}

@book{butcher2016numerical,
  author    = {Butcher, J. C.},
  title     = {Numerical Methods for Ordinary Differential Equations},
  edition   = {3rd},
  publisher = {John Wiley \& Sons, Ltd},
  year      = {2016},
  doi       = {10.1002/9781119121534}
}

@book{khalil2002nonlinear,
  author    = {Khalil, Hassan K.},
  title     = {Nonlinear Systems},
  edition   = {3rd},
  publisher = {Prentice Hall},
  address   = {Upper Saddle River, NJ},
  year      = {2002},
  isbn      = {978-0-13-067389-3}
}

@article{soderlind2002automatic,
  author    = {S{\"o}derlind, Gustaf},
  title     = {Automatic Control and Adaptive Time-Stepping},
  journal   = {Numerical Algorithms},
  volume    = {31},
  number    = {1--4},
  pages     = {281--310},
  year      = {2002},
  publisher = {Springer},
  doi       = {10.1023/A:1021160023092}
}

@article{johansson1999zeno,
  author  = {Johansson, Karl H. and Egerstedt, Magnus and Lygeros, John and Sastry, Shankar},
  title   = {On the Regularization of Zeno Hybrid Automata},
  journal = {Systems \& Control Letters},
  volume  = {38},
  number  = {3},
  pages   = {141--150},
  year    = {1999},
  doi     = {10.1016/S0167-6911(99)00059-6}
}

@book{markley2014fundamentals,
  author    = {Markley, F. Landis and Crassidis, John L.},
  title     = {Fundamentals of Spacecraft Attitude Determination and Control},
  series    = {Space Technology Library},
  volume    = {33},
  publisher = {Springer},
  address   = {New York, NY},
  year      = {2014},
  doi       = {10.1007/978-1-4939-0802-8}
}

@inproceedings{lopez2021sliding,
  author    = {Lopez, Brett T. and Slotine, Jean-Jacques E.},
  title     = {Sliding on Manifolds: Geometric Attitude Control with Quaternions},
  booktitle = {2021 IEEE International Conference on Robotics and Automation (ICRA)},
  pages     = {11140--11146},
  year      = {2021},
  doi       = {10.1109/ICRA48506.2021.9561867}
}

@article{bhat2000unwinding,
  author  = {Bhat, Sanjay P. and Bernstein, Dennis S.},
  title   = {A Topological Obstruction to Continuous Global Stabilization of Rotational Motion and the Unwinding Phenomenon},
  journal = {Systems \& Control Letters},
  volume  = {39},
  number  = {1},
  pages   = {63--70},
  year    = {2000},
  doi     = {10.1016/S0167-6911(99)00090-0}
}

@inproceedings{caubet2013optimal,
  author    = {Caubet, Albert and Biggs, James D.},
  title     = {Optimal Attitude Motion Planner for Large Slew Maneuvers Using a Shape-Based Method},
  booktitle = {AIAA Guidance, Navigation, and Control Conference},
  year      = {2013},
  note      = {AIAA 2013-4719},
  doi       = {10.2514/6.2013-4719}
}
%




%

\end{document}